\newtheorem{theorem}{Theorem}
\newtheorem{claim}[theorem]{Claim}
\newtheorem{corollary}[theorem]{Corollary}
\newtheorem{lemma}[theorem]{Lemma}
\newtheorem{proposition}[theorem]{Proposition}
\theoremstyle{definition}
\newtheorem{definition}[theorem]{Definition}
\theoremstyle{remark}
\newtheorem{example}{Example}
\newcommand{\noprint}[1]{\relax}
\newcommand{\bra}[1]{\langle#1|}
\newcommand{\ket}[1]{|#1\rangle}
\newcommand{\ketpsi}{\ket{\psi}}
\newcommand{\C}{\mathbb C}
\newcommand{\CP}{\mathbb{CP}}
\newcommand{\dom}[1]{\text{Dom}(#1)}
\newcommand{\E}{\mathbb E}
\newcommand{\e}{\vec{e}}
\newcommand{\eps}{\varepsilon}
\newcommand{\FS}{\text{FS}}
\renewcommand{\H}{\mathcal H}
\renewcommand{\L}{\mathcal L}
\newcommand{\Norm}[1]{\Vert#1\Vert}
\renewcommand{\P}{\mathcal P}
\newcommand{\PL}{\text{PL}}
\newcommand{\Q}{Q}
\newcommand{\QL}{\Q L}
\newcommand{\QM}{\Q M}
\newcommand{\range}[1]{\text{Range}(#1)}
\newcommand{\s}{\mathcal S}
\newcommand{\Set}{\text{Set}}
\newcommand{\sq}[1]{\ensuremath{\langle#1\rangle}}
\newcommand{\SP}{\text{SP}}
\renewcommand{\th}{\ensuremath{{}^{\text{th}}}}
\newcommand{\Tube}{\text{Tube}\,}
\newcommand{\U}{\hat{U}}
\renewcommand{\u}{\vec{u}}
\renewcommand{\v}{\vec{v}}
\newcommand{\w}{\vec{w}}
\title[Quantum state transformations]
{Ancilla approximable\\ quantum state transformations}
\author{Andreas Blass}
\address{Mathematics Department\\
University of Michigan\\
Ann Arbor, MI 48109--1043, U.S.A.}
\email{ablass@umich.edu}
\author{Yuri Gurevich}
\address{Microsoft Research\\
Redmond, WA 98052, U.S.A.}
\email{gurevich@microsoft.com}
\begin{document}

\begin{abstract}
We consider the transformations of quantum states obtainable by a process of the following sort.  Combine the given input state with a specially prepared initial state of an auxiliary system.  Apply a unitary transformation to the combined system.  Measure the state of the auxiliary subsystem.  If (and only if) it is in a specified final state, consider the process successful, and take the resulting state of the original (principal) system as the result of the process.

We review known information about exact realization of transformations by such a process.  Then we present results about approximate realization of finite partial transformations.  We consider primarily the issue of approximation to within a specified positive $\eps$, but we also address the question of arbitrarily close approximation.
\end{abstract}

\maketitle

\section{Introduction and main results} 
\label{sec:intro}

Consider an experiment involving the composition of two
distinguishable quantum systems, a principal and an auxiliary
one. Initially the auxiliary system is in a prepared initial state,
and the principal system is in an arbitrary state $\ketpsi$. Apply a
unitary operator $U$ to the composite system, then  measure the
auxiliary system, and declare success if the auxiliary system is found
to be in a particular (designated a priori) final state. In the case
of success, let $\U\ketpsi$ be the resulting state of the principal
system. The transformation $\U$ is not necessarily unitary or even
total.

Such an experiment is a recurring theme in recent quantum-computation
literature; see  \cite{BSC,Bogus,CGMSY,CW,DS,Jetal,NC,WK} for
example. Typically one tries to maximize the probability that the
measurement is successful and the state $\U\ketpsi$ is of some desired
form, and one may or may not be able to use the resulting state of the
principal system if the measurement is not successful. In particular,
Childs and Wiebe use such an experiment to simulate convex linear
combinations of unitary operators \cite{CW}.

A number of natural questions arise including these:
\begin{itemize}
\item Which state transformations of the principal system can be
  exactly realized that way?
\item What success probability can be guaranteed?
\item How many ancillas are needed to achieve the desired results?
\end{itemize}
Much depends of course on the constraints imposed on the unitary
operator $U$. In the simple case where no restrictions are placed on
$U$, the answers to the three questions are known. We summarize them
in the following Exact Realization Theorem. But first we need a few
definitions.

Let $\H, \H^+$ be the Hilbert spaces for the principal and composite
systems respectively. We presume that $\H^+$ is finite dimensional. If
$\ket{\alpha_1}$ and $\ket{\alpha_2}$ are the designated initial and
final states of the auxiliary system and if the measurement is
successful, then
\begin{equation*}
 \U\ketpsi = \pi_0 U(\ket{\alpha_1}\otimes\ketpsi)
\end{equation*}
where $\pi_0$ is the composition
\[
 \H^+ \to \{\ket{\alpha_2}\}\otimes\H \to \H
\]
of a projection and an isomorphism, and the vector $\U\ketpsi$ is
unnormalized. As in much of the literature, we usually ignore this distinction between a nonzero vector in a Hilbert space and the state represented by the vector, though we try to pay attention to the distinction in formal definitions and theorems.
The following definition takes into account that nonzero
collinear state vectors  represent the same state. (Two vectors are
collinear if one of them is a nonzero multiple of the other.)

\begin{definition}[Exact Realization]
A unitary operator $U$ on $\H^+$ \emph{exactly realizes} a partial
transformation $T$ of $\H - \{\vec{0}\}$ (into itself) if $\U\ketpsi$
is nonzero and collinear with $T\ketpsi$ for every $\ketpsi$ in the
domain $\dom{T}$ of $T$.
\end{definition}

The success probability $\SP(U,\ketpsi)$ of $U$ on a normalized $\H$
vector $\ketpsi$ is $\Norm{\U\ketpsi}^2$. The \emph{guaranteed success
  probability} of $U$ is
$$
 \min \{ \SP(U,\ketpsi):\; \Norm{\ketpsi} = 1 \}.
$$
The use of min (rather than inf) is justified because the space of
unitary operators is compact. Every linear operator on $\H$ can be
viewed as a partial transformation\footnotemark of $\H -
\{\vec{0}\}$.
\footnotetext{By ``partial'', we mean ``not necessarily total''; so
  total transformations are a special case of partial ones.}

\begin{theorem}[Exact Realization]
\label{thm:exact}\mbox{}
Let $L$ range over nonzero linear operators on $\H$, and let $U$ range
over unitary operators on $\H^+$.
\begin{enumerate}
\item Every $L$ is exactly realizable by some $U$, and one ancilla
  suffices for the purpose.
\item Let $\lambda_{\min}$ and $\lambda_{\max}$ be the minimal and
  maximal eigenvalues respectively of the positive operator $L^\dag
  L$. If $U$ exactly realizes $L$ then the guaranteed success
  probability of $U$ is at most
  $\displaystyle\frac{\lambda_{\min}}{\lambda_{\max}}$, and the upper
  bound is achieved by some unitary operators $U$.
\end{enumerate}
\end{theorem}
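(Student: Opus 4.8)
The plan is to handle the two parts of the theorem separately, starting with part (1). Here I would use a Stinespring-type dilation with a single ancilla, so that $\H^+=\C^2\otimes\H$, and take both the designated initial and final auxiliary states to be the first basis vector $\ket{0}$ of $\C^2$; then $\H^+=(\ket{0}\otimes\H)\oplus(\ket{1}\otimes\H)$. Fix a scalar $c\neq 0$ with $|c|^2\le 1/\lambda_{\max}$, where $\lambda_{\max}$ is the largest eigenvalue of the positive operator $L^\dag L$; then $I-|c|^2L^\dag L\ge\vec 0$, so it has a positive square root $M$. Define $V$ on $\ket{0}\otimes\H$ by $V(\ket{0}\otimes\ketpsi)=\ket{0}\otimes cL\ketpsi+\ket{1}\otimes M\ketpsi$. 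Using $M^2=I-|c|^2L^\dag L$ one checks $\Norm{V(\ket{0}\otimes\ketpsi)}^2=\Norm{\ketpsi}^2$, so $V$ is an isometry of $\ket{0}\otimes\H$ into $\H^+$; since $\dim\H^+=2\dim\H$, the orthogonal complements of the domain and the range of $V$ have the same dimension, so $V$ extends to a unitary $U$ on $\H^+$. For this $U$ we get $\U\ketpsi=\pi_0 U(\ket{0}\otimes\ketpsi)=cL\ketpsi$, which is nonzero and collinear with $L\ketpsi=T\ketpsi$ for every $\ketpsi\in\dom{T}$, proving (1).

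For the upper bound in part (2), suppose $U$ exactly realizes $L$. The first step, which I expect to be the main obstacle, is to show that exact realization forces $\U=cL$ for a single nonzero scalar $c$: a priori the definition only gives, for each $\ketpsi$ with $L\ketpsi\neq\vec 0$, a scalar $c(\ketpsi)$ with $\U\ketpsi=c(\ketpsi)L\ketpsi$, and one must promote this to a global constant. I would do this using linearity of $\U$: the set $S=\H\setminus\kernel{L}$ is the complement of a proper subspace (since $L\neq 0$), and for $\ketpsi,\ketphi\in S$ with $L\ketpsi,L\ketphi$ linearly independent we have $\ketpsi+\ketphi\in S$; comparing the two expansions of $\U(\ketpsi+\ketphi)$ in the basis $\{L\ketpsi,L\ketphi\}$ gives $c(\ketpsi)=c(\ketpsi+\ketphi)=c(\ketphi)$. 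A short separate argument disposes of the case $\mathrm{rank}\,L=1$: there collinearity forces $\range{\U}\subseteq\range{L}$, so $\U$ and $L$ are both rank-one with the same range, and the requirement that $\U\ketpsi\neq\vec 0$ exactly when $L\ketpsi\neq\vec 0$ forces them to be proportional. The scalar $c$ is nonzero because $\U$ is nonzero on the nonempty set $\dom{T}$.

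Once $\U=cL$ is known, the rest is routine. Writing $U(\ket{\alpha_1}\otimes\ketpsi)=\ket{\alpha_2}\otimes\U\ketpsi+\xi(\ketpsi)$ with $\xi(\ketpsi)$ orthogonal to $\{\ket{\alpha_2}\}\otimes\H$, unitarity of $U$ and normalization of $\ket{\alpha_1}$ give $\Norm{\ketpsi}^2=\Norm{\U\ketpsi}^2+\Norm{\xi(\ketpsi)}^2\ge|c|^2\langle\psi|L^\dag L|\psi\rangle$; taking $\ketpsi$ a unit eigenvector of $L^\dag L$ for $\lambda_{\max}$ gives $|c|^2\le 1/\lambda_{\max}$. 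Hence for unit $\ketpsi$, $\SP(U,\ketpsi)=\Norm{\U\ketpsi}^2=|c|^2\langle\psi|L^\dag L|\psi\rangle$, so the guaranteed success probability of $U$ equals $|c|^2\lambda_{\min}\le\lambda_{\min}/\lambda_{\max}$. For achievability I would run the construction of part (1) with $|c|^2=1/\lambda_{\max}$, which is legitimate precisely because then $I-|c|^2L^\dag L\ge\vec 0$; the resulting $U$ has $\U=cL$ with $|c|^2=1/\lambda_{\max}$, so its guaranteed success probability is exactly $\lambda_{\min}/\lambda_{\max}$.
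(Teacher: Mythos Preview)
Your proof is correct and follows essentially the same strategy as the paper's: scale $L$ so that $I-|c|^2L^\dag L\ge0$ and build a one-ancilla dilation (the paper reaches this via a prescribed-inner-products proposition, you more directly via the positive square root), then show that exact realization forces $\U=cL$ for a single nonzero scalar and read off the bound from $|c|^2\le1/\lambda_{\max}$. One small slip to fix in your rank-one case: you invoke ``$\U\ketpsi\neq\vec 0$ \emph{exactly} when $L\ketpsi\neq\vec 0$,'' but the definition of exact realization gives only the forward implication---the reverse then follows because $\U$ and $L$ are rank-one maps into the same line with $\ker\U\subseteq\ker L$, hence with equal kernels.
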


Even though Exact Realization Theorem is well-known to experts, we
have found in the literature only a quick proof of Claim~(1), namely
the proof of Claim~6.2 in \cite{AAEL}. For the reader's convenience
and to make this paper more self-contained, we give a detailed proof
of the Exact Realization Theorem in \S\ref{sec:exact}. Specifically we
need the following corollary of Theorem~\ref{thm:exact}.

\begin{corollary}\label{cor:exact}
A partial transformation $T$ of $\H - \{\vec{0}\}$ is exactly
realizable if and only if there is a linear operator $L$ on $\H$ such
that $L\ketpsi$ is nonzero and collinear with $T\ketpsi$ for every
$\ketpsi\in\dom{T}$.
\end{corollary}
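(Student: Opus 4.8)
The plan is to read the corollary as two implications and to dispatch each using material already in hand.

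For the ``only if'' direction, I would start from a unitary $U$ that exactly realizes $T$ and simply take $L=\U$. The one point to check is that $\U$, defined as the composite $\pi_0\circ U\circ(\ket{\alpha_1}\otimes-)$, is genuinely a linear operator on $\H$: the map $\ketpsi\mapsto\ket{\alpha_1}\otimes\ketpsi$ is linear from $\H$ into $\H^+$, $U$ is linear on $\H^+$, and $\pi_0$ (a projection followed by an isomorphism) is linear from $\H^+$ to $\H$, so the composite is linear. By the definition of exact realization, $L\ketpsi=\U\ketpsi$ is nonzero and collinear with $T\ketpsi$ for every $\ketpsi\in\dom{T}$, which is exactly what is required.

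For the ``if'' direction, suppose $L$ is a linear operator with $L\ketpsi$ nonzero and collinear with $T\ketpsi$ for all $\ketpsi\in\dom{T}$. If $\dom{T}$ is empty the claim is vacuous (any $U$ works), so assume it is nonempty; then $L\ketpsi\neq\vec{0}$ for some $\ketpsi$, so $L$ is a nonzero operator, and moreover $\dom{T}$ is contained in the domain of $L$ viewed as a partial transformation of $\H-\{\vec{0}\}$. Now apply Theorem~\ref{thm:exact}(1) to $L$: there is a unitary $U$ on $\H^+$ (using a single ancilla) that exactly realizes $L$, i.e.\ $\U\ketpsi$ is nonzero and collinear with $L\ketpsi$ for every $\ketpsi$ in the domain of $L$, in particular for every $\ketpsi\in\dom{T}$. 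It then remains to chain the two collinearities: for $\ketpsi\in\dom{T}$ the vector $\U\ketpsi$ is collinear with $L\ketpsi$ and $L\ketpsi$ is collinear with $T\ketpsi$, and since all three vectors are nonzero and collinearity among nonzero vectors is transitive, $\U\ketpsi$ is nonzero and collinear with $T\ketpsi$. Hence $U$ exactly realizes $T$.

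I do not anticipate a real obstacle here, since the corollary is essentially a repackaging of Theorem~\ref{thm:exact}(1); the only things to be careful about are the degenerate empty-domain case, the fact that $\dom{T}$ need only be contained in the domain of $L$ (not equal to it), and the bookkeeping that collinearity restricted to nonzero vectors is an equivalence relation so that the two collinearity conditions compose.
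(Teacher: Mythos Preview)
Your proposal is correct and matches the paper's intent: the paper does not give a separate proof of Corollary~\ref{cor:exact}, treating it as immediate from Theorem~\ref{thm:exact}(1), and your argument is precisely the natural unpacking of that inference. The care you take with the edge cases (empty domain, $\dom{T}\subseteq\dom{L}$, transitivity of collinearity among nonzero vectors) is appropriate and leaves no gaps.
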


Our main concern in this paper is with approximate realizability in
the simple case of our experiment where no restrictions are placed on
the unitary operator $U$ on $\H^+$.
It will be convenient to identify
nonzero collinear $\H$ vectors and work in the resulting complex
projective space $\P$ where each point represents a unique state of
the principal system, and each state is represented by a unique point
in $\P$.
We presume that $\H = \C^n$, so that $\P$ is the complex
projective space of (complex) dimension $n-1$.

We show that, while almost every partial transformation of $\P$ with
domain of cardinality $\le n+1$ is approximately realizable, almost no
partial transformation of $\P$ with a larger domain is approximately
realizable. To formulate this resul t precisely, we need a couple of definitions.

The point in $\P$ given by a nonzero vector $\v$ in $\H$ will be denoted
$\Q\v$. Any linear transformation $L$ of $\H$ induces a partial
transformation $\Q\v\mapsto\Q(L\v)$ of $\P$, denoted $\QL$, with
$\dom{\QL} = \{\Q\v: L\v\ne\vec0\}$. The $\Q$ notation alludes to the
fact that $\P$ is a quotient of $\H -
\{\vec{0}\}$. Corollary~\ref{cor:exact} justifies the following
definition.

\begin{definition}[Exactly realizable transformations of $\P$]
\label{def:exact}
A partial transformation $\tau$ of $\P$ is \emph{exactly realizable}
if there is a linear transformation $L$ of $\H$ such that $\QL$
coincides with $\tau$ on $\dom{\tau}$.
\end{definition}

The complex projective space $\P$ is a Riemannian manifold endowed
with the Fubini-Study metric, the only (up to a nonzero constant
factor) Riemannian metric on $\P$ invariant under (the transformations
of $\P$ induced by) unitary transformations of the overlying Hilbert
space $\H$. The Fubini-Study metric induces the standard Fubini-Study
distance measure
\[
  \FS(\Q\u,\Q\v) = \arccos \frac
  {\lvert \langle \u | \v \rangle\rvert}
  {|\u| \cdot |\v|}.
\]

It will be convenient to represent finite transformations of $\P$ as
point sequences. Fix a positive integer $\ell$. A \emph{suite}
$\sigma$ is a list $(p_1, \dots p_\ell, p_{\ell+1}, \dots, p_{2\ell})$
of $2\ell$ points where the first $\ell$ points $p_1, \dots, p_\ell$
are all distinct; it is a point in the direct product $\P^{2\ell}$ of
the complex projective space $\P$. We think of it as specifying a
transformation
\[
  p_i\mapsto p_{\ell+i}\quad\ \text{where} i=1,\dots,\ell
\]
and we say that it is exactly realizable if the transformation is.
We carry over to suites the standard notation for domain and range of
partial transformations; thus, we write $\dom{\sigma}$ for the first
half, $(p_1,\dots,p_\ell)$, of $\sigma$ and $\range{\sigma}$ for the
second half, $(p_{\ell+1},\dots,p_{2\ell})$.  Notice, though, that a
suite contains more information than just the transformation that it
specifies, because a suite also gives an ordering of its domain.

In what follows, $\eps$ ranges over positive real numbers.

\begin{definition}[Approximately realizable suites]\label{def:appr}
\mbox{}\\[-1em]
\begin{itemize}
\item A suite $(q_1,\dots,q_{2\ell})$ \emph{$\eps$-approximates} a
  suite $(p_1,\dots,p_{2\ell})$ if every $\FS(p_j,q_j) < \eps$.
\item A suite $\sigma$ is \emph{$\eps$-approximable} if there is an
  exactly realizable suite $\tau$ that $\eps$-approximates $\sigma$.
\item A suite is \emph{infinitely approximable} if it is
  $\eps$-approximable for every $\eps>0$.
\end{itemize}
\end{definition}

\begin{theorem}[Approximate Realization]
\label{thm:appr}
In $\P^{2\ell}$, we have the following.
\begin{enumerate}
\item If $\ell\le n+1$ then the set of exactly realizable suites is a set of full measure.
\item If $\ell > n+1$ then $\eps$-approximable suites form an open set of volume $O(\eps^{2(\ell-n-1)(n-1)})$.
\end{enumerate}
\end{theorem}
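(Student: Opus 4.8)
The plan is to analyze the map that sends a linear operator $L$ on $\H = \C^n$ to the tuple of points $(\Q v_1,\dots,\Q v_\ell, \Q(Lv_1),\dots,\Q(Lv_\ell))$ it induces, once a collection of source points is fixed. Concretely, I would set up a parametrized family: the variety $\Sigma \subseteq \P^{2\ell}$ of exactly realizable suites is (up to the constraint that the first $\ell$ points are distinct) the image of the map
\[
\Phi\colon (\P^\ell)^{\mathrm{distinct}} \times \PL(\C^n) \dashrightarrow \P^{2\ell},
\qquad \bigl((p_1,\dots,p_\ell), [L]\bigr)\mapsto (p_1,\dots,p_\ell,\QL p_1,\dots,\QL p_\ell),
\]
where $\PL(\C^n)$ is the projectivization of the space of $n\times n$ matrices (since $L$ and a scalar multiple induce the same $\QL$), a space of complex dimension $n^2-1$. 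The domain of $\Phi$ thus has complex dimension $\ell(n-1) + (n^2-1)$, while $\P^{2\ell}$ has complex dimension $2\ell(n-1)$. So the image $\overline\Sigma$ has complex dimension at most $\ell(n-1) + n^2 - 1$, and when $\ell > n+1$ this is strictly less than $2\ell(n-1)$; indeed the codimension is at least $\ell(n-1) - (n^2-1) = (n-1)(\ell - n - 1)$, i.e. real codimension $\ge 2(n-1)(\ell-n-1)$. That is already the source of the exponent $2(\ell-n-1)(n-1)$ in the volume bound.

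For part (1), when $\ell \le n+1$ I would argue that $\Phi$ is generically dominant (in fact surjective onto a full-measure subset). The key linear-algebra fact is that for $\ell \le n+1$ points $p_1,\dots,p_\ell$ in $\P$ in general position (any $\min(\ell,n)$ of the underlying vectors linearly independent, and, when $\ell = n+1$, the standard ``$n+1$ points in general position'' condition) and \emph{any} target points $q_1,\dots,q_\ell$, there exists a linear $L$ with $\QL p_i = q_i$: choose representative vectors $v_i$ for $p_i$; for $\ell\le n$ the $v_i$ are independent and one freely prescribes $Lv_i$ to be any representatives $w_i$ of $q_i$; for $\ell = n+1$ one uses the classical fact that a projective transformation is determined by, and can be prescribed on, $n+1$ points in general position (this is where the ``$n+1$'' threshold comes from, and it forces invertibility of $L$, but that is fine). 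So every suite whose domain is in general position is exactly realizable, and the set of such suites — being the complement of a proper algebraic subvariety (vanishing of certain determinants in the source coordinates) — has full measure in $\P^{2\ell}$. I would make ``general position'' precise and check that the exceptional set is a finite union of subvarieties of positive codimension, hence measure zero.

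For part (2), openness of the set of $\eps$-approximable suites is immediate from the definition: if $\sigma$ is $\eps$-approximated by an exactly realizable $\tau$ via strict inequalities $\FS(p_j,q_j)<\eps$, then any $\sigma'$ sufficiently close to $\sigma$ is still $\eps$-approximated by the same $\tau$. The volume estimate is the technical heart. The set of $\eps$-approximable suites is contained in the $\eps$-neighborhood (in the Fubini–Study product metric) of the real-algebraic set $\Sigma$ of exactly realizable suites, which we have shown lies in a subvariety of real codimension $\ge 2(n-1)(\ell-n-1)$. For a compact real-algebraic (or even semialgebraic) subset $Z$ of codimension $c$ inside a fixed compact Riemannian manifold, the volume of its $\eps$-neighborhood is $O(\eps^c)$; this is a standard tube-type estimate (Weyl tube formula in the smooth case, or a covering argument: cover $Z$ by $O(\eps^{-\dim Z})$ balls of radius $\eps$, whose $\eps$-neighborhoods have volume $O(\eps^{\dim M})$ each, giving total $O(\eps^{\dim M - \dim Z}) = O(\eps^c)$, with the implied constant depending only on $n$ and $\ell$ through bounds on the degree of $\Sigma$). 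Applying this with $c = 2(n-1)(\ell-n-1)$ gives the claimed bound $O(\eps^{2(\ell-n-1)(n-1)})$.

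The main obstacle I anticipate is not the dimension count but making the covering/tube estimate rigorous while keeping the constant uniform: one must control the ``degree'' or complexity of the variety $\Sigma$ (equivalently, bound the number of $\eps$-balls needed to cover it) independently of $\eps$, and handle the fact that $\Sigma$ may be singular, so a naive Weyl tube formula does not directly apply. I would resolve this by invoking a standard result on volumes of tubular neighborhoods of semialgebraic sets (or stratifying $\Sigma$ by smooth pieces and summing), noting that $\Sigma$ is the image of a morphism between fixed projective varieties so its degree is bounded in terms of $n$ and $\ell$ alone. A secondary subtlety is the distinctness constraint in the definition of a suite: the locus where two source points coincide is itself a proper subvariety, so restricting to it changes neither the full-measure claim in (1) nor the $O(\eps^c)$ bound in (2).
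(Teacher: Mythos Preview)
Your overall architecture---parametrize exactly realizable suites by $\P^\ell\times\mathbb{P}(\mathrm{Mat}_n(\C))$, read off the codimension $(n-1)(\ell-n-1)$, then bound the tube volume---is exactly the paper's. Two points need correction or sharpening.

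In part~(1), your claim for $\ell=n+1$ is false as stated. With the domain $(p_1,\dots,p_{n+1})$ in general position but \emph{arbitrary} targets, no linear $L$ need exist: take $p_i=\Q\e_i$ (with $\e_{n+1}=\e_1+\cdots+\e_n$), send $p_1,\dots,p_n$ all to $\Q\e_1$, and send $p_{n+1}$ to $\Q\e_2$; any $L$ realizing the first $n$ values has one-dimensional range and cannot hit $\Q\e_2$. The ``classical fact'' you invoke (the paper's Lemma on $n+1$ points) requires the \emph{target} tuple to be in general position as well. The fix is immediate---take $G$ to be the set of suites with all $2\ell$ points in general position, still full measure---and this is precisely what the paper does.

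In part~(2) the paper carries out exactly what you flag as the main obstacle, but with named tools rather than an appeal to ``standard results.'' It covers $\P^{2\ell}$ by finitely many affine coordinate cubes, applies Tarski's quantifier elimination to the first-order definition of $\PL$ to exhibit it, up to a measure-zero set, as a genuine algebraic variety in each patch, and then invokes Wongkew's theorem, which bounds the volume of the $\eps$-tube about a real algebraic variety of codimension $c$ inside a ball by a polynomial in $\eps$ with leading term $O(\eps^c)$. Your covering heuristic (``$O(\eps^{-\dim Z})$ balls of radius $\eps$'') is the right intuition but is not itself a proof, since the constant in $O(\eps^{-\dim Z})$ is exactly what is at stake; Wongkew's bound (or an equivalent semialgebraic tube estimate) is what supplies it uniformly in terms of the degrees of the defining polynomials, hence in terms of $n$ and $\ell$ alone.
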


The proof of Claim~(1) of Theorem~\ref{thm:appr} is elementary but the
proof of Claim~(2) involves the volume-of-the-tube theory pioneered by
Hermann Weyl \cite{Gray,Weyl} and Tarski's theorem about quantifier
elimination in the first-order theory of algebraically closed fields
\cite{Tarski}.

\begin{theorem}[Infinite Approximability]
\label{thm:inf}
  If $\ell < 3$ then every suite is exactly realizable. Assume that
  $\ell \ge3$ but the principal quantum system consists of just one
  qubit. A suite $(p_1,\dots,p_{2\ell})$ is infinitely approximable if
  and only if it is exactly realizable or else exactly $\ell-1$  of the
  $\ell$ points $(p_{\ell+1}, \dots, p_{2\ell})$ are equal.
\end{theorem}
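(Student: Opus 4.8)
\emph{Overview.} I will settle $\ell<3$ directly and, for $\ell\ge3$ and a single qubit (so $\H=\C^2$, $\P=\CP^1$), base everything on the dimension-two classification of exactly realizable suites. When $\ell=1$ the map $p_1\mapsto p_2$ is realized by the rank-one operator $\ket{v_2}\bra{v_1}$; when $\ell=2$ the two distinct domain points are linearly independent, so any linear operator carrying them to representatives of $p_3,p_4$ (extended by zero) realizes the suite, by Corollary~\ref{cor:exact}. For $\ell\ge3$ a nonzero linear operator $L$ on $\C^2$ has rank one, whence $\QL$ is constant off one point, or rank two, whence $\QL$ is a Möbius transformation of $\CP^1$; so by Corollary~\ref{cor:exact} a suite $(p_1,\dots,p_{2\ell})$ is exactly realizable iff either all of $p_{\ell+1},\dots,p_{2\ell}$ coincide (realize the constant map by a rank-one operator whose kernel point is chosen off $\{p_1,\dots,p_\ell\}$) or some Möbius transformation $M$ satisfies $M(p_i)=p_{\ell+i}$ for all $i$. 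Let $k$ be the largest multiplicity among $p_{\ell+1},\dots,p_{2\ell}$; then $k=\ell$ already forces exact realizability, while for $\ell\ge3$ a suite with $k=\ell-1$ can never be exactly realizable, since a Möbius transformation is injective and cannot carry $\ell$ distinct points onto fewer than $\ell$.

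\emph{Sufficiency ($\ell\ge3$).} An exactly realizable suite is infinitely approximable trivially. If instead exactly $\ell-1$ range points are equal, then, after permuting the domain and range indices together (which changes neither exact realizability nor $\eps$-approximability), I may assume $p_{\ell+1}=\dots=p_{2\ell-1}=c$ and $p_{2\ell}=d\ne c$; for each $\eps>0$ I will produce a Möbius transformation $M$ with $M(p_\ell)=d$ and $M(p_i)$ within FS-distance $\eps$ of $c$ for $i<\ell$, by a ``squeezing'' argument: fix $\xi$ off $\{p_1,\dots,p_\ell\}$, fix a Möbius $M_2$ with $M_2(\xi)=c$, $M_2(p_\ell)=d$, shrink a ball $B_\delta(\xi)$ so that $M_2(B_\delta(\xi))\subseteq B_\eps(c)$, and take $M_1$ (in a chart with $p_\ell=\infty$ and $\xi=0$, acting by $z\mapsto z/t$ with $t$ large) to fix $p_\ell$ and compress $p_1,\dots,p_{\ell-1}$ into $B_\delta(\xi)$; then $M=M_2\circ M_1$ works, and the exactly realizable suite $(p_1,\dots,p_\ell,M(p_1),\dots,M(p_\ell))$ $\eps$-approximates $\sigma$. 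Since $\eps$ is arbitrary, $\sigma$ is infinitely approximable.

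\emph{Necessity ($\ell\ge3$).} Suppose $\sigma$ is infinitely approximable but not exactly realizable; I must show $k=\ell-1$. Since $k=\ell$ forces exact realizability, $k\le\ell-1$, so it suffices to rule out $k\le\ell-2$. Assuming $k\le\ell-2$, choose exactly realizable $\sigma_m\to\sigma$. If infinitely many $\sigma_m$ had all their range points equal, then all range points of $\sigma$ would be equal and $k=\ell$ — impossible; so I may assume each $\sigma_m$ is realized by a Möbius transformation $M_m$. Now I invoke the standard degeneration dichotomy for sequences of Möbius transformations of $\CP^1$ (lift each $M_m$ to $\mathrm{SL}(2,\C)$: either the lifts stay bounded and a subsequence converges to a Möbius transformation, or their norms blow up and the rescaled lifts converge to a rank-one operator): along a subsequence, either $M_m\to M$ for a Möbius $M$ — and then $M$ sends $\lim(\sigma_m)_i=p_i$ to $\lim(\sigma_m)_{\ell+i}=p_{\ell+i}$, making $\sigma$ exactly realizable, a contradiction — or $M_m\to b$ uniformly on compact subsets of $\CP^1\setminus\{a\}$ for some $a,b$. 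In the latter case at most one of the distinct domain points $p_1,\dots,p_\ell$ equals $a$, so for the other (at least $\ell-1$) indices $i$ we have $(\sigma_m)_i\to p_i\ne a$ and hence $(\sigma_m)_{\ell+i}=M_m((\sigma_m)_i)\to b$, forcing $p_{\ell+i}=b$; thus $k\ge\ell-1$, contradicting $k\le\ell-2$.

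\emph{Main obstacle.} The load-bearing step is this degeneration dichotomy — equivalently, identifying the closure of the Möbius group inside the projective space of rank-at-most-one operators on $\C^2$ — and it is precisely where the qubit hypothesis is essential: for $\H=\C^n$ with $n>2$ the maps $\QL$ are projective collineations whose degenerations are far more varied. A secondary point of care is that the Fubini-Study metric is invariant under unitary but not general Möbius transformations, so the approximation in the squeezing construction must be controlled directly in the FS metric, which is why $M_2$, then $\delta$, then $t$ are chosen in that order.
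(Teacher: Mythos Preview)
Your proof is correct. The treatment of $\ell<3$ and the sufficiency direction (the ``squeezing'' construction) are essentially the same as the paper's: the paper normalizes so that the repeated range value is $0$ and the exceptional one is $\infty$, then uses the one-parameter family $g_k=g/k$ where $g$ sends the exceptional domain point to $\infty$, which is your $M_2\circ M_1$ with $M_1(z)=z/t$.

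The necessity direction, however, follows a genuinely different route. The paper first invokes a general continuity lemma (valid for all $n$): if a border suite had $n+1$ general-position points in its domain mapping to $n+1$ general-position points in its range, the unique projective linear map determined by those $n+1$ pairs would, by continuity in the suite, realize the whole suite---so in the qubit case a border suite has at most two distinct range values. It then rules out the $2+2$ pattern by a cross-ratio argument: on general-position tetrads the cross-ratio avoids $\{0,1,\infty\}$, while it extends continuously to $2+2$ tetrads with value in $\{0,1,\infty\}$, contradicting preservation of cross-ratio along an approximating PL sequence. You instead bypass cross-ratios entirely and use the compactification of $\mathrm{PSL}(2,\C)$ inside $\mathbb{P}(M_2(\C))$: normalizing lifts to $\mathrm{SL}(2,\C)$ and passing to a subsequence yields either a limiting M\"obius map (forcing exact realizability) or a rank-one limit (forcing at least $\ell-1$ range points to coincide). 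Your argument is shorter and more self-contained for $n=2$; the paper's argument separates out a continuity statement (Claim~\ref{cla:range}) that holds for arbitrary $n$ and isolates the specifically one-qubit content in the cross-ratio step, which is why the paper can point precisely to where the higher-dimensional generalization breaks down.
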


The problem of characterization of infinitely approximable suites in
the general case is open.

\subsection*{Acknowledgments}

We thank Ralf Spatzier for finding useful volume-of-the-tube
references; Nathan Wiebe for his comments on the earlier version of
this paper; and Dorit Aharonov, Vadym Kliuchnikov and Matthew Hastings
for useful discussions.

\section{Prescribing inner products} 
\label{sec:prelim}

For the reader's convenience, we prove here some well-known facts
about existence of vectors with prescribed inner products. We'll work
over the complex field $\C$.  Except when the contrary is explicitly
stated, vector spaces of the form $\C^d$ are assumed to be equipped
with the standard (for physicists) inner product
\[
\sq{\vec a,\vec b} = \sum_{i=1}^d \overline a_ib_i.
\]

\begin{proposition}\label{pro:inner}
  Let $Q$ be an $n\times n$ matrix of complex numbers.  The following
  statements are equivalent.
  \begin{enumerate}
  \item For some positive integer $d$, there are $n$ vectors $\vec
    x_i\in\C^d$ (where $i=1,2,\dots,n$) such that $\sq{\vec
      x_i,\vec x_j}=Q_{ij}$ for all $i$ and $j$.
  \item There are $n$ vectors $\vec x_i\in\C^n$ (where
    $i=1,2,\dots,n$) such that $\sq{\vec x_i,\vec x_j}=Q_{ij}$ for all
    $i$ and $j$.
  \item $Q$ is Hermitian, and
    $\sum_{i=1}^n\sum_{j=1}^n Q_{ij}\overline z_iz_j\geq 0$ for all
    $\vec z\in\C^n$.
\item $Q$ is Hermitian, and all its eigenvalues are non-negative.
  \end{enumerate}
\end{proposition}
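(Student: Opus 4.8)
The plan is to prove the cycle of implications $(2)\Rightarrow(1)\Rightarrow(3)\Rightarrow(4)\Rightarrow(2)$, which will establish the equivalence of all four statements. The implication $(2)\Rightarrow(1)$ is trivial: take $d=n$. For $(1)\Rightarrow(3)$, suppose vectors $\vec x_i\in\C^d$ realize $Q$. Then $\overline{Q_{ij}}=\overline{\sq{\vec x_i,\vec x_j}}=\sq{\vec x_j,\vec x_i}=Q_{ji}$, so $Q$ is Hermitian; and for any $\vec z\in\C^n$, bilinearity gives
\[
  \sum_{i,j} Q_{ij}\overline z_i z_j
  = \sum_{i,j}\overline z_i z_j\sq{\vec x_i,\vec x_j}
  = \Bigl\langle \sum_i z_i\vec x_i,\ \sum_j z_j\vec x_j\Bigr\rangle
  = \Bigl\lVert \sum_i z_i\vec x_i\Bigr\rVert^2 \ge 0.
\]
(Here I use that the inner product is conjugate-linear in the first argument, matching the convention $\sq{\vec a,\vec b}=\sum\overline a_i b_i$ fixed just above the proposition.)

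For $(3)\Rightarrow(4)$, this is the standard fact that a Hermitian matrix is positive semidefinite in the quadratic-form sense if and only if all its eigenvalues are non-negative. I would prove it by invoking the spectral theorem: a Hermitian $Q$ can be written $Q=\sum_k \lambda_k \vec u_k\vec u_k^\dag$ with orthonormal eigenvectors $\vec u_k$ and real eigenvalues $\lambda_k$; plugging $\vec z=\vec u_k$ into the quadratic form yields $\lambda_k\ge 0$. Conversely, if all $\lambda_k\ge0$, then $\sum_{i,j}Q_{ij}\overline z_iz_j=\sum_k\lambda_k\lvert\sq{\vec u_k,\vec z}\rvert^2\ge 0$; this also gives the reverse direction $(4)\Rightarrow(3)$ for free, though it is not needed for the cycle.

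The one implication with real content is $(4)\Rightarrow(2)$: given a Hermitian $Q$ with all eigenvalues $\lambda_k\ge0$, I must exhibit $n$ vectors in $\C^n$ with Gram matrix $Q$. Using the spectral decomposition $Q = V\Lambda V^\dag$ with $V$ unitary and $\Lambda=\mathrm{diag}(\lambda_1,\dots,\lambda_n)$, set $M = \Lambda^{1/2} V^\dag$ (taking non-negative square roots of the eigenvalues, which is where $\lambda_k\ge0$ is used), and let $\vec x_i$ be the $i$-th column of $M^\dag = V\Lambda^{1/2}$, equivalently the conjugate of the $i$-th row of $M$. Then the Gram matrix $(\sq{\vec x_i,\vec x_j})_{ij}$ equals $M^\dag$... one must be careful with the conjugation convention: with $\sq{\vec a,\vec b}=\overline{\vec a}^{\,T}\vec b$, the Gram matrix of the columns of a matrix $N$ is $N^\dag N$, so I want $N^\dag N = Q$, and $N=\Lambda^{1/2}V^\dag$ works since $N^\dag N = V\Lambda^{1/2}\Lambda^{1/2}V^\dag = V\Lambda V^\dag = Q$. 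Taking $\vec x_i$ to be the columns of this $N$ finishes the proof. I do not expect any serious obstacle here — the only thing to watch is bookkeeping with the physicists' conjugate-linear-in-the-first-slot convention, so that the square-relating-Gram-matrix identities come out with the conjugations on the correct side.
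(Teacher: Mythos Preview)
Your cycle $(2)\Rightarrow(1)\Rightarrow(3)\Rightarrow(4)\Rightarrow(2)$ is correct, and the computations check out once you settle on $N=\Lambda^{1/2}V^\dag$ and take the $\vec x_i$ to be its columns (the earlier attempt with columns of $M^\dag=V\Lambda^{1/2}$ would have given Gram matrix $MM^\dag=\Lambda$, not $Q$, so it was right to redo it).

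The route differs from the paper's at the key step. The paper proves $(3)\Rightarrow(2)$ directly, without passing through the spectral theorem: it builds an abstract $n$-dimensional space with a sesquilinear form $B$ whose values on a chosen basis are the $Q_{ij}$, observes that $B$ is an inner product when the form is strictly positive, and in the degenerate case quotients out the null space $K=\{\vec u:B(\vec u,\vec u)=0\}$ to obtain a genuine inner-product space of dimension $\le n$; the required vectors are then the images of the basis under an isomorphism to a standard $\C^n$. Your argument instead goes $(4)\Rightarrow(2)$ by writing down the explicit square root $N=\Lambda^{1/2}V^\dag$ from the spectral decomposition $Q=V\Lambda V^\dag$. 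Your approach is shorter and more concrete; the paper's is more structural and makes visible exactly where positivity versus strict positivity matters (the quotient by $K$). Either is perfectly adequate here.
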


\begin{proof}
We'll prove (1)$\to$(3)$\to$(2), and (3)$\to$(4)$\to$(3).
Since (2) trivially implies (1), this will complete the proof.

\smallskip

 (1)$\to$(3):\quad
Given vectors $\vec x_i$ as in (1), we have
\[
Q_{ij}= \sq{\vec x_i,\vec x_j} =
        \overline {\sq{\vec x_j,\vec x_i}} = \overline{Q_{ji}},
\]
so $Q$ is Hermitian,  We also have, for all $\vec z\in\C^n$, that
\[
\sum_{i,j}Q_{ij}\overline z_iz_j=
\sum_{i,j}\overline z_i\sq{\vec x_i,\vec x_j}z_j=
\sq{\sum_iz_i\vec x_i,\sum_jz_j\vec x_j}\geq0,
\]
where the last inequality comes from the fact that the inner product
of any vector with itself is non-negative.

\smallskip

(3)$\to$(2):\quad Assume (3) and consider an $n$-dimensional
vector space $V$ over $\C$ with a basis $\{\vec e_1,\dots,\vec
e_n\}$.  (To avoid confusion, it is best \emph{not} to identify $V$
with $\C^n$ at this stage; in particular, we do not want the
standard inner product on $V$.)  Define a sesquilinear form (i.e.,
linear in the second argument and conjugate-linear in the first) $B$
on $V$ by setting $B(\vec e_i,\vec e_j)=Q_{ij}$ and extending $B$ to
all vectors in $V$ by sesquilinearity. Because $Q$ is Hermitian, $B$
is conjugate-symmetric, i.e., $B(\vec u,\vec v)=\overline{B(\vec
  v,\vec u)}$.

Observe that the expression $\sum_{i,j}Q_{ij}\overline z_iz_j$, which
we know to be non-negative by (3), is exactly $B(\sum_iz_i\vec
e_i,\sum_i z_i\vec e_i)$.

Temporarily assume that this expression is not only non-negative but
strictly positive for all $\vec z\neq\vec 0$.  Then $B$ is an inner
product on $V$.  So we have an $n$-dimensional complex inner product
space (namely $V$ with inner product $B$) containing $n$ vectors
(namely the $\vec e_i$'s) whose inner products are given by $Q_{ij}$.  But
all $n$-dimensional inner-product spaces over $\C$ are isomorphic,
so the standard such space, $\C^n$ with the standard inner
product, must also contain such vectors.  Thus, we have (2).

It remains to handle the case where $\sum_{i,j}Q_{ij}\overline
z_iz_j$, though non-negative for all $\vec z$ as required in (3),
vanishes for some non-zero vectors $\vec z$.  So $B$ fails to be an
inner product on $V$; it satisfies all the requirements in the
definition of inner products except that
\[
K=\{\vec u\in V:B(\vec u,\vec u)=0\}
\]
is not merely $\{\vec 0\}$.

We claim that $B(\vec u,\vec v)=0$ whenever $\vec u\in K$, for all
$\vec v\in V$.  Indeed, for any such $\vec u$ and $\vec v$ and for any
$\alpha\in\C$, we have
\[
0\leq B(\vec v+\alpha\vec u,\vec v+\alpha\vec u)=
B(\vec v,\vec v)+2\text{Re}(\overline\alpha B(\vec u,\vec v)).
\]
If $B(\vec u,\vec v)$ were not zero, then an appropriate choice of
$\alpha$ would make $\text{Re}(\overline\alpha B(\vec u,\vec v))$ so
negative as to violate this inequality.  This completes the proof of
the claim that $B(\vec u,\vec v)=0$ whenever $\vec u\in K$, for all
$\vec v\in V$.

This claim has two consequences.  First, it tells us that
\[
K=\{\vec u\in V:(\forall\vec v\in V)\,B(\vec u,\vec v)=0\}
\]
and so $K$ is a vector subspace of $V$.  So we can form the quotient
space $V/K$; it is a complex vector space of dimension $<n$.

Second, we have, for arbitrary $\vec u,\vec u'\in K$ and arbitrary
$\vec v,\vec w\in V$, that
\[
B(\vec v+\vec u,\vec w+\vec u')=B(\vec v,\vec w).
\]
This means that $B$ determines a well-defined, conjugate-symmetric,
sesquilinear form $\hat B$ on $V/K$.  That is, if we write $[\vec v]$
for the coset in $V/K$ that contains the vector $\vec v$, then
\[
\hat B([\vec v],[\vec w])=B(\vec v,\vec w)
\]
is well-defined and satisfies all the requirements for an inner
product except perhaps positivity.  It satisfies $\hat B([\vec
v],[\vec w])\geq0$ because of the analogous fact about $B$.  But also,
by dividing out $K$, we have eliminated the danger of equality here.
That is, if $\hat B([\vec v],[\vec v])=0$, then $B(\vec v,\vec v)=0$,
which means $\vec v\in K$ and so $[\vec v]=[\vec 0]$.  So $\hat B$ is
an inner product on $V/K$.

Again, we have a complex inner product space (namely $V/K$ with $\hat
B$) containing $n$ vectors (namely the $[\vec e_i]$'s) whose inner
products are given by the entries of $Q$.  The same therefore holds of
any other complex inner product space of the same dimension, since all
such spaces are isomorphic.  Since $V/K$ has dimension $<n$, we can
find appropriate vectors in $\C^n$ (with room to spare), verifying
(2).

\smallskip

(3)$\to$(4):\quad Since $Q$ is Hermitian, all its eigenvalues are
real.  If one of them were negative, say $\lambda<0$ with eigenvector
$\vec z\neq\vec0$, then
\[
\sum_{i=1}^n\sum_{j=1}^nQ_{ij}\overline z_i z_j=
\sum_{i=1}^n\overline z_i(Q\vec z)_i=
\lambda\sum_{i=1}^n\overline z_iz_i<0,
\]
contradicting the assumption (3).

\smallskip

(4)$\to$(3):\quad
Since $Q$ is Hermitian, there is a unitary matrix $U$ such that
$UQU^\dag$ is a diagonal matrix $D$, whose diagonal entries are the
eigenvalues of $Q$, known to be non-negative by (4).  So we have
$Q=U^\dag DU$.  For any $\vec z\in\C^n$, view $\vec z$ as a column
vector and observe that
\[
\sum_{i,j}Q_{ij}\overline z_i z_j=\vec z^\dag Q\vec z=
\vec z^\dag U^\dag DU\vec z=\vec w^\dag D\vec w,
\]
where we've introduced the abbreviation $\vec w$ for $U\vec z$.  Since
$D$ is diagonal, we have
\[
\vec w^\dag D\vec w=\sum_iD_{ii}\overline w_iw_i,
\]
in which every summand is non-negative.  This completes the
verification of (3) and thus the proof of the proposition.
\end{proof}

\section{Exact Realization Theorem} 
\label{sec:exact}

We use same name for a linear operator and its matrix when the vector basis is clear from the context.
Let $L$ range over nonzero linear operators on the Hilbert space $\H =
\C^n$ for the principal system.  $L$ is \emph{weakly contracting} if
$\Norm{L\v} \le \Norm{\v}$ for every vector $\v\in\H$. Further, let
$\lambda_{\min}$ and $\lambda_{\max}$ be the minimal and maximal
eigenvalues of the positive operator $L^\dag L$.

\subsection{Literal realization} 
\label{sub:literal}

We start by introducing a particulary simple version of exact
realization. Recall that, according to \S\ref{sec:intro}, every
unitary operator $U$ on the Hilbert space $\H^+$ for the composite
system gives rise to a linear operator $\U$ on $\H$.

\begin{definition}\label{def:literal}
  A unitary operator $U$ on $\H^+$ \emph{literally realizes} $L$ if $L=\U$.
\end{definition}

\begin{proposition}\label{pro:literal}
 The following statements are equivalent.
 \begin{enumerate}
\item $L$ is literally realizable.
\item $L$ is literally realizable with one ancilla.
\item All eigenvalues of $L^\dag L$ are $\le1$.
\item $L$ is weakly contracting.
\end{enumerate}
\end{proposition}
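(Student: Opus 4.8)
The plan is to establish the cycle of implications $(3)\Leftrightarrow(4)$, $(4)\to(2)$, $(2)\to(1)$, and $(1)\to(3)$, so that all four conditions are seen to be equivalent. The implications $(2)\to(1)$ and $(3)\Leftrightarrow(4)$ are essentially immediate: $(2)\to(1)$ because ``realizable with one ancilla'' is a special case of ``realizable''; and $(3)\Leftrightarrow(4)$ is the familiar fact that, for the positive operator $L^\dag L$, we have $\Norm{L\v}^2 = \langle\v|L^\dag L|\v\rangle$, which is $\le\Norm{\v}^2$ for all $\v$ exactly when every eigenvalue of $L^\dag L$ is $\le1$ (diagonalize $L^\dag L$ in an orthonormal eigenbasis and read off the extremal values of the Rayleigh quotient).

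The substantive direction is $(4)\to(2)$: given a weakly contracting $L$, construct a unitary $U$ on $\H^+ = \H_{\text{anc}}\otimes\H$ with $\H_{\text{anc}} = \C^2$ that literally realizes $L$. Writing $\ket{\alpha_1}=\ket{\alpha_2}=\ket{0}$ for the ancilla's initial and final states, I want $U(\ket0\otimes\v) = \ket0\otimes L\v + \ket1\otimes(\text{something})$ for all $\v$, so that projecting the ancilla onto $\ket0$ recovers $L\v$ exactly. The natural choice of the ``something'' is $C\v$ where $C$ is a positive operator with $C^\dag C = I - L^\dag L$; this operator exists and is well-defined precisely because $I-L^\dag L$ is positive, which is condition~(3)/(4). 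One then checks that the map $\ket0\otimes\v \mapsto \ket0\otimes L\v + \ket1\otimes C\v$ is an isometry from $\ket0\otimes\H$ into $\H^+$: indeed $\Norm{L\v}^2+\Norm{C\v}^2 = \langle\v|(L^\dag L + C^\dag C)|\v\rangle = \Norm{\v}^2$, and similarly it preserves inner products by polarization. Since this isometry is defined on a subspace of $\H^+$ of dimension $n = \tfrac12\dim\H^+$, and any linear isometry between subspaces of a finite-dimensional Hilbert space extends to a unitary operator on the whole space, we may extend it to a unitary $U$ on $\H^+$; by construction $\U = L$, giving~(2).

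Finally $(1)\to(3)$: if $U$ literally realizes $L$ with \emph{any} number of ancillas, then for every normalized $\v$ we have $\Norm{L\v}^2 = \SP(U,\v) = \Norm{\U\v}^2 \le \Norm{U(\ket{\alpha_1}\otimes\v)}^2 = 1$, since $\U\v = \pi_0 U(\ket{\alpha_1}\otimes\v)$ is obtained from the unit vector $U(\ket{\alpha_1}\otimes\v)$ by a projection followed by a norm-preserving isomorphism. Hence $L$ is weakly contracting, and then $(4)\Rightarrow(3)$ as above. The main obstacle is the construction in $(4)\to(2)$ — specifically, verifying carefully that the proposed partial map is genuinely an isometry (not merely norm-preserving on vectors but inner-product preserving) and invoking the extension-to-unitary lemma — but this is standard once the positivity of $I-L^\dag L$ is in hand. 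Note that the equivalence of the ``some ancilla'' and ``one ancilla'' versions, $(1)\Leftrightarrow(2)$, falls out of the cycle: $(1)\to(3)\to(4)\to(2)\to(1)$.
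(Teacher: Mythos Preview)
Your proof is correct and follows the same overall cycle of implications as the paper, but two of the individual steps are argued differently, and in each case your version is somewhat more streamlined. For $(1)\to(3)$, the paper works with the explicit block structure of the matrix of $U$: it identifies $L$ as the upper-left $n\times n$ block, lets $X$ be the block directly beneath, computes $X^\dag X = I - L^\dag L$ from the orthonormality of the columns of $U$, and then invokes Proposition~\ref{pro:inner} to conclude that $I-L^\dag L$ has nonnegative eigenvalues. Your argument bypasses all of this by observing directly that $\Norm{L\v}=\Norm{\U\v}\le\Norm{U(\ket{\alpha_1}\otimes\v)}=\Norm{\v}$ since $\pi_0$ is norm-nonincreasing. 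For $(4)\to(2)$, the paper again appeals to Proposition~\ref{pro:inner} to manufacture column vectors realizing the Gram matrix $I-L^\dag L$, assembles the left half of $U$ from $L$ stacked over these columns, and completes via Gram--Schmidt; you instead take the positive square root $C=(I-L^\dag L)^{1/2}$ directly, exhibit the isometry $\ket0\otimes\v\mapsto\ket0\otimes L\v+\ket1\otimes C\v$, and extend to a unitary. The two constructions are equivalent (your $C$ is a particular choice of the paper's $X$), but your phrasing avoids the detour through Proposition~\ref{pro:inner}. What the paper's approach buys is that Proposition~\ref{pro:inner} is stated and proved in full generality earlier, so the present proof becomes a short application; what your approach buys is self-containment and a cleaner $(1)\to(3)$.
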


\begin{proof}
Clearly (2)$\to$(1). Taking into account that the Hermitian operator
$L^\dag L$ is diagonalizable, we see that (3) is equivalent to
 \begin{itemize}
 \item[$(3')$] All eigenvalues of $I - L^\dag L$ are $\ge0$.
 \end{itemize}
In the rest of the proof, we establish
$(1)\to(3')\to(2)$ as well as $(3)\leftrightarrow(4)$

Let $k$ be the dimension of the Hilbert space for the auxiliary
system. We work in some basis $\ket0, \dots, \ket{kn-1}$ of $\H^+$. To
simplify the exposition, we presume (without loss of generality
really) that the initial state $\ket{\alpha_1}$ and the final state
$\ket{\alpha_2}$ of the auxiliary system coincide, and that the first
$n$ basic states $\ket0, \dots, \ket{n-1}$ of the composite system are
exactly the basic states where the auxiliary system is in state
$\ket{\alpha_1}$.
According to \S\ref{sec:intro},
\begin{equation*}
 \U\ketpsi = \iota\pi U(\ket{\alpha_1}\otimes\ketpsi)
\end{equation*}
where $\pi$ is the projection
$\H^+ \to \{\ket{\alpha_1}\}\otimes\H$
and $\iota$ is the isomorphism
$\{\ket{\alpha_1}\}\otimes\H \to \H$.

(1)$\to(3')$\quad
Assume $L=\U$.
The matrix $\pi U$ is obtained from matrix $U$ by leaving the upper
$n$ rows intact and zeroing the other entries; the lower $kn-n$ rows
of $U$ play little role in our proof. Further, only the upper $n$
entries of the vector $\ket{\alpha_1}\otimes\ketpsi$ may be nonzero,
and so the right $kn-n$ columns of matrix $U$ play little role in our
proof. If $M$ is the upper left $n\times n$ minor of $U$ then $M\v =
L\v$ for all vectors $\v\in\H$. Thus matrix $L$ is the upper left
minor of matrix $U$.

Let $X$ be the lower left $(kn-n) \times n$ submatrix of $U$ (the
submatrix right under the minor $L$), and let $\ket{L_1}, \dots,
\ket{L_n}$ and $\ket{X_1}, \dots, \ket{X_n}$ be the columns of $L$ and
$X$ respectively. Since $U$ is unitary, we have
$$
\langle X_i | X_j \rangle=
  \begin{cases} -\langle L_i | L_j\rangle & \text{if $i\ne j$,}
\\[0.5em]
1-\langle L_i | L_j\rangle &\;\text{if $i=j$,}
  \end{cases}
$$

\smallskip\noindent
so that the matrix $X^\dag X = I - L^\dag L$. By the implication
(2)$\to$(4) of Proposition~\ref{pro:inner}, with $I - L^\dag L$
playing the role of $Q$, all eigenvalues of $I - L^\dag L$ are
non-negative.

\smallskip\noindent $(3')\to(2)$\quad
Assume $(3')$. By the implication $(4)\to(1)$ of
Proposition~\ref{pro:inner}, with $I - L^\dag L$ playing the role of
$Q$, there exist $n$-dimensional vectors $\ket{X_1}, \dots, \ket{X_n}$
such that the inner products $\langle X_i | X_j \rangle$ form the
matrix $I - L^\dag L$.

Now we are ready to construct the desired matrix $U$. Put $L$ in the
upper left corner of the matrix. Right under $L$ put the $n \times n$
matrix with columns $\ket{X_1}, \dots, \ket{X_n}$. This gives us the
first $n$ columns of $U$ which form an orthonormal basis $B$ for an
$n$-dimensional subspace of $\H^+$.  Extend the list $B$ with the
standard basis $\ket0, \dots, \ket{2n-1}$ for $\H^+$ and then apply
the Gram-Schmidt algorithm to the resulting list in order to obtain an
orthonormal basis for $\H^+$ extending $B$. This basis provides the
columns of the desired matrix $U$.
Thus claims (1),(2),(3) are equivalent. To finish the proof, it
suffices to establish that (3)$\iff$(4).

\smallskip\noindent $(3)\to(4)$\quad
  Assume (3). Let vectors $\ket{e_i}$ form an
  orthonormal basis of eigenvectors of $L^\dag L$ with eigenvalues
  $\lambda_1,\dots,\lambda_n$ respectively. Then $\ketpsi$ is a linear
  combination $\sum_i\alpha_i \ket{e_i}$.
We have
  \begin{align*}
    \Vert L\ketpsi\Vert^2
&\;=(L\ketpsi)^\dag(L\ketpsi)
 = \bra{\psi} L^\dag L \ketpsi\\
&\;= \sum_{i,j} {\overline{\alpha_j}}\,\alpha_i \bra{e_j} L^\dag L \ket{e_i}
=\sum_{i,j}{\overline{\alpha_j}}\,\alpha_i \lambda_i \langle e_j | e_i \rangle\\
&\;=\sum_i |\alpha_i|^2\lambda_i \leq\sum_i|\alpha_i|^2=\Norm{\ketpsi}^2.
  \end{align*}

Next assume (4) and let $\ketpsi$ be an eigenvector of $L^\dag L$ with
some eigenvalue $\lambda$.  Then
\begin{align*}
\lambda\Vert \ketpsi\Vert^2
&\; = \bra{\psi}\lambda \ketpsi
 = \bra{\psi} L^\dag L \ketpsi\\
&\; = (L\ketpsi)^\dag(L\ketpsi)
 = \Vert L\ketpsi\Vert^2
 \leq \Vert \ketpsi \Vert^2,
\end{align*}
so $\lambda\leq1$.
\end{proof}

\begin{corollary}\label{cor:convex}\cite{CW}
Any convex combination of unitary operators is literally realizable.
\end{corollary}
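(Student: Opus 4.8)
The plan is to reduce everything to Proposition~\ref{pro:literal}, which tells us that a nonzero linear operator $L$ on $\H$ is literally realizable, and in fact with a single ancilla, if and only if it is weakly contracting, i.e.\ $\Norm{L\v}\le\Norm{\v}$ for all $\v\in\H$. So the only thing I would need to verify is that a convex combination of unitary operators is weakly contracting.

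Write $L=\sum_k p_k V_k$ with each $V_k$ a unitary operator on $\H$, each $p_k\ge 0$, and $\sum_k p_k=1$. For an arbitrary $\v\in\H$ I would apply the triangle inequality in $\H$ together with the fact that each $V_k$ preserves norm:
\[
 \Norm{L\v}=\Norm{\sum_k p_k V_k\v}\le\sum_k p_k\Norm{V_k\v}=\sum_k p_k\Norm{\v}=\Norm{\v}.
\]
Hence $L$ is weakly contracting, and Proposition~\ref{pro:literal} immediately gives literal realizability (with one ancilla).

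There is no real obstacle here; the argument is a one-line estimate. The only point worth a remark is the degenerate possibility that the convex combination is the zero operator, for instance $\tfrac12 I+\tfrac12(-I)$; this is excluded by the standing convention that $L$ ranges over \emph{nonzero} operators, and in any case the zero operator is literally realized by any $U$ mapping $\{\ket{\alpha_1}\}\otimes\H$ into the orthogonal complement of $\{\ket{\alpha_2}\}\otimes\H$.
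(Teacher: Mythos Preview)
Your proof is correct and follows essentially the same approach as the paper: both reduce to Proposition~\ref{pro:literal} by showing a convex combination of unitaries is weakly contracting, you via the triangle inequality and the paper via convexity of balls, which amount to the same estimate. Your remark on the degenerate zero case is a reasonable addition, though the paper simply relies on the standing convention that $L$ is nonzero.
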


\begin{proof}
  Suppose $L$ is a convex combination of some unitary operators
  $U_i$. Then, for any vector $\ketpsi$, $L\ketpsi$ is a convex
  combination (with the same coefficients) of the vectors
  $U_i\ketpsi$, each of which has the same length as $\ketpsi$.
  Since balls in Hilbert space are convex, it follows that $L\ketpsi$
  has at most the same length as $\ketpsi$.
  So $L$ is weakly contracting. By Proposition~\ref{pro:literal}, $L$
  is literally realizable.
\end{proof}

Childs and Wiebe prove more in \cite{CW}. In particular, the $U$ that
literally realizes a convex combination of two $U_i$'s can be computed
by a circuit consisting of (a)~unitary operators that act only on the
ancilla and (b)~the controlled $U_i$ gates.

\subsection{Literal realization vs. exact realization} 

\begin{proposition}\label{pro:exact2literal}
  A unitary operator $U$ on $\H^+$ exactly realizes a nonzero linear
  operator $L$ if and only if it literally realizes some nonzero
  multiple $cL$ of $L$.
\end{proposition}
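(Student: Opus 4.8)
The plan is to unwind the definitions on both sides and see that the only gap between "literally realizes" and "exactly realizes" is precisely the freedom to rescale $L$ so that it becomes weakly contracting. Recall from the discussion in \S\ref{sec:intro} that for a unitary $U$ on $\H^+$, the induced operator $\U$ is defined by $\U\ketpsi = \iota\pi U(\ket{\alpha_1}\otimes\ketpsi)$, and that $U$ exactly realizes $L$ when $\U\ketpsi$ is nonzero and collinear with $L\ketpsi$ for every $\ketpsi$ with $L\ketpsi\ne\vec0$.

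For the "if" direction, suppose $U$ literally realizes $cL$ for some nonzero scalar $c$, i.e.\ $\U = cL$. Then for every $\ketpsi$ with $L\ketpsi\ne\vec0$ we have $\U\ketpsi = c\,L\ketpsi$, which is nonzero and collinear with $L\ketpsi$; so $U$ exactly realizes $L$. This direction is immediate and requires nothing beyond the definitions.

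For the "only if" direction, suppose $U$ exactly realizes $L$. The key observation is that $\ketpsi\mapsto\U\ketpsi$ is a \emph{linear} map on $\H$ (being the composition of the linear maps $\ketpsi\mapsto\ket{\alpha_1}\otimes\ketpsi$, $U$, $\pi$, and $\iota$), and it agrees up to a scalar multiple with the linear map $L$ on every $\ketpsi$ outside $\kernel L$. The main point to establish is that the scalar can be taken to be the \emph{same} for all such $\ketpsi$: this is a standard rigidity fact, that if two linear operators $A,B$ on a vector space satisfy "$A\v$ is collinear with $B\v$ for every $\v$ with $B\v\ne\vec0$," then $A = cB$ on $\range{}$... more precisely $A$ and $B$ restricted to the domain agree up to a single global scalar. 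I would prove this directly: pick $\ketpsi_1$ with $L\ketpsi_1\ne\vec0$ and let $c$ be the scalar with $\U\ketpsi_1 = c\,L\ketpsi_1$; then for any other $\ketpsi_2$ with $L\ketpsi_2\ne\vec0$, if $L\ketpsi_1$ and $L\ketpsi_2$ are collinear the conclusion follows from linearity along the line through $\ketpsi_1$ and $\ketpsi_2$, and if they are linearly independent one compares $\U(\ketpsi_1+\ketpsi_2) = \U\ketpsi_1 + \U\ketpsi_2$ with $L(\ketpsi_1+\ketpsi_2) = L\ketpsi_1 + L\ketpsi_2$, using that $\U(\ketpsi_1+\ketpsi_2)$ is a scalar times $L(\ketpsi_1+\ketpsi_2)$ and that $L\ketpsi_1, L\ketpsi_2$ are independent, to force the two scalars to agree with $c$. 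On $\kernel L$, the vector $\U\ketpsi$ must also vanish: otherwise $\U$ and $cL$ would differ, and a perturbation argument (adding a small multiple of a kernel vector to $\ketpsi_1$) would violate collinearity — or, more cleanly, one notes $\U = cL$ has already been forced on a spanning set once one checks the kernel directly. I would handle the kernel by the same additivity trick: for $\ketpsi_0\in\kernel L$ and $\ketpsi_1\notin\kernel L$, $\U(\ketpsi_1+\ketpsi_0)$ is collinear with $L(\ketpsi_1+\ketpsi_0) = L\ketpsi_1$, and also equals $\U\ketpsi_1 + \U\ketpsi_0 = c\,L\ketpsi_1 + \U\ketpsi_0$, forcing $\U\ketpsi_0$ to be a multiple of $L\ketpsi_1$; running this for two $\ketpsi_1$'s with independent images (when $\kernel L\neq\H$, which holds since $L\neq0$) pins $\U\ketpsi_0=\vec0$. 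Hence $\U = cL$ everywhere, i.e.\ $U$ literally realizes $cL$, with $c\ne0$ since $\U\ketpsi_1\ne\vec0$.

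The main obstacle — really the only nontrivial point — is the rigidity step just described: passing from "collinear pointwise" to "a single global scalar." Everything else is definition-chasing. One should also remark that this proposition, combined with Proposition~\ref{pro:literal}, immediately yields the characterization of exactly realizable $L$ (namely those $L$ some nonzero multiple of which is weakly contracting, equivalently \emph{all} nonzero $L$ after rescaling by $1/\sqrt{\lambda_{\max}}$), which is Corollary~\ref{cor:exact} and part of Theorem~\ref{thm:exact}(1).
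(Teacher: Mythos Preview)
Your overall strategy is the paper's: prove a rigidity statement (two linear maps that are pointwise collinear must differ by a global scalar) and apply it to $\U$ and $L$. The paper packages the rigidity as a separate Lemma and then separately verifies the implication $L\v=\vec0\Rightarrow\U\v=\vec0$ so that the lemma's hypothesis holds on all of $\H$.

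There is, however, a genuine gap in your kernel argument. You write ``running this for two $\ketpsi_1$'s with independent images \dots\ pins $\U\ketpsi_0=\vec0$'', but when $L$ has rank $1$ there are no two such vectors, and the argument collapses. The same difficulty hits your treatment of the ``$L\ketpsi_1$ and $L\ketpsi_2$ collinear'' sub-case in the rigidity step, which you wave away with ``linearity along the line'': in rank $1$ \emph{every} pair falls into this sub-case, and linearity alone does not force the scalars to agree. What is missing throughout is the \emph{nonzero} clause in the definition of exact realization (that $\U\ketpsi\neq\vec0$ whenever $L\ketpsi\neq\vec0$), which you never invoke. The paper's kernel argument uses exactly this: given $\v$ with $L\v=\vec0$, pick $\w$ with $L\w\neq\vec0$; then $L(\v+\w)=L\w\neq\vec0$, so $\U\w$ and $\U(\v+\w)$ are both nonzero multiples of $L\w$, hence $\U\v=b\,\U\w$ for some scalar $b$, i.e.\ $\U(\v-b\w)=\vec0$. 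Now the contrapositive of the nonzero clause forces $L(\v-b\w)=\vec0$, so $bL\w=\vec0$ and $b=0$. This step works regardless of the rank of $L$ and is precisely what your argument needs.
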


\begin{proof}
  The if part of the proposition is obvious: if $cL = \U$ then $L\v$
  and $\U\v$ are collinear for every $\v\in\H$. To prove the only-if
  part, we need an auxiliary result from linear algebra.

\begin{lemma}\label{lem:collinear}
Let $D, R$ be finite-dimensional complex vector spaces, and let $A, B$
be linear transformations from $D$ to $R$ such that $A\v$ and $B\v$
are collinear for every $\v\in D$. Then $A,B$ are collinear, that is
$A = cB$ for some nonzero $c$.
\end{lemma}

\begin{proof}[Proof of Lemma~\ref{lem:collinear}]
  First we treat the case where $B$ is one-to-one. Let $d$ be the
  dimension of $D$. If $d=1$, the lemma is obvious, so we may assume
  that $d\ge2$. Let vectors $\vec v_1, \dots, \vec v_d$ in $D$ form a
  basis in $D$. Since $B$ is one-to-one, the vectors $B\vec v_i$ are
  linearly independent. By the collinearity premise, there are nonzero
  complex numbers $c_i$ such that $A\vec v_i = c_i B\vec v_i$. It
  suffices to show that all the numbers $c_i$ are equal.

For any $i < j$, let $\vec u = \vec v_i + \vec v_j$. By the
collinearity premise, $A\vec u = cB\vec u$ for some $c$.  We
have \begin{align*} A\vec u = &\; cB\vec u = cB\vec v_i + cB\vec
  v_j,\\ A\vec u = &\; A(\vec v_i + \vec v_j) = c_iB\vec v_i +
  c_jB\vec v_j, \end{align*} so that $cB\vec v_i + cB\vec v_j =
c_iB\vec v_i + c_jB\vec v_j$.  But vectors $B\vec v_i, B\vec v_j$ are
independent. Then $c = c_i$ and $c = c_j$ and therefore $c_i = c_j$.

Second we treat the case where $B$ is not one-to-one.
Without loss of generality we
may suppose that $B$ is nonzero. Clearly, $A\vec u = \vec 0$ whenever
$B\vec u=\vec0$.  That is, $A$ vanishes on the kernel $K$ of $B$. So
we can regard both $A$ and $B$ as being defined on the quotient
$D/K$, and of course $B$ is one-to-one on $D/K$, so that the preceding
discussion applies.

Therefore there is a nonzero complex number $c$ such that $A = cB$ on
$D/K$. We check that $A = cB$ on $D$. Pick any nonzero vector
$\vec v\in D$. Obviously $A\vec v = cB\vec v$ if $\vec v\in
K$. Suppose that $\vec v\notin K$. By the collinearity premise, $A\vec
v = c'B\vec v$ for some nonzero complex number $c'$. This equality
results in a similar equality $A[\vec v] = c'B[\vec v]$ in the
quotient $D/K$ where we also have $A[\vec v] = cB[\vec v]$. Since
vector $B[\vec v]$ is not zero, it follows that $c' = c$.
\end{proof}

Now we are ready to prove the only-if part of the proposition. Assume
that $U$ exactly realizes $L$, so that $\U\v$ is nonzero and collinear
with $L\v$ whenever $L\v\ne\vec0$. If the implication
\[
  L\v = \vec0 \to \U\v = \vec0.
\]
holds then, by Lemma~\ref{lem:collinear}, some nonzero multiple $cL$
of $L$ coincides with $\U$ and therefore is literally
realizable. Thus, it suffices to prove the implication.

Suppose $L\v =\vec0$. Since $L$ is nonzero, there is a vector $\w$
orthogonal to the kernel of $L$, so $L\w$ and $L(\v + \w)$ are equal
and nonzero. Hence $\U\w$ and $\U(\v + \w)$ are nonzero and collinear,
so $\U\v = b\U\w$ and therefore $\U(\v-b\w) =\vec0$ for some $b$. But
then $L(\v-b\w) = \vec0$, so that $b=0$ and $\U\v=\vec0$.
\end{proof}

\subsection{Guaranteed success probability} 
\label{sub:gsp}

\begin{proposition}\label{pro:gsp}
  If a unitary operator $U$ literally realizes $L$ then the guaranteed
  success probability of $U$ is the least eigenvalue
  $\lambda_{\min}$ of $L^\dag L$.
\end{proposition}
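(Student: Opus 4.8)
The plan is to reduce the statement directly to the variational (Rayleigh--Ritz) characterization of the smallest eigenvalue of the positive operator $L^\dag L$. Since $U$ literally realizes $L$, we have $\U = L$ by Definition~\ref{def:literal}, and hence $\SP(U,\ketpsi) = \Norm{\U\ketpsi}^2 = \Norm{L\ketpsi}^2 = \bra{\psi} L^\dag L\ketpsi$ for every normalized $\ketpsi\in\H$. Therefore the guaranteed success probability of $U$ equals $\min\{\bra{\psi} L^\dag L\ketpsi : \Norm{\ketpsi} = 1\}$, and it suffices to show that this minimum is $\lambda_{\min}$.

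First I would fix an orthonormal basis $\ket{e_1},\dots,\ket{e_n}$ of eigenvectors of the Hermitian operator $L^\dag L$, with eigenvalues $\lambda_1,\dots,\lambda_n\ge0$; such a basis exists because $L^\dag L$ is positive Hermitian, exactly as used in the proof of Proposition~\ref{pro:literal}. Writing a normalized $\ketpsi$ as $\sum_i\alpha_i\ket{e_i}$ with $\sum_i|\alpha_i|^2 = 1$, the same computation as in the $(3)\to(4)$ part of that proof gives $\bra{\psi} L^\dag L\ketpsi = \sum_i|\alpha_i|^2\lambda_i$, a convex combination of the eigenvalues. Hence $\bra{\psi} L^\dag L\ketpsi\ge\lambda_{\min}$ for every unit vector $\ketpsi$, so the guaranteed success probability of $U$ is at least $\lambda_{\min}$.

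For the reverse inequality I would exhibit an explicit witness: take $\ketpsi$ to be a unit eigenvector of $L^\dag L$ for the eigenvalue $\lambda_{\min}$; then $\SP(U,\ketpsi) = \bra{\psi} L^\dag L\ketpsi = \lambda_{\min}$. This also shows that the infimum is actually attained, consistent with the remark in \S\ref{sec:intro} about the legitimacy of writing $\min$. Combining the two inequalities, the guaranteed success probability of $U$ is exactly $\lambda_{\min}$.

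Since every step is a direct consequence of the spectral theorem for the positive Hermitian operator $L^\dag L$ --- already invoked in the proof of Proposition~\ref{pro:literal} --- I do not anticipate any real obstacle. The only point that deserves an explicit word is that the minimum over the unit sphere is genuinely attained, and the eigenvector witness above settles that on the nose.
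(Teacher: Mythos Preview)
Your proposal is correct and follows essentially the same approach as the paper: both reduce $\SP(U,\ketpsi)$ to $\bra{\psi}L^\dag L\ketpsi$, expand in an orthonormal eigenbasis of $L^\dag L$ to get the convex combination $\sum_i|\alpha_i|^2\lambda_i\ge\lambda_{\min}$, and then exhibit a minimal eigenvector as the witness for equality. The paper's argument is line-for-line the Rayleigh--Ritz computation you describe.
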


\begin{proof}
  Let $ \ketpsi$ range over the unit sphere of $\C^n$. Recall from
  \S\ref{sec:intro} that the guaranteed success probability of $U$ is
  $\min_{\ketpsi}\SP(U,\ketpsi)$ where $\SP(U,\ketpsi) = \Norm{\U\ketpsi}^2$,
and assume that $U$ literally realizes $L$.
Then
$$
  \SP(U,\ketpsi) = \Norm{L\ketpsi}^2.
$$

There exist eigenvectors $\ket{e_1}, \dots, \ket{e_n}$ of $L^\dag L$,
with eigenvalues $\lambda_{\max} = \lambda_1 \ge \cdots \ge \lambda_n
= \lambda_{min}$ respectively, that form an orthonormal basis for
$\H$. An arbitrary unit vector $\ketpsi$ in $\H$ is a linear
combination $\sum_{i=1}^n\alpha_i \ket{e_i}$. We have
  \begin{align*}
    \SP(U,\ketpsi)
&\;= \Vert L\ketpsi\Vert^2
 =(L\ketpsi)^\dag(L\ketpsi)
 = \bra{\psi} L^\dag L \ketpsi\\
&\;= \sum_{i,j}\overline{\alpha_j}\alpha_i \bra{e_j} L^\dag L \ket{e_i}
 = \sum_{i,j} \overline{\alpha_j}\alpha_i \lambda_i \langle e_j | e_i \rangle\\
&\;=\sum_i |\alpha_i|^2\lambda_i
    \ge \lambda_{\min}\sum_i|\alpha_i|^2 = \lambda_{\min}.
  \end{align*}
  In particular $\SP(U,\ket{e_n}) = \lambda_n = \lambda_{\min}$.
\end{proof}

\begin{corollary}
  Suppose that $U$ literally realizes $L$. Then $L$ is invertible if and only if the guaranteed success probability of $U$ is positive.
\end{corollary}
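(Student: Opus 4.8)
The plan is to read this off from Proposition~\ref{pro:gsp} together with a standard linear-algebra fact relating the kernels of $L$ and $L^\dag L$. Since $U$ literally realizes $L$, Proposition~\ref{pro:gsp} tells us that the guaranteed success probability of $U$ is exactly $\lambda_{\min}$, the least eigenvalue of the positive operator $L^\dag L$. So the claim reduces to the purely algebraic statement that $L$ (an operator on the finite-dimensional space $\C^n$) is invertible if and only if $\lambda_{\min}>0$.

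First I would note that $\lambda_{\min}\ge0$ automatically, because $L^\dag L$ is positive semidefinite: for any $\v$, $\bra{v}L^\dag L\ket{v}=\Norm{L\v}^2\ge0$. Hence "$\lambda_{\min}>0$" is the same as "$0$ is not an eigenvalue of $L^\dag L$", i.e.\ "$L^\dag L$ is invertible". Next I would observe that $\kernel{L}=\kernel{L^\dag L}$: the inclusion $\kernel{L}\subseteq\kernel{L^\dag L}$ is trivial, and conversely if $L^\dag L\v=\vec0$ then $\Norm{L\v}^2=\bra{v}L^\dag L\ket{v}=0$, so $L\v=\vec0$. Since both $L$ and $L^\dag L$ are operators on the same finite-dimensional space, each is invertible iff it is injective iff its kernel is trivial; combining this with $\kernel{L}=\kernel{L^\dag L}$ gives that $L$ is invertible iff $L^\dag L$ is invertible iff $\lambda_{\min}>0$.

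Putting the pieces together: $L$ invertible $\iff$ $\lambda_{\min}>0$ $\iff$ the guaranteed success probability of $U$ (which equals $\lambda_{\min}$ by Proposition~\ref{pro:gsp}) is positive. I do not anticipate a genuine obstacle here; the only point deserving a word of care is that we are working on a finite-dimensional space, so that injectivity, surjectivity, and invertibility coincide for $L$ and for $L^\dag L$ — the corollary would need rephrasing in infinite dimensions, but under the paper's standing assumption $\H=\C^n$ this is automatic.
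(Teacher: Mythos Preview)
Your proof is correct and follows exactly the line the paper intends: the corollary is stated without proof there, as an immediate consequence of Proposition~\ref{pro:gsp} together with the elementary fact that $L$ is invertible iff $\lambda_{\min}(L^\dag L)>0$. Your kernel argument for the latter is the natural one.
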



\subsection{Proof of Exact Realization Theorem} 
\label{sub:exact}

\begin{proof}[Proof of Claim~(1) of Theorem~\ref{thm:exact}]
If some nonzero multiple $cL$ of the given linear operator $L$ on $\H$
is literally realizable then, by Proposition~\ref{pro:literal}, $cL$
is literally realizable by some unitary operator $U$ with just one
ancilla. But then $U$ exactly realizes $L$, and one ancilla
suffices. Thus it suffices to find a complex number $c\ne0$ such that
$cL$ is literally realizable.

If $\lambda_{\max}\le1$ set $c=1$; otherwise set $c =1
/\sqrt{\lambda_{\max}}$. In either case, by
Proposition~\ref{pro:literal}, $cL$ is literally realizable.
\end{proof}

\begin{proof}[Proof of Claim~(2) of Theorem~\ref{thm:exact}]
Assume that $U$ exactly realizes $L$. By
Proposition~\ref{pro:exact2literal}, $U$ literally realizes some
nonzero multiple $M = cL$ of $L$. Let $\mu_{\min}$ and $\mu_{max}$ be
the minimal and maximal eigenvalues of $M^\dag M$ respectively. Taking
into account that $M$ is nonzero and invoking
Proposition~\ref{pro:literal}, we have
\[
  0 < \mu_{\max} = |c|^2 \lambda_{\max} \le 1.
\]
By Proposition~\ref{pro:gsp}, the guaranteed success probability of $U$ is \[
  \mu_{\min} = |c|^2 \lambda_{\min} \le \lambda_{\min}/\lambda_{\max}.
\]
There is a real $d\ge1$ such that $|cd|^2 \lambda_{\max} =
1$. Redefine $M$ from $cL$ to $cdL$. The unitary $dU$ literally
realizes $M$ and therefore exactly realizes $L$. We have
\[
  0 < \mu_{\max} = |cd|^2 \lambda_{\max} = 1,
\]
and the guaranteed success probability of $dU$ is
\[
  \mu_{\min} = |cd|^2 \lambda_{\min} = \lambda_{\min}/\lambda_{\max}.
\]
\end{proof}

\section{Approximate Realization Theorem} 
\label{sec:appr}

In this section and in the rest of the paper, we use notation and
definitions from \S\ref{sec:intro}. In particular, every nonzero
vector $\v = (a_1,\dots,a_n)$ in $\H$ represents a point $\Q\v$ in
$\P$. The complex numbers $a_1,\dots,a_n$ are the homogeneous
coordinates of $\Q\v$; at least one of the homogeneous coordinates is
nonzero. Further, any linear transformation $L$ of $\H$ induces a
partial transformation $\QL$ of $\P$. If $L$ is invertible then $\QL$
is total. Such total transformations $\QL$ are known as projective
linear.

\subsection{Projective linear transformations} 
\label{sub:pl-trans}

As usual, nonzero vectors $\v_1, \dots, \v_m$ in $\C^n$ are said to be
\emph{in general position} if, for any $k\le n$, any $k$ of the $m$
vectors are linearly independent. Points $\Q\v_1, \dots, \Q\v_k$ are
\emph{in general position} if the vectors $\v_1, \dots, \v_k$ are so.

\begin{lemma}\label{lem:n+1}
  If points $p_1,\dots,p_{n+1}$ are in general position and points
  $q_1,\dots,q_{n+1}$ are in general position then there is a unique
  projective linear transformation $g$ such that every $g(p_i) =
  q_i$.
\end{lemma}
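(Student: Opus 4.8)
The plan is to reduce the statement to the classical fact that the projective linear group acts simply transitively on ordered frames (collections of $n+1$ points in general position). First I would recall that a projective linear transformation of $\P$ is $\QM$ for some invertible $M \in GL_n(\C)$, and that $\QM = \Q{M'}$ if and only if $M' = cM$ for some nonzero scalar $c$; the ``if'' is immediate and the ``only if'' is Lemma~\ref{lem:collinear} applied to $M$ and $M'$ (both injective). So it suffices to produce an invertible matrix $g$, unique up to scalar, with $g \v_i$ collinear with $\w_i$ for $i = 1,\dots,n+1$, where $\v_i,\w_i$ are representing vectors for $p_i,q_i$.

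For existence I would exploit the general-position hypothesis on the first $n$ points. Since $\v_1,\dots,\v_n$ form a basis of $\C^n$, write $\v_{n+1} = \sum_{i=1}^n a_i \v_i$; general position of $p_1,\dots,p_{n+1}$ forces every $a_i \ne 0$ (if some $a_i = 0$ then $\v_{n+1}$ together with the other $n-1$ of the $\v_j$'s would be dependent, $n$ vectors that are dependent, contradicting general position). Similarly write $\w_{n+1} = \sum_i b_i \w_i$ with every $b_i \ne 0$. Now define $g$ on the basis by $g\v_i = (b_i/a_i)\w_i$; this is invertible because the $\w_i$ are independent and the scalars are nonzero. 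Then $g\v_i$ is collinear with $\w_i$ for $i \le n$, and $g\v_{n+1} = \sum_i a_i g\v_i = \sum_i b_i \w_i = \w_{n+1}$, so collinearity holds at $i = n+1$ as well. Hence $\QM$ with $M = g$ sends $p_i \mapsto q_i$ for all $i$.

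For uniqueness, suppose $\QM$ and $\Q{M'}$ both realize the transformation, with $M, M'$ invertible. Then $M\v_i$ and $M'\v_i$ are collinear for each $i \le n+1$. Using the basis $\v_1,\dots,\v_n$: for each $i \le n$ there is $c_i \ne 0$ with $M'\v_i = c_i M\v_i$; applying the same argument as in Lemma~\ref{lem:collinear} to the vector $\v_{n+1} = \sum a_i\v_i$ with all $a_i \ne 0$, together with collinearity of $M\v_{n+1}$ and $M'\v_{n+1}$, forces all the $c_i$ to be equal to a common value $c$. Hence $M' = cM$ on the basis, so $M' = cM$, so $\QM = \Q{M'}$ as transformations of $\P$. (Alternatively one can simply invoke Lemma~\ref{lem:collinear} directly on $M$ and $M'$, which are equal on the basis up to collinearity, to get $M = c M'$.)

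I do not expect a genuine obstacle here; the only point requiring a little care is the deduction that the coordinates $a_i$ and $b_i$ are all nonzero, which is exactly where the full general-position hypothesis on $n+1$ points (not just the first $n$) is used, and the propagation-of-scalars argument for uniqueness, which is entirely parallel to the one already carried out in the proof of Lemma~\ref{lem:collinear}.
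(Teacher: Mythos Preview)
Your argument is correct and follows essentially the same route as the paper: both proofs write the $(n{+}1)$st vector as a combination of the first $n$ with all coefficients nonzero, then solve for the unique (up to scalar) system of multipliers. The paper first reduces to the standard frame $(\Q\e_1,\dots,\Q\e_{n+1})$ and proves existence/uniqueness of a map from that frame to an arbitrary one, whereas you work directly between two arbitrary frames; the linear-algebra core is identical.

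One small correction: your parenthetical alternative, ``simply invoke Lemma~\ref{lem:collinear} directly on $M$ and $M'$'', does not work as stated. That lemma requires $M\v$ and $M'\v$ to be collinear for \emph{every} $\v$, not just for the $n{+}1$ frame vectors, and establishing that for all $\v$ is precisely the propagation-of-scalars step you already carried out. Your main uniqueness argument is fine; just drop the parenthetical.
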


\begin{proof}
Let $\e_{n+1}$ be the sum $\e_1 + \cdots \e_n$ of the basic vectors in
$\H$. It is easy to check that vectors $\e_1,\dots,\e_{n+1}$ are in
general position. It suffices to prove that for any vectors
$\v_1,\dots,\v_{n+1}$ in general position there exists a unique, up to a
constant factor, invertible linear operator $L$ on $\H$ such that
every $L\e_i$ is collinear with $\v_i$.

First we prove the uniqueness.  Suppose that $L$ is a linear operator
such that every $L\e_i$ is collinear with $\v_i$, and so there are
nonzero complex numbers $z_i$ such that
\begin{equation*}
L\e_i = z_i \v_i \text{\quad for }i=1,\dots,n+1.
\end{equation*}
In the basis $\e_1,\dots,\e_n$, the column vector $\e_i$ with $i\leq n$
has 1 at row $i$ and zeroes everywhere else, so that the $i\th$ column
of the desired $L$ is $z_i\v_i$. Since $\e_{n+1} = \e_1 + \cdots + \e_n$,
we have
  \begin{equation*}
    z_{n+1}\v_{n+1} = \sum_{i=1}^n z_i\v_i \,.
  \end{equation*}
Since vectors $\v_1,\dots,\v_n$ are independent, $\v_{n+1} = a_1\e_1
+\cdots+ a_n\e_n$ for some complex numbers $a_1, \dots, a_n$, so that
$z_1 = a_1z_{n+1}, \dots, z_n = a_nz_{n+1}$.
Since vectors $\v_0, \dots, \v_n$ are in general position, the
coefficients $a_1, \dots, a_n$ are nonzero.
Let $L_0$ be the invertible matrix with columns
$a_1\v_1,\dots,a_n\v_n$. Then $L = z_{n+1} L_0$.

Second we prove the existence. To this end, check that every $L_0\e_1
= a_1\v_1, \dots, L_0\e_n = a_n\v_n$ and $L_0\e_{n+1} = \v_{n+1}$.
\end{proof}

Recall that suites are points of $\P^{2\ell}$ where the first $\ell$
coordinates are distinct and that a suite $\tau = (p_1, \dots,
p_{2\ell})$ specifies the transformation $\tau(p_j) = p_{\ell+j}$ with
domain $\{p_1, \dots, p_\ell\}$.

\begin{definition}[PL manifold]
The \emph{projective linear manifold} PL consists of the suites
specifying partial transformation of $\P$ that can be extended to
projective linear transformations of $\P$.
\end{definition}

We say that, over $\H$, a sequence $M_1, M_2, \dots$ of linear
operators converges to a linear operator $L$ if, for every vector
$\v$, the sequence $M_i\v$ converges to $L\v$.

\begin{lemma}\label{lem:approximants}
  Over $\H$, for every linear operator $L$ on $\H$ there is a
  sequence $M_1, M_2, \dots$ of invertible linear operators that
  converges to $L$.
\end{lemma}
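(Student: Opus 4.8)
The plan is to show that the invertible operators are dense among all operators in the relevant topology, and in fact to produce an explicit convergent sequence. The key observation is that a linear operator $L$ on $\H = \C^n$ is invertible if and only if $\det L \ne 0$, and the determinant is a polynomial (hence continuous) function of the matrix entries. So invertibility is a Zariski-open, and in particular topologically dense, condition: the non-invertible operators form a proper algebraic subvariety of the space of all $n\times n$ matrices.

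\medskip

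Concretely, first I would fix a basis and identify $L$ with its matrix. Then I would consider the one-parameter family $M(t) = L + tI$ for real $t$. The map $t \mapsto \det(M(t)) = \det(L + tI)$ is a monic polynomial in $t$ of degree $n$, hence has at most $n$ real roots; equivalently, $M(t)$ fails to be invertible only when $-t$ is an eigenvalue of $L$, and there are at most $n$ such values. Therefore for all sufficiently small $t > 0$ (say $0 < t < t_0$ where $t_0$ is less than the smallest positive real root, or simply less than the smallest modulus of a nonzero eigenvalue), $M(t)$ is invertible. Taking $M_i = L + \tfrac{1}{i}I$ for $i$ large enough (or for all $i$, discarding or shifting the finitely many bad indices), I get a sequence of invertible operators. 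For any vector $\v$, we have $M_i \v = L\v + \tfrac{1}{i}\v \to L\v$, which is exactly the required notion of convergence over $\H$. (Since $\H$ is finite-dimensional, this pointwise convergence is the same as norm convergence of the operators, but pointwise is all we need.)

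\medskip

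I do not anticipate any real obstacle here; the statement is essentially the standard fact that $GL_n(\C)$ is dense in the space of matrices, and the only care needed is to match the paper's slightly unusual definition of operator convergence (pointwise on vectors), which the family $L + t I$ handles trivially. One small bookkeeping point: to make the sequence genuinely infinite and entirely inside the invertibles, I would either start the sequence at an index $i > 1/t_0$, or note that $L + \tfrac{1}{i}I$ is singular for at most finitely many $i$ and simply delete those terms; the tail still converges to $L$, so this is harmless. An alternative, if one wanted to avoid even mentioning eigenvalues, is to perturb $L$ along a generic line in matrix space and invoke that a nonzero polynomial has finitely many roots — but the shift by $\tfrac1i I$ is the cleanest choice and I would use it.
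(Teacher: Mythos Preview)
Your argument is correct: perturbing by $\tfrac{1}{i}I$ and using that $\det(L+tI)$ is a nonzero polynomial in $t$ with finitely many roots is a perfectly valid proof, and the pointwise convergence is immediate.

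The paper takes a different route. It first reduces to the case of a positive operator via the polar decomposition $L=UL'$, then diagonalizes $L'$ in an orthonormal eigenbasis and replaces each zero diagonal entry by $1/i$ to get $M_i$. Compared with your approach, the paper's construction has the minor cosmetic advantage that every $M_i$ is invertible from the start (no finitely many bad indices to discard), at the cost of invoking the polar decomposition and the spectral theorem. Your approach is more elementary and more direct; either proof is entirely adequate for the lemma as stated, and nothing later in the paper relies on the specific form of the approximants.
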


\begin{proof}
  Without loss of generality, $L$ is positive. Indeed, by the Polar
  Decomposition Theorem, $L = UL'$ for some unitary $U$ and positive
  $L'$. If invertible linear operators $M'_i$ converge to $L'$ then
  $UM'_i \to UL'$.

  Fix an orthonormal basis for $\H$ composed of eigenvectors of
  $L$. In that basis, $L$ is represented by a diagonal matrix. The
  (matrix for the) desired $M_i$ is obtained from $L$ by replacing
  every zero on the diagonal with $1/i$.
\end{proof}

\begin{proposition}[PL approximants suffice]
\label{pro:approximants}
For every $\eps$-approximable suite $\sigma$ there is a PL suite that
$\eps$-approximates $\sigma$.
\end{proposition}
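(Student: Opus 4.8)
The plan is to start from the exactly realizable suite that witnesses the $\eps$-approximability of $\sigma$, realize it by a linear operator $L$, and then replace $L$ by a nearby \emph{invertible} operator supplied by Lemma~\ref{lem:approximants}. The perturbation will be kept small enough that the resulting suite still lies in the $\eps$-neighborhood of $\sigma$, and since it is induced by an invertible operator it will automatically be a $\PL$ suite.

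First I would unpack the hypothesis. Write $\sigma = (p_1,\dots,p_{2\ell})$. By Definition~\ref{def:appr} there is an exactly realizable suite $\tau = (r_1,\dots,r_{2\ell})$ with $\FS(p_k,r_k) < \eps$ for every $k$, and by Definition~\ref{def:exact} there is a linear operator $L$ on $\H$ such that $\QL$ agrees with $\tau$ on $\dom{\tau} = (r_1,\dots,r_\ell)$. I would then choose nonzero vectors $\v_i\in\H$ representing $r_i$ for $i=1,\dots,\ell$; since $\QL$ is defined at $r_i$ and sends it to $r_{\ell+i}$, we get $L\v_i\ne\vec0$ and $\Q(L\v_i)=r_{\ell+i}$. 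Next I would apply Lemma~\ref{lem:approximants} to obtain invertible operators $M_1,M_2,\dots$ with $M_j\v\to L\v$ for every $\v$, and for each $j$ I would set $\tau_j = (r_1,\dots,r_\ell,\Q(M_j\v_1),\dots,\Q(M_j\v_\ell))$. This is well defined because invertibility of $M_j$ forces $M_j\v_i\ne\vec0$; its first $\ell$ entries are the distinct points $r_1,\dots,r_\ell$, so $\tau_j$ is a genuine suite; and its transformation $r_i\mapsto\Q(M_j\v_i)$ is the restriction of the projective linear transformation $\QM_j$, so $\tau_j\in\PL$.

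Finally I would fix $j$ large enough that $\tau_j$ $\eps$-approximates $\sigma$. The quotient map $\v\mapsto\Q\v$ is continuous on $\H-\{\vec0\}$ for the Fubini--Study distance, so $M_j\v_i\to L\v_i\ne\vec0$ gives $\FS(\Q(M_j\v_i),r_{\ell+i})\to 0$ for each of the finitely many $i$. Because the inequalities $\FS(p_k,r_k)<\eps$ are strict, the slack $\delta = \eps - \max_{1\le i\le\ell}\FS(p_{\ell+i},r_{\ell+i})$ is positive; choosing $j$ with all $\ell$ quantities $\FS(\Q(M_j\v_i),r_{\ell+i})$ below $\delta$ and applying the triangle inequality yields $\FS(p_{\ell+i},\Q(M_j\v_i))<\eps$ for $i=1,\dots,\ell$, while $\FS(p_i,r_i)<\eps$ already holds for the first halves. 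Hence $\tau_j$ is a $\PL$ suite that $\eps$-approximates $\sigma$. I do not expect a serious obstacle here; the only two points that need care are the continuity of the quotient map $\v\mapsto\Q\v$ in the Fubini--Study metric and the bookkeeping that converts the strict inequalities in the definition of $\eps$-approximation into the positive slack $\delta$ used above.
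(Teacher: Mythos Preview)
Your proof is correct and follows essentially the same route as the paper's: start from the exactly realizable $\tau$ that $\eps$-approximates $\sigma$, realize it by a linear operator $L$, approximate $L$ by an invertible $M$ via Lemma~\ref{lem:approximants}, and use continuity of $Q$ together with the strict-inequality slack to show the resulting PL suite still $\eps$-approximates $\sigma$. The paper takes $\delta$ as the maximum distance over all $2\ell$ components rather than just the range half, but this is an inessential bookkeeping variant of your argument.
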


\begin{proof}
Given an $\eps$-approximable suite $\sigma$, first choose an exactly
realizable suite $\tau$ that $\eps$-approximates $\sigma$.  Let
$\delta$ be the maximum of the Fubini-Study distances between
corresponding components of $\sigma$ and $\tau$.  So $\delta<\eps$.
By Corollary~\ref{cor:exact}, we have a linear operator $L$ that
realizes $\tau$.  By Lemma~\ref{lem:approximants}, we can find
invertible linear operators $M$ arbitrarily close to $L$.  Taking $M$
close enough to $L$, we can ensure, thanks to the continuity of the
quotient map $Q:\H\to\P$, that $QM$ maps each point in $\dom \tau$ to
within $\eps-\delta$ of the corresponding point in $\range \tau$.
Then, letting $\tau'$ be the suite with the same domain half as
$\tau$ but the range half given by applying $QM$ to the domain, we get
that $\tau'$ is within $\eps-\delta$ of $\tau$ and therefore within
$\eps$ of $\sigma$.
\end{proof}

\subsection{The PL manifold} 
\label{sub:manifold}

The complex projective space $\P$ has dimension $n-1$. So
$\dim(\P^{2\ell}) = 2\ell(n-1)$.

\begin{lemma}[Dimension of PL]\label{lem:dim}\mbox{}\\[-1em]
\begin{enumerate}
\item If $\ell \le n+1$ then PL is an open set of full measure in
  $\P^{2\ell}$, and so $\dim(\PL) = 2\ell(n-1)$.
\item If $\ell > n+1$ then $\dim(\PL) \le (n-1)(\ell+n+1)$.
\end{enumerate}
\end{lemma}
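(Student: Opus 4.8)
The plan is to analyze the PL manifold as the image of a parametrization. Since a projective linear transformation of $\P$ is induced by an invertible linear operator on $\H = \C^n$, and such operators form a space of complex dimension $n^2$ (so real dimension $2n^2$), but collinear operators induce the same projective transformation, the group $\PGL(n,\C)$ of projective linear transformations has real dimension $2(n^2-1)$. A PL suite is determined by choosing its domain half $(p_1,\dots,p_\ell)$ freely in $\P^\ell$ (an open dense subset, since the $p_i$ must be distinct) together with a projective linear $g$; then the range half is $(g(p_1),\dots,g(p_\ell))$. So I would set up the map
\[
  \Phi:\ \P^\ell \times \PGL(n,\C) \ \dashrightarrow\ \P^{2\ell},
  \qquad
  \big((p_i),\,g\big)\ \mapsto\ \big(p_1,\dots,p_\ell,\,g(p_1),\dots,g(p_\ell)\big),
\]
defined wherever it makes sense (domain points distinct), whose image is exactly $\PL$. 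This immediately gives
\[
  \dim(\PL)\ \le\ \dim(\P^\ell)+\dim(\PGL(n,\C))\ =\ \ell(n-1)+2(n^2-1).
\]

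For Claim~(1), when $\ell\le n+1$ I would argue that $\Phi$ is submersive (or at least that its image contains a dense open set of $\P^{2\ell}$) onto a full-measure subset: given a generic suite $(p_1,\dots,p_{2\ell})$, the $p_i$ and the $p_{\ell+i}$ are each in general position (a full-measure condition), and by Lemma~\ref{lem:n+1} — applied after padding $\ell$ up to $n+1$ with auxiliary generic points, or directly when $\ell\le n+1$ using that $\ell$ points in general position extend to $n+1$ points in general position — there is a projective linear $g$ with $g(p_i)=p_{\ell+i}$. Hence the generic suite lies in $\PL$. Openness follows because "in general position" is an open condition and the construction of $g$ depends continuously (indeed rationally) on the data. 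Thus $\PL$ is open of full measure, so $\dim(\PL)=\dim(\P^{2\ell})=2\ell(n-1)$.

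For Claim~(2), when $\ell>n+1$ the crude bound above reads $\ell(n-1)+2(n^2-1)$, which is \emph{not} the claimed $(n-1)(\ell+n+1)=\ell(n-1)+(n-1)(n+1)=\ell(n-1)+n^2-1$. The discrepancy is a factor of $2$ on the group dimension, so the real work is to see that the fibers of $\Phi$ (or a better-chosen parametrization) are larger than expected, cutting the contribution of $\PGL$ roughly in half. The key observation I would exploit: once the domain half $(p_1,\dots,p_\ell)$ is fixed with $\ell\ge n+1$ points in general position, the transformation $g$ is already \emph{overdetermined} by its values on any $n+1$ of them — by Lemma~\ref{lem:n+1}, $g$ is uniquely determined by $g(p_1),\dots,g(p_{n+1})$. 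So in fact $\PL$ is covered by the image of
\[
  \Psi:\ \{(p_1,\dots,p_\ell)\in\P^\ell\}\times \P^{n+1}\ \dashrightarrow\ \P^{2\ell},
\]
sending the domain tuple together with a choice of images $(q_1,\dots,q_{n+1})$ for the first $n+1$ points to the full suite $(p_i;\,q_1,\dots,q_{n+1},\,g(p_{n+2}),\dots,g(p_\ell))$ where $g$ is the unique PL transformation supplied by Lemma~\ref{lem:n+1}. This map is rational, and its image is $\PL$ (up to lower-dimensional strata where general position fails), so
\[
  \dim(\PL)\ \le\ \dim(\P^\ell)+\dim(\P^{n+1})\ =\ \ell(n-1)+(n+1)(n-1)\ =\ (n-1)(\ell+n+1),
\]
as required. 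The one technical point to nail down is that the non-general-position locus, where Lemma~\ref{lem:n+1} does not directly apply, contributes only lower-dimensional pieces to $\PL$; I expect this to be routine since $\PL$ is (the image of) a constructible set and such boundary strata are parametrized by products of lower-dimensional varieties, so a finite stratification argument bounds their dimension by the same quantity or less. The main obstacle, then, is not the dimension count itself but the bookkeeping that gives the \emph{sharp} constant — recognizing that the right parametrizing space is $\P^\ell\times\P^{n+1}$ rather than $\P^\ell\times\PGL(n,\C)$, i.e.\ that fixing images of $n+1$ points (not a full group element) already pins down the PL suite.
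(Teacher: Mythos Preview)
Your argument for Claim~(1) matches the paper's. For Claim~(2), your final parametrization $\Psi$ is valid and yields the correct bound, but the detour through it is based on a bookkeeping error, not a real obstacle.

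The error is that you mix real and complex dimensions. Throughout the paper, $\dim(\P)=n-1$ means complex dimension, so $\dim(\P^\ell)=\ell(n-1)$ is complex. But you then compute $\dim\PGL(n,\C)=2(n^2-1)$, which is the \emph{real} dimension; the complex dimension of $\PGL(n,\C)$ is $n^2-1$. With consistent complex dimensions, your very first map $\Phi:\P^\ell\times\PGL(n,\C)\dashrightarrow\P^{2\ell}$ already gives
\[
\dim(\PL)\le \ell(n-1)+(n^2-1)=(n-1)(\ell+n+1),
\]
which is exactly the claimed bound. This is precisely what the paper does: it parametrizes $\PL$ by $\P^\ell\times\L$, where $\L$ is linear operators on $\C^n$ modulo scalars (complex dimension $n^2-1$), and reads off the bound in one line.

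Consequently, your diagnosis that ``the fibers of $\Phi$ are larger than expected, cutting the contribution of $\PGL$ roughly in half'' is incorrect. In fact the generic fiber of $\Phi$ (for $\ell\ge n+1$ and domain in general position) is a single point, by the uniqueness in Lemma~\ref{lem:n+1}; there is nothing to cut. Your map $\Psi$ works only because $\dim_\C(\P^{n+1})=(n+1)(n-1)=n^2-1=\dim_\C\PGL(n,\C)$, i.e.\ it is the same count in disguise, not a sharper one.
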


\begin{proof}[Proof of Lemma]
Claim~(1) follows from Lemma~\ref{lem:n+1}. We prove Claim~(2).

A PL suite $\tau = (p_1, \dots, p_{2\ell})$ is determined by
$p_1,\dots,p_\ell$ and an invertible linear operator $L$ on $\H =
\C^n$ such that $\QL(p_j) = p_{\ell+j}$ for $j\le\ell$.  So PL is the
range of a (smooth, in fact rational in local coordinates) map from
$\P^\ell \times \L$ where $\L$ is the space of linear operators on
$\C^n$ modulo scalar multiples. Thus
\[
\dim(\PL)\leq\dim(\P^\ell) + \dim(\L) =
             \ell(n-1) + (n^2-1) = (n-1)(\ell+n+1).
\]
This completes the proof of the lemma.
\end{proof}

We remark that the upper bound in Claim~(2) of the lemma is, in all
nontrivial cases (i.e., $n>1$), strictly below the dimension of
$\P^{2\ell}$.

It will be convenient to work in affine spaces rather than projective
ones.  To this end, cover the complex projective space $\P$ by its $n$
standard coordinate patches $A_1, \dots, A_n$. Here $A_i$ consists of
those points in $\P$ whose $i\th$ homogeneous coordinate is not zero;
multiplying by a scalar, we can arrange that the $i\th$ homogeneous
coordinate is 1, and then we can use the remaining $n-1$ homogeneous
coordinates as affine coordinates on $A_i$.  The space $\P^{2\ell}$ is
covered by $n^{2\ell}$ coordinate patches $A$ that are the cartesian
products of the coordinate patches in the $2\ell$ factors.

\begin{proposition}[Variety for PL]\label{pro:variety}
In any coordinate patch $A$ of $\P^{2\ell}$, there exist a
full-measure open set G and an algebraic variety $V$ such that $\PL
\subseteq V$ and $G \cap V \subseteq \PL$.
\end{proposition}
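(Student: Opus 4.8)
The plan is to work inside a fixed coordinate patch $A = A_{i_1} \times \cdots \times A_{i_{2\ell}}$ of $\P^{2\ell}$ and to describe membership in $\PL$ by a first-order formula over the algebraically closed field $\C$, then invoke Tarski's quantifier elimination. Concretely, a point of $A$ is given by affine coordinates; the domain points $p_1,\dots,p_\ell$ have homogeneous-coordinate representatives $\v_1,\dots,\v_\ell$ that are polynomial (indeed, with a $1$ in the chosen slot) in these affine coordinates, and similarly for the range points $p_{\ell+1},\dots,p_{2\ell}$ with representatives $\w_1,\dots,\w_\ell$. The suite lies in $\PL$ iff there exists an invertible linear operator $L$ on $\C^n$ and scalars $c_1,\dots,c_\ell \ne 0$ with $L\v_j = c_j\w_j$ for all $j$. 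The existence of such $L$ and $c_j$ is expressed by an existential formula: quantify over the $n^2$ entries of $L$, over the $c_j$, and (to force invertibility and the $c_j \ne 0$) over auxiliary inverses, with the equations $L\v_j = c_j\w_j$, $\det(L)\cdot t = 1$, $c_j s_j = 1$ as the quantifier-free matrix. By Tarski's theorem (quantifier elimination for algebraically closed fields, \cite{Tarski}), the projection of this constructible set to the affine coordinates of the $p$'s is again constructible, i.e.\ a finite Boolean combination of affine algebraic sets; hence its Zariski closure $V$ is an algebraic variety with $\PL \cap A \subseteq V$.

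Next I would produce the full-measure open set $G$ with $G \cap V \subseteq \PL \cap A$. A constructible set differs from its Zariski closure $V$ only on a subvariety $W \subsetneq V$ of strictly smaller dimension (the "boundary" part of the Boolean combination). Taking $G$ to be the complement in $A$ of $W$ together with the complement of any other proper subvarieties we need to excise (for instance the locus where the chosen homogeneous representatives degenerate, which is vacuous here since we chose representatives with a fixed coordinate equal to $1$), we get that $G$ is open, of full measure in $A$ since a proper subvariety of the irreducible manifold $A$ has measure zero, and $G \cap V \subseteq (\PL \cap A)$. Then extend $G$ and $V$ from the patch $A$ to the whole of $\P^{2\ell}$ in the obvious way — $V$ is cut out by the same polynomials, $G$ is the full-measure open set — or, equivalently, simply note the proposition is stated patch-by-patch so no extension is needed.

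The step I expect to be the main obstacle is the careful bookkeeping that turns "$\exists\,\text{invertible }L$" into a bona fide first-order formula over $\C$ and, more importantly, extracting the precise statement that a constructible set $C$ satisfies $\overline{C} \setminus C$ has dimension $< \dim \overline C$, so that $G := A \setminus (\overline C \setminus C)^{-}$ (Zariski closure) works. One must check that $\overline{C}\setminus C$ is itself constructible (true, since constructible sets form a Boolean algebra and Zariski closure of a constructible set is constructible), hence contained in a proper subvariety of $V$ unless it is all of $V$ — and it is not all of $V$ because $C$ is dense in $\overline C = V$. A secondary subtlety is making sure the equivalence is genuinely between $\PL$ and the constructible set: the conditions defining a PL suite (distinctness of $p_1,\dots,p_\ell$, existence of $L$ with $L\v_j \parallel \w_j$, invertibility of $L$) must all be captured, and Lemma~\ref{lem:n+1} together with Corollary~\ref{cor:exact} guarantees that this linear-algebraic description is exactly the right one, so no information is lost. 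Once these points are pinned down, the measure-zero and openness claims are immediate from standard facts about algebraic varieties over $\C$.
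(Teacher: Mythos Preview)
Your proposal is correct and follows essentially the same approach as the paper: both observe that $\PL\cap A$ is first-order definable over $\C$, invoke Tarski's quantifier elimination to conclude it is constructible, and then separate the equational part (giving $V$) from the inequational part (whose zero locus is excised to form $G$). The only difference is in presentation---the paper puts the quantifier-free formula into disjunctive normal form and explicitly drops the single inequation from each disjunct, taking $G$ to be the complement of the union of the corresponding hypersurfaces, whereas you phrase the same extraction abstractly via the standard fact that a constructible set contains a Zariski-dense open subset of its closure.
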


Here an algebraic variety is the set of solutions of a system of
polynomial equations over the field $\C$ of complex numbers. Notice
that the union of two varieties is a variety. For example,
\begin{align*}
  & (f_1=0 \land f_2=0) \lor (g_1=0 \land g_2=0) \iff \\
  & (f_1g_1=0) \land (f_1g_2=0) \land (f_2g_1=0) \land (f_2g_2=0).
\end{align*}

\begin{proof}
It is clear from the definition of PL that the intersection $A\cap\PL$
is definable, in terms of the affine coordinates of $A$, in the
first-order language of the field $\C$. By Tarski's theorem
\cite{Tarski}, the first-order definition of $A\cap\PL$ can be
rewritten in quantifier-free form.  We can also arrange that the
quantifier-free definition is in disjunctive normal form, and we can
assume that each disjunct is satisfied by some points, because any
other disjuncts could simply be omitted from the disjunctive normal
form.

Any disjunct $\delta$ is a conjunction of some polynomial equations
and some inequations. (``Inequation'' here means $\neq$, whereas
``inequality'' traditionally means $<,>,\leq,\geq$.)  We can further
arrange that there is at most one inequation among the conjuncts,
because $(f\neq0)\land(g\neq0)$ is equivalent to $fg\neq0$, and that
there is at least one inequation, because if there is none then we can
adjoin $1\neq0$.  So $\delta$ has the form
\[
  f_1 = f_2 = \cdots = f_k = 0 \land g\neq0,
\]
where all $f_j$ as well as $g$ are polynomials. Let $E_\delta
=\Set(g=0)$. Here $\Set(g=0)$ is the solution set for $g=0$ in $A$; we
will use similar notation for other formulas as well. Further, let $E$
be the union, over all disjuncts $\delta$, of the sets $E_\delta$. The
desired full-measure open set is $G = A-E$.

Since there are only finitely many disjuncts, it suffices to prove
that, for any disjunct $\delta$, there is an algebraic variety $V$ in
$A$ such that $\Set(\delta) \subseteq V$ and $G\cap V\subseteq\PL$ is
a set of measure 0.

To obtain the desired $V$, we simply remove the inequation from
$\delta$, so that $V =\Set(f_1=\cdots=f_k =0)$.   This looks
simplistic but it works. Obviously $V$  is an algebraic variety and
$V$ includes $\Set(\delta)$. Further,
\begin{align*}
 V & = [V\cap\Set(g\ne0)] \cup [V\cap\Set(g=0)]\\
   & = \Set(\delta) \cup [V\cap\Set(g=0)]\\
   & \subseteq \PL \cup E_\delta
\end{align*}
and therefore $G\cap V\subseteq \PL$.
\end{proof}

\subsection{Tubes} 
\label{sub:tubes}

The purpose of this subsection is to provide some information on tubes
that we'll need in the proof of Approximate Realization Theorem.

Consider a one-dimensional curve $L$ in a three-dimensional cube. Given a small $\eps>0$ and a point $p\in L$, form a disc of radius $\eps$, within the ambient cube, centered at $p$ and orthogonal to $L$ at $p$. (For simplicity, we ignore the possibility that the disc bulges beyond the cube. More pedantically, we should be talking about the portion of the disc within the cube.) As $p$ traverses the curve $L$, the disc traverses a three-dimensional tube of radius $\eps$ around $L$.

Similarly the $\eps$-approximable points of $\P^{2\ell}$ form a tube,
$\Tube_\eps$ around the set of exactly approximable suites. By
Proposition~\ref{pro:approximants}, it is also a tube around the PL
manifold.

If the curve $L$ is nice, one may expect that the volume of the
three-dimensional tube of radius $\eps$ is about $\pi\eps^2$ times the
length of $L$. One can estimate the volume of $\Tube_\eps$ in a
similar way. But the curve $L$ can be so curly that its length is
infinite; the classical example is the curve $\sin(1/x)$ where $0<x<1$
which has a singularity at $x=0$. The curve may even fill in the whole
cube. Of course, the PL manifold does not look bad, and there is a
well developed theory of tubes around a submanifold of a given
manifold \cite{Gray}. The PL manifold has singularities, and it is not
obvious at all how to apply the general theory of tubes. Fortunately,
by Proposition~\ref{pro:variety}, the PL manifold is, up to a set of
measure zero, a finite union of algebraic varieties. That helps.

\begin{theorem}[Wongkew's theorem \cite{W}]\label{thm:W}
Let $V$ be an algebraic variety of codimension $k$ in the
$m$-dimensional Euclidean space given by polynomials of degree $\le
d$. Then there exist constants $c_k, \dots, c_m$ which depend only
on $m$, so that for any ball $B$ of radius $R$ and any positive
$\eps$, the volume of the $\eps$-tube around $B\cap V$ is bounded by
$\displaystyle{\sum_{j=k}^m (c_j d^j R^{m-j}) \eps^j}$.
\end{theorem}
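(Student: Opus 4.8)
Since the statement is quoted from \cite{W}, the following is a sketch of the strategy one would use to prove it. The plan is to convert the volume estimate into a \emph{box‑counting} estimate for the variety itself, and then to establish that estimate by an integral‑geometric (Crofton‑type) argument fed by classical degree bounds.

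First I would tile the ball $B$ by a grid of axis‑parallel cubes of side $\eps$; up to a factor depending only on $m$ there are $(R/\eps)^m$ of them. An $\eps$‑cube meets the $\eps$‑tube around $B\cap V$ only if the concentric cube of side $3\eps$ meets $V$, and these enlarged cubes have bounded overlap. So it suffices to bound $\mathcal N:=\#\{\eps\text{-cubes meeting }V\}$ by $c(m)\sum_{j=k}^{m} d^{\,j}(R/\eps)^{m-j}$; multiplying by $\eps^m$ then yields the theorem, with each $c_j$ depending on $m$ alone (tracking the constants more carefully recovers the stated form).

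Next I would prove the box‑counting bound. Write $r=m-k$ for the dimension of $V$. A Crofton‑type inequality, coming from the area/coarea formula, bounds $\mathcal N$ by $c(m)\,\eps^{-r}\sum_{|I|=r}\int_{\mathbb R^{I}}\#(V\cap\pi_I^{-1}(y))\,dy$, where $I$ ranges over the $r$‑element coordinate subsets, $\pi_I\colon\mathbb R^m\to\mathbb R^{I}$ is the corresponding coordinate projection, and the fibres are taken over the bounded image. For a \emph{generic} value of $y$ the fibre $V\cap\pi_I^{-1}(y)$ is the intersection of $V$ with an axis‑parallel affine $k$‑plane, hence a $0$‑dimensional common zero set of polynomials of degree $\le d$ in $k$ variables, and B\'ezout's inequality bounds it by $d^{\,k}$ points. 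Since the relevant $y$‑region has $r$‑volume $O_m(R^{r})$, this already produces the leading term $c(m)\,d^{\,k}(R/\eps)^{m-k}$, and in fact the auxiliary bound $\mathrm{vol}_{m-k}(V\cap B)\le c(m)\,d^{\,k}R^{m-k}$. The remaining, lower‑order terms $d^{\,j}(R/\eps)^{m-j}$ with $j>k$ come from the finitely many exceptional values of $y$ (where the fibre jumps to positive dimension), from the part of the tube lying over $\mathrm{Sing}(V)$, and from components of $V$ of dimension below $r$. These I would control by induction on $m$, slicing $\mathbb R^m=\mathbb R^{m-1}\times\mathbb R$ by rational hyperplanes: for all but finitely many slice values $t$ the slice $V_t\subseteq\mathbb R^{m-1}$ is a variety of codimension $k$ cut out by polynomials of degree $\le d$, the inductive bound applies to the tube of $V_t\cap B_t$, and integrating in $t$ with an $O(\eps)$ allowance for the slices straddling grid lines reassembles the full sum $\sum_{j=k}^m c_jd^{\,j}R^{m-j}\eps^j$.

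The hard part is the uniformity: the constants must never see $d$, $R$, $\eps$, or the \emph{number} $s$ of defining polynomials. This is where the Milnor--Thom bound — the number of connected components of a real algebraic set in $\mathbb R^{m}$ defined by degree‑$\le d$ polynomials is at most $d(2d-1)^{m-1}$, independently of $s$ — is essential: it controls the contribution of the singular locus and of the exceptional fibres, and (together with the fact that slicing does not raise the degree) it keeps the induction honest. Away from the bad loci the estimate is just grid and Fubini bookkeeping; the genuine content is the local analysis near $\mathrm{Sing}(V)$, where the naive approximation $\mathrm{vol}(\mathrm{Tube}_\eps)\approx\omega_k\eps^k\,\mathrm{vol}_{m-k}(V)$ breaks down and must be replaced by the full polynomial‑in‑$\eps$ bound of the theorem.
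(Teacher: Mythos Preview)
The paper does not prove this theorem at all; it merely quotes the result from \cite{W} and immediately applies it (via Corollary~\ref{cor:W}) as a black box in the proof of the Approximate Realization Theorem. There is therefore no ``paper's own proof'' to compare your proposal against.

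Your sketch is a plausible outline of the box-counting/integral-geometric strategy that underlies results of this type, and the ingredients you name (Crofton-type slicing, B\'ezout bounds on generic fibres, Milnor--Thom for component counts, induction on dimension) are the right ones. But since the authors treat Wongkew's theorem as an external input, any proof you supply is strictly supplementary to the paper rather than a reconstruction of something the paper does.
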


\begin{corollary}\label{cor:W}
Let $V$ be an algebraic variety of complex codimension $k$ in a
finite-dimensional Hilbert space, and let $\eps$ range over positive
reals. For any bounded open set $X$ (or its closure), the volume of
the tube of radius  $\eps$ around $X\cap V$ is $O(\eps^{2k})$.
\end{corollary}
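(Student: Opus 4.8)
The plan is to derive Corollary~\ref{cor:W} directly from Wongkew's theorem (Theorem~\ref{thm:W}), the only real work being the passage from the complex picture to the underlying real one while keeping track of codimension and polynomial degree. First I would realify: write the Hilbert space as $\C^N$, identify it with $\mathbb{R}^{2N}$ via $z_s = x_s + iy_s$, and put $m=2N$. If $V$ is cut out by complex polynomials $f_1,\dots,f_r$ of degree $\le d$, then $V$ is also the \emph{real} algebraic set cut out by the $2r$ real polynomials $\text{Re}\,f_i$ and $\text{Im}\,f_i$ in the $m$ real coordinates $x_1,y_1,\dots,x_N,y_N$, and each of these still has degree $\le d$. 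Since a complex variety of complex dimension $e$ has real dimension $2e$ (its smooth locus is a complex manifold, hence a real manifold of real dimension $2e$, while the singular locus is a proper subvariety of strictly smaller complex, hence smaller real, dimension, so one finishes by induction), the set $V$ regarded as a real variety in $\mathbb{R}^m$ has real codimension $2k$.

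Next I would reduce to a ball and invoke Theorem~\ref{thm:W}. Since $X$ (and hence $\bar X$) is bounded, it lies in a closed ball $B\subseteq\mathbb{R}^m$ of some radius $R$. Radius-$\eps$ tubes are monotone in their core set (if $S\subseteq S'$ then $\text{dist}(\cdot,S')\le\text{dist}(\cdot,S)$), and $\overline{X\cap V}\subseteq\bar X\cap V\subseteq B\cap V$, so the $\eps$-tube around $X\cap V$ --- and likewise around $\bar X\cap V$ --- is contained in the $\eps$-tube around $B\cap V$. Applying Theorem~\ref{thm:W} to the codimension-$2k$ real variety $V$ and the ball $B$ bounds the volume of this larger tube by $\sum_{j=2k}^{m}c_j d^j R^{m-j}\eps^j$, with the $c_j$ depending only on $m$. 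For $0<\eps\le1$ each summand with $j\ge 2k$ is at most $\eps^{2k}$, so the whole sum is at most $C\eps^{2k}$ with $C=\sum_{j=2k}^{m}c_j d^j R^{m-j}$ independent of $\eps$; this is the claimed $O(\eps^{2k})$. (No such estimate can hold for large $\eps$ once $X\cap V\ne\emptyset$, since then the tube has volume of order $\eps^m$, so the content of the corollary is the regime $\eps\to0$ and restricting to $\eps\le1$ costs nothing.)

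The only step that carries genuine content rather than bookkeeping is the codimension count in the first paragraph: one must be sure that complex codimension $k$ becomes real codimension $2k$, not $k$, and that taking real and imaginary parts of the $f_i$ does not raise their degrees. Once those are in hand, the corollary is essentially a restatement of Wongkew's theorem and no further estimate is needed.
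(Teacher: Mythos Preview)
Your argument is correct and follows essentially the same route as the paper's proof: realify so that complex codimension $k$ becomes real codimension $2k$, apply Wongkew's theorem, and keep only the leading $\eps^{2k}$ behavior. The only cosmetic difference is that you enclose $X$ in a single ball (which suffices since $X$ is bounded), whereas the paper invokes compactness to cover $\bar X$ by finitely many balls; your version is if anything slightly cleaner, and your explicit restriction to $\eps\le1$ makes precise what the paper leaves implicit in ``taking into account only the leading term.''
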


\begin{proof}
We get $O(\eps^{2k})$ by taking into account only the leading term in the sum in Wongkew's theorem. The exponent is doubled because we work in a Hilbert space and $k$ is the complex dimension whereas Wongkew's theorem refers to real dimensions. Finally, balls can be replaced with bounded open sets because the closure of such a set is compact and therefore is covered by finitely many balls.
\end{proof}

\subsection{Proof of Approximate Realization Theorem}
\label{sub:appr}

\begin{proof}[Proof of Claim~(1) of Theorem~\ref{thm:appr}]
Assume $\ell\le n+1$ and let $G$ be the set of general-position suites, so that a suite $(p_1,\dots, p_{2\ell}) \in G$ if and only if the $2\ell$ points $p_1,\dots, p_{2\ell}$ are in general position. Clearly $G$ is an open set of measure 1. By Proposition~\ref{pro:approximants}, $G\subseteq\PL$, so the PL manifold is of full measure. But every PL suite is exactly realizable. So the set of exactly realizable suites is of full measure.
\end{proof}

\begin{proof}[Proof of Claim~(2) of Theorem~\ref{thm:appr}]
Assume $\ell\ge n+2$, and define the \emph{unit cube} of a Hilbert
space $\C^m$ to comprise the points in $\C^m$whose coordinates all
have absolute values in the real interval $[0,1]$.

Recall the coordinate patches of $\P$ and $\P^{2\ell}$ that we used in
the proof of Proposition~\ref{pro:variety}.
If you identify a patch $A_i$ of $\P$ with a copy of $\C^{n-1}$, then
it makes sense to speak about a unit cube $C_i$ in $A_i$. The cubes
$C_1, \dots, C_n$ cover $\P$. Indeed, if $p\in\P$ and the $i\th$
homogeneous coordinate of $p$ is, in absolute value, a largest
homogeneous coordinate of $p$ then $p$ belongs to the cube $C_i$.

Now, every coordinate patch $A$ of $\P^{2\ell}$ is a cartesian product of coordinate patches in the $2\ell$ factors of $\P^{2\ell}$. View every factor patch as a copy of $\C^{n-1}$. Then $A$ is a copy of $\C^{2\ell(n-1)}$. The unit cube $C$ in $A$ is the cartesian product of the unit cubes in the $2\ell$ factor patches. It follows that $\P^{2\ell}$ is covered by the unit cubes in its coordinate patches.

Since the number of such cubes is finite, it suffices to prove the
following for every coordinate patch $A$ of $\P^{2\ell}$: The
$\eps$-approximable suites form an open set of volume
$O(\eps^{2(\ell-n-1)(n-1)})$ in the unit cube $C$ of $A$. By
Proposition~\ref{pro:approximants}, it suffices to prove that, in $C$,
the volume of the $\eps$-tube around $C\cap\PL$ is
$O(\eps^{2(\ell-n-1)(n-1)})$.

To this end, let $G$ and $V_1,\dots,V_k$ be as in
Proposition~\ref{pro:variety}. Then $C\cap G$ is an open set of
measure 1 in the cube $C$ and  $(C\cap G)\cap\PL = (C\cap G) \cap
(V_1\cup\cdots\cup V_k)$.

Since the set $C-G$ is of measure 0, it suffices to prove that, in
$C\cap G$, the volume of the $\eps$-tube around $(C\cap G)\cap\PL$ is
$O(\eps^{2(\ell-n-1)(n-1)})$. To this end, it suffices to prove that,
in $C\cap G$, the volume of the $\eps$-tube around every $C\cap V_j$
is $O(\eps^{2(r-1)(n-1)})$, but this follows directly from
Corollary~\ref{cor:W}.
\end{proof}

\section{Infinite approximability}
\label{sec:inf}

Recall that a suite is infinitely approximable if it is
$\eps$-approximable for every positive $\eps$.  By
Proposition~\ref{pro:approximants}, the infinitely approximable suites
form the closure of PL.

We start with a couple of general remarks and then give a complete
characterization of infinitely approximable suites in the case where
the principal quantum system consists of a single qubit. The problem
of characterization of infinitely approximable suites in the general
case is open.

\subsection{General considerations}
\label{sub:prelim}

Recall that the Hilbert space $\H$ for our principal quantum system is
$\C^n$ and the corresponding projective space is $\P =
\CP^{n-1}$. Suites are tuples in $\P^{2\ell}$ where the first $\ell$
points are distinct. A suite $\sigma = (p_1,\dots,p_{2\ell})$ is
viewed as the finite transformation that sends the domain tuple
$\dom{\sigma} = (p_1,\dots,p_\ell)$ to the range tuple $\range{\sigma}
= (p_{\ell+1}, \dots, p_{2\ell})$.  If $\sigma$ is infinitely
approximable then every vicinity of $\sigma$ contains exactly
realizable suites; but $\sigma$ itself does not have to be exactly
realizable.

\begin{example}[An infinitely approximable suite that is not exactly
  realizable]
Set $\ell = n+1$. In this case, by Lemma~\ref{lem:n+1}, every suite in
general position extends to a projective linear transformation and
thus is exactly realizable. Since general-position suites form an open
set of full measure, every suite is infinitely approximable. It
remains to construct a suite $\sigma = (p_1, \dots, p_{2n+2})$ that is
not exactly realizable.

Construction. Given an orthonormal basis $\e_1,\dots,\e_n$ for $\H$,
let $\e_{n+1} = \sum_{i=1}^n \e_i$. We saw, in the proof of
Lemma~\ref{lem:n+1}, that the vectors $\e_1,\dots, \e_{n+1}$ are in
general position. Set
\begin{align*}
  & p_1 = \Q\e_1, \dots, p_n = \Q\e_n, p_{n+1} = \Q\e_{n+1}\\
  & \sigma(p_1) = \cdots = \sigma(p_n) = p_1, \sigma(p_{n+1}) = p_2.
\end{align*}

By reduction to absurdity, assume that $\sigma$ extends to the
transformation $\QL$ for some linear operator $L$ on $\H$. Then the
vectors $L\e_1, \dots, L\e_n$ are collinear with $\e_1$, and so the
range of $L$ is the one-dimensional subspace spanned by
$\e_1$. Accordingly the range of $\QL$ consists of a single point
$p_1$ while the range of $\sigma$ contains $p_2$ as well.\qed
\end{example}

\begin{lemma}   \label{continuity}
  For each suite $\sigma$ of length $2n+2$, whose domain half
  $\dom{\sigma}$ and range half $\range{\sigma}$ are each in general
  position, let $f_\sigma$ be the unique projective linear
  transformation that maps $\dom{\sigma}$ to $\range{\sigma}$. Then
  $f_\sigma(p)$ is a continuous function of $\sigma$ (in the space of
  suites) and $p$ (in $\P$).
\end{lemma}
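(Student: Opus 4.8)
The plan is to exhibit $f_\sigma(p)$ locally as a composition of explicit rational (hence continuous) maps. Fix a suite $\sigma=(p_1,\dots,p_{2n+2})$ with $\dom{\sigma}$ and $\range{\sigma}$ each in general position, and fix a point $p\in\P$. By Lemma~\ref{lem:n+1}, there is a unique projective linear $f_\sigma$ with $f_\sigma(p_i)=p_{n+1+i}$ for $i=1,\dots,n+1$; we must show the value $f_\sigma(p)$ depends continuously on the pair $(\sigma,p)$ near the chosen values. The key tool is the \emph{explicit} construction in the proof of Lemma~\ref{lem:n+1}: if $\v_1,\dots,\v_{n+1}$ are representatives of $p_1,\dots,p_{n+1}$ with $\v_{n+1}=\sum_{i=1}^n a_i\v_i$ (the $a_i$ all nonzero by general position), then the operator with columns $a_1\v_1,\dots,a_n\v_n$ sends $\e_i\mapsto$ (a representative of) $p_i$. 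Applying this to both halves of $\sigma$ and composing gives a concrete invertible matrix $M(\sigma)$ with $\Q M(\sigma)=f_\sigma$, and then $f_\sigma(p)=\Q(M(\sigma)\v)$ for any representative $\v$ of $p$.

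First I would pass to a coordinate patch: choose affine/coordinate patches around each of $p_1,\dots,p_{n+1}$, $p_{n+1+1},\dots,p_{2n+2}$, and $p$, so that near the base values every point has a canonical representative vector $\v_i(\sigma)$, $\w_i(\sigma)$, $\v(p)$ that depends polynomially (in particular continuously) on the affine coordinates. Second, solve the linear system $\v_{n+1}=\sum_{i\le n}a_i\v_i$ for the coefficients $a_i=a_i(\sigma)$: by Cramer's rule these are ratios of determinants of matrices built from the $\v_i$, and general position guarantees the denominator is nonzero and the $a_i$ are nonzero at the base point, so the $a_i$ are continuous on a neighborhood. Do the same for the range half, obtaining $b_i(\sigma)$. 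Third, form $M_{\dom}$ with columns $a_i\v_i$ and $M_{\ran}$ with columns $b_i\w_i$; both are invertible near the base point (again by general position), so $M(\sigma)=M_{\ran}M_{\dom}^{-1}$ has entries that are rational, with nonvanishing denominator, hence continuous in $\sigma$. Fourth, compute $M(\sigma)\v(p)$, a continuous $\H$-valued function of $(\sigma,p)$; at the base point it is a nonzero vector, so it stays nonzero and its image $\Q(M(\sigma)\v(p))$ in $\P$ is continuous there by continuity of the quotient map $\Q$ away from $\vec0$. This image is exactly $f_\sigma(p)$, so we are done.

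The one point requiring care—the main (mild) obstacle—is that none of the intermediate objects is canonically defined: the representative vectors, the matrices $M_{\dom},M_{\ran}$, and hence $M(\sigma)$ are only defined up to scalars and depend on the choice of coordinate patches. What is canonical is the \emph{projective} map $f_\sigma$ and the \emph{point} $f_\sigma(p)$, so I must check that the final composite $\Q(M(\sigma)\v(p))$ is independent of all these choices (it is, because rescaling any representative or any column just rescales the output vector) and that the various patch choices can be made simultaneously on one neighborhood of the base point. Since continuity is a local statement and the base point $(\sigma,p)$ was arbitrary (subject to the general-position hypotheses, which form an open condition), local continuity on such neighborhoods yields continuity on the whole domain of $f_\bullet(\bullet)$. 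No deep input is needed beyond Lemma~\ref{lem:n+1}, Cramer's rule, and continuity of $\Q$ on $\H-\{\vec0\}$.
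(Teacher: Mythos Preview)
Your proposal is correct and follows essentially the same route as the paper: pass to coordinate patches, use the explicit construction from Lemma~\ref{lem:n+1} and Cramer's rule to build, continuously in $\sigma$, matrices sending the standard tuple $E=(\Q\e_1,\dots,\Q\e_{n+1})$ to $\dom{\sigma}$ and to $\range{\sigma}$, invert the first and compose, then apply to a representative of $p$ and project. The paper's $L$ and $M$ are exactly your $M_{\ran}$ and $M_{\dom}^{-1}$, and your remarks about non-canonicity of the intermediate lifts match the paper's observation that $L$ is determined only up to a nonzero scalar.
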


\begin{proof}
  Because continuity is a local property, we may assume that the
  relevant points, namely the $2n+2$ components of $\sigma$ and the
  point $p$, are each confined to lie in one of the $n$ coordinate
  patches that cover $\P$.  (Of course, different components might be
  in different patches.)  Fixing these patches, we can fix a
  normalization for the homogeneous coordinates of the relevant
  points.  If a point is confined to the patch where the $i\th$
  homogeneous coordinate is non-zero, then we normalize its
  homogeneous coordinates so that the $i\th$ coordinate is 1.

We now revisit the proof of Lemma~\ref{lem:n+1}, paying attention to
continuity issues.

To begin, consider suites of the form $(E,\range{\sigma})$, where the range
is that of a variable $\sigma$, as above, but the domain is fixed as
the $(n+1)$-tuple $(Q\vec e_1,\dots,Q\vec e_{n+1})$ of points in $\P$
corresponding to the $n$ standard basis vectors $\vec e_1,\dots,\vec
e_n$ and their sum $\vec e_{n+1}$ in $\H$.  Because $\range{\sigma}$ is in
general position, the proof of Lemma~\ref{lem:n+1} produces an
invertible matrix $L$ corresponding to the projective linear
transformation $f_{(E,\range{\sigma})}$, and now we need to look more
closely at this $L$.  (Recall that it is unique up to an overall
nonzero scalar factor.)  It can be obtained as follows.  First form
the matrix $L'$ whose columns are the homogeneous coordinates
(normalized as above) of the first $n$ components of $\range{\sigma}$.  The
corresponding projective linear transformation transforms each $Q\vec
e_i$ for $i=1,2,\dots,n$ correctly, namely to $\range{\sigma}_i$, but it
might transform $Q\vec e_{n+1}$ incorrectly.  To correct this one
remaining component, without damaging the other $n$, we multiply the
columns of $L'$ by suitable nonzero scalars $z_i$.  Any choice of
$z_i$'s will preserve the correctness of the values at $Q\vec e_i$ for
$i=1,2,\dots,n$, but the $z_i$'s must be chosen carefully to ensure
that the new matrix $L$ sends $e_{n+1}$ to $\range{\sigma}_{n+1}$.  (It
would suffice to send $e_{n+1}$ to a vector collinear with
$\range{\sigma}_{n+1}$, but this additional freedom is just the freedom,
already noted above, to multiply $L$ by an overall nonzero scalar
factor.)  The required condition on the $z_i$'s is a system of linear
equations, whose coefficient matrix is $L'$.  The fact that $L'$ is
invertible ensures not only that there is a unique solution for the
$z_i$'s but also that this solution is a continuous function of
$\range{\sigma}$.  Indeed, by Cramer's rule, the solution is given by
certain rational functions, namely ratios of determinants, of the
entries of $L'$ and the components of $\range{\sigma}_{n+1}$.  Since the
entries of $L'$ are components of $\range{\sigma}_i$ for $i=1,\dots,n$, and
since the denominator of these rational expressions, the determinant of
$L'$, is not zero, we have the claimed continuity of the $z_i$'s.  It
follows that $L$ is a continuous function of $\range{\sigma}$.

Similarly, we can realize the finite transformation $(\dom{\sigma},E)$ by
a matrix $M$ whose entries are continuous functions of $\dom{\sigma}$.
Indeed, the previous paragraph shows how to continuously realize
$(E,\dom{\sigma})$.  To realize $(\dom{\sigma},E)$, we need only take the
inverse matrix.  It will still be a continuous function of
$\dom{\sigma}$, because matrix inversion is a continuous function, given
by ratios of determinants.

Having realized both $(E, \range{\sigma})$ and $(\dom{\sigma},E)$ by
matrices that depend continuously on $\sigma$, we need only multiply
these matrices (and observe that multiplication is continuous) to
realize $\sigma$.

Finally, $f_\sigma(p)$ can be obtained as the image in $\P$ of the
product of the matrix realizing $\sigma$ and the column vector
(normalized as above) representing $p$.  It is therefore a continuous
function of $\sigma$ and $p$.
\end{proof}

The \emph{border} of PL consists of the infinitely approximable suites
that do not belong to PL.

\begin{claim} \label{cla:range}
If $\sigma$ is a suite on the border of PL, then there cannot be $n+1$
points in general position in $\dom{\sigma}$ such that the corresponding
$n+1$ points in $\range{\sigma}$ are also in general position.
\end{claim}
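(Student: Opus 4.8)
The plan is to argue by contradiction. Suppose $\sigma=(p_1,\dots,p_{2\ell})$ lies on the border of $\PL$ but there is a set $S\subseteq\{1,\dots,\ell\}$ of size $n+1$ such that the points $\{p_i:i\in S\}$ are in general position and the corresponding points $\{p_{\ell+i}:i\in S\}$ are also in general position. Form the length-$(2n+2)$ sub-suite $\sigma'=(p_i,p_{\ell+i})_{i\in S}$, whose domain half and range half are each in general position. By Lemma~\ref{lem:n+1} there is a \emph{unique} projective linear transformation $f=f_{\sigma'}$ carrying $\dom{\sigma'}$ to $\range{\sigma'}$. I claim that in fact $f(p_j)=p_{\ell+j}$ for \emph{every} $j=1,\dots,\ell$; this exhibits $\sigma$ as a suite whose transformation extends to the projective linear transformation $f$, so $\sigma\in\PL$, contradicting the hypothesis that $\sigma$ is on the border.

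To prove the claim I would use that $\sigma$, being infinitely approximable, lies in the closure of $\PL$ (Proposition~\ref{pro:approximants}), so we may choose $\PL$ suites $\sigma^{(m)}=(p_1^{(m)},\dots,p_{2\ell}^{(m)})$ with $\sigma^{(m)}\to\sigma$. Each $\sigma^{(m)}$ extends to a projective linear transformation $g_m$, so $g_m(p_j^{(m)})=p_{\ell+j}^{(m)}$ for all $j$. Since general position is an open condition and both halves of $\sigma'$ are in general position, for all sufficiently large $m$ the sub-suite $\sigma'^{(m)}=(p_i^{(m)},p_{\ell+i}^{(m)})_{i\in S}$ again has both halves in general position; and since $g_m$ maps $\dom{\sigma'^{(m)}}$ to $\range{\sigma'^{(m)}}$ in the correct order, the uniqueness clause of Lemma~\ref{lem:n+1} forces $g_m=f_{\sigma'^{(m)}}$ as transformations of $\P$.

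Finally I would pass to the limit. By Lemma~\ref{continuity}, $f_{\sigma'^{(m)}}(p)$ is a continuous function jointly of the suite $\sigma'^{(m)}$ and the point $p$; applying this with $\sigma'^{(m)}\to\sigma'$ and $p=p_j^{(m)}\to p_j$ gives $g_m(p_j^{(m)})=f_{\sigma'^{(m)}}(p_j^{(m)})\to f_{\sigma'}(p_j)=f(p_j)$. On the other hand $g_m(p_j^{(m)})=p_{\ell+j}^{(m)}\to p_{\ell+j}$. Hence $f(p_j)=p_{\ell+j}$ for every $j=1,\dots,\ell$, as claimed, and the contradiction is complete. The step that requires genuine care — and the reason Lemma~\ref{continuity} was established in precisely that form — is this last passage to the limit: because the approximating suites $\sigma^{(m)}$ move, one cannot simply invoke continuity of a single fixed projective linear transformation, but must use joint continuity in the suite and the point, and one must first verify that the chosen sub-suites stay in general position along the sequence so that the transformations $f_{\sigma'^{(m)}}$ are even defined.
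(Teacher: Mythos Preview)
Your proposal is correct and follows essentially the same route as the paper's own proof: argue by contradiction, use Lemma~\ref{lem:n+1} to pin down a unique projective linear $f$ on the $(n+1)$-point sub-suite, approximate $\sigma$ by $\PL$ suites, identify their realizing transformations with the $f_{\sigma'^{(m)}}$ of Lemma~\ref{continuity}, and pass to the limit via that lemma's joint continuity. If anything you are slightly more explicit than the paper in noting that general position is an open condition (so the sub-suites $\sigma'^{(m)}$ eventually satisfy the hypotheses of Lemma~\ref{continuity}); the paper leaves this implicit.
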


\begin{proof}
Suppose that $\sigma=(p_1,\dots,p_l,q_1,\dots,q_l)$ were a
counterexample.  To simplify the notation, permute the components, if
necessary, so that $(p_1,\dots,p_{n+1})$ and $(q_1,\dots,q_{n+1})$ are
general-position $(n+1)$-tuples.  By Lemma~\ref{lem:n+1}, let $f$ be
the unique projective linear transformation that sends $p_i$ to $q_i$
for all $i$ in the range $1\leq i\leq n+1$.  We shall show that
$f(p_j)=q_j$ also for $n+1<j\leq l$.   This will complete the
proof, because it means that $f$ realizes $\sigma$ and therefore
$\sigma$ belongs to PL, not to its border as assumed.  For the rest of
the proof, we fix some arbitrary $j$ in the relevant range, $n+1<j\leq
l$, and our goal is to prove that $f(p_j)=q_j$.

Since $\sigma$ is in the closure of PL, we can consider PL suites
$\sigma'$ arbitrarily close to $\sigma$.  Temporarily consider a fixed
$\sigma'$ near $\sigma$.  (Later, we shall let $\sigma'$ vary and
approach $\sigma$.)  Let us write $\tau$ for the suite
$(p_1,\dots,p_{n+1},q_1,\dots,q_{n+1})$ of length $2(n+1)$; so $f$
realizes $\tau$.  Similarly, let us write $\tau'$ for the suite
consisting of the first $n+1$ points from the domain and from the
range of $\sigma'$.  Since $\sigma'$ belongs to PL, it is realized by
some projective linear transformation $f'$.  Of course this $f'$ also
realizes $\tau'$.  Furthermore, $f'$ sends the $j\th$ component $p'_j$
of $\sigma'$ to the corresponding component $q'_j$ in the range half
of $\sigma'$ (the $(l+j)\th$ component of $\sigma'$).

Now let $\sigma'$ vary, in PL, and approach the border suite
$\sigma$.  Then in particular, $\tau'$ approaches $\tau$, $p'_j$
approaches $p_j$, and $q'_j$ approaches $q_j$.  Applying
Lemma~\ref{continuity} (with $\tau'$ in the role of the $\sigma$ in the
lemma), we find that $f'(p'_j)$ approaches $f(p_j)$.  That is, $q'_j$
approaches $f(p_j)$.  But, since $\sigma'$ approaches $\sigma$, we
also know that $q'_j$ approaches $q_j$.  Therefore, $f(p_j)=q_j$, as
required.
\end{proof}

Now we turn to the single-qubit case where $\H = C^2$ and $\P$ is the
Riemann sphere $ = \CP^1$ that extends the field $\C$ of complex
numbers with an additional point $\infty$.

In this case, Claim~\ref{cla:range} simplifies somewhat, because
a tuple is in general position if and only if all its components are
distinct.  Indeed, since $n=2$ in this case, the definition of general
position requires simply that any two of the components are images, in
$\P$, of independent vectors in $\H$, which means that they are
distinct points in $\P$.

Our definition of ``suite'' requires the components in the
domain half to be distinct, so Claim~\ref{cla:range} has the
following consequence.

\begin{corollary}\label{cor:range}
In the single-qubit case, every suite on the border of PL has at
most two distinct points in its range half.
\end{corollary}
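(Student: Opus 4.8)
The plan is to deduce Corollary~\ref{cor:range} directly from Claim~\ref{cla:range}, exploiting the two observations already recorded for the single-qubit case: (i) in $\P = \CP^1$ a tuple of points is in general position exactly when its components are pairwise distinct, and (ii) the components of the domain half of a suite are pairwise distinct by definition.

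First I would argue by contradiction. Suppose $\sigma = (p_1,\dots,p_\ell,q_1,\dots,q_\ell)$ lies on the border of PL, yet its range half $(q_1,\dots,q_\ell)$ contains at least three distinct points of $\P$. Then I can select indices $i_1,i_2,i_3 \in \{1,\dots,\ell\}$ so that $q_{i_1},q_{i_2},q_{i_3}$ are three distinct points (in particular $\ell \ge 3$). Since $p_1,\dots,p_\ell$ are pairwise distinct, the domain points $p_{i_1},p_{i_2},p_{i_3}$ are also distinct. By observation (i), $(p_{i_1},p_{i_2},p_{i_3})$ is a general-position triple inside $\dom\sigma$ whose image triple $(q_{i_1},q_{i_2},q_{i_3})$ is likewise in general position inside $\range\sigma$. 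With $n = 2$ we have $n+1 = 3$, so this is precisely the configuration that Claim~\ref{cla:range} forbids for a border suite --- a contradiction. Hence the range half has at most two distinct points.

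I expect no real obstacle here beyond invoking Claim~\ref{cla:range} correctly; the substantive work (the continuity statement of Lemma~\ref{continuity} and the limiting argument in the proof of Claim~\ref{cla:range}) is already done. The only point meriting a moment's care is that three distinct values among $q_1,\dots,q_\ell$ genuinely arise as images of three \emph{domain} points, which is immediate from the encoding $p_i \mapsto q_i$ together with the distinctness of the domain half; and the statement is anyway trivial when $\ell < 3$, since then $\range\sigma$ has fewer than three components, so the contradiction argument covers all $\ell$ uniformly.
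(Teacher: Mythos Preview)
Your proposal is correct and takes essentially the same approach as the paper: both argue that in the single-qubit case general position reduces to pairwise distinctness, so the automatically-distinct domain half forces (via Claim~\ref{cla:range} with $n+1=3$) the range half to contain no three distinct points. The paper presents this reasoning informally in the paragraph preceding the corollary rather than as a separate proof, but the logic is identical to yours.
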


A point of $\P$ with
homogeneous coordinates $(a,b)$ can be conveniently represented as the
ratio $a/b$ where $a/b=\infty$  if $b=0$.

The Riemann-sphere representation of one-qubit states is closely related to the Bloch-sphere \cite{NC}, a representation of one-qubit states on the unit sphere $\s^2$ of the three dimensional Euclidean space $\E^3$. Recall the standard stereographic projection of $\s^2$ --- from the north pole onto the plane through the equator. Think of this plane as a copy of $\C$. Then the standard stereographic projection naturally extends to the stereographic projection of $\s^2$ onto the Riemann sphere by mapping the north pole onto $\infty$. Let $\pi$ be the inverse projection of the Riemann sphere onto the Bloch sphere. It is easy to check that $\pi\Q\ketpsi$ is the Bloch-sphere representation of the state given by the vector $\ketpsi$ in $\H$. In particular $\pi\infty$ is the north pole of $\s^2$. The Fubini-Study distance between points $z_1, z_2$ on the Riemann sphere is one half of the geodesic distance between the points $\pi z_1, \pi z_2$ on Bloch sphere.

\begin{lemma}\label{lem:rg}
If $L$ is a nonzero linear operator on $\H$ given by a matrix
$\begin{pmatrix} a &\; b\\ c &\; d \end{pmatrix}$
in the standard orthonormal basis of $\H$ then $\QL$ is the transformation
$z\mapsto \displaystyle\frac{az + b}{cz + d}\ \cdot$
\end{lemma}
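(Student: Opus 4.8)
The plan is to compute $\QL$ directly in homogeneous coordinates. Under the ratio convention just introduced, a finite point $z$ of the Riemann sphere $\P=\CP^1$ is the class $\Q\v$ of the vector $\v=(z,1)$, while $\infty$ is the class of $(1,0)$. Since $\QL$ sends $\Q\v$ to $\Q(L\v)$, the lemma reduces to a one-line matrix multiplication together with a little care about the point $\infty$ and about the points, if any, lying outside $\dom{\QL}$.

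First I would handle a finite point $z$ in $\dom{\QL}$. With $\v=(z,1)$ we have $L\v=(az+b,\,cz+d)$, so $\QL(z)=\Q(az+b,\,cz+d)=(az+b)/(cz+d)$ by the ratio convention; this is the asserted value, and it is read as $\infty$ exactly when $cz+d=0$, which is the correct behaviour of $z\mapsto(az+b)/(cz+d)$. For $z=\infty$ I would compute $L(1,0)=(a,c)$, so $\QL(\infty)=\Q(a,c)=a/c$, which agrees with the value of the rational function at $\infty$, whether one extends it by continuity (the limit $a/c$) or by homogenizing numerator and denominator. Hence, on its domain, $\QL$ coincides with the stated M\"obius transformation.

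It then remains to check that the domain of definition of the formula matches $\dom{\QL}$. The value $(az+b)/(cz+d)$ is the indeterminate $0/0$ at a finite $z$ exactly when the vector $(z,1)$ lies in $\ker L$, and the value at $\infty$, namely $a/c$, is indeterminate exactly when $(1,0)$ lies in $\ker L$; in both cases this is precisely the condition $L\v=\vec0$ that defines the complement of $\dom{\QL}$. If $L$ is invertible, neither happens, and the formula, like $\QL$, is defined on all of $\P$; if the nonzero $L$ has rank $1$, both the formula and $\QL$ fail to be defined at exactly the one point spanning $\ker L$. So there is no mathematical content beyond the computation, and the only spot calling for a little care --- and hence the ``hard'' part, such as it is --- is the bookkeeping around the point $\infty$ and the ratio convention.
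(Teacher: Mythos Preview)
Your proof is correct and follows essentially the same approach as the paper: compute $L(z,1)=(az+b,cz+d)$ for finite $z$ and $L(1,0)=(a,c)$ for $z=\infty$, then take the ratio. You are in fact slightly more careful than the paper, which omits the domain-matching discussion (the $0/0$ case); this is a welcome but inessential addition.
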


\begin{proof}
  Let $\v = \alpha\e_1 + \beta\e_2$ and $z =
  \alpha/\beta$.  If $\beta \ne 0$ then we have
\begin{equation*}
  (\QL)z =\; \Q(L\begin{pmatrix}\alpha\\
                                \beta\end{pmatrix}\bigr)
             =\; \Q(L\begin{pmatrix} z     \\ 1 \end{pmatrix})
             =   \Q \begin{pmatrix} az+b  \\ cz+d \end{pmatrix}
             =  \frac{az+b}{cz+d}\, \cdot
\end{equation*}
If $\beta=0$ then $z = \infty$, and we have
\[
  (\QL)\infty =   \Q(L\begin{pmatrix}1 \\ 0\end{pmatrix})
             =   \Q \begin{pmatrix} a \\ c \end{pmatrix}
             =  \frac{a}{c} = \frac{a\infty + b}{c\infty + d}\, \cdot
\]
\end{proof}

\subsection{Cross-ratio}
\label{sub:xratio}

Projective linear transformations of the Riemann sphere are known as
fractional linear transformations and have the form
$\displaystyle\frac{c_1z + c_2}{c_3z + c_4}$ where $c_1,\dots,c_4$ are
complex numbers with $c_1c_4-c_2c_3\ne0$.

The cross-ratio of four distinct points $a,b,c,d$ on the Riemann
sphere is defined by
\[
\chi(a,b,c,d)=\frac{a-c}{b-c}\cdot\frac{b-d}{a-d}.
\]
If one of the four points is $\infty$, the cross-ratio is defined by
continuity; that amounts to just omitting those two of the four
factors that involve $\infty$. It is easy to check that the
cross-ratio is invariant under fractional linear transformations.

Although defined for tetrads of distinct points (general position),
the cross-ratio extends continuously to tetrads in which two of the
four points are equal while the other two are distinct (configuration
$2+1+1$), and also to tetrads in the configuration $2+2$, provided we
allow $\infty$ as a value for the cross-ratio. By
Corollary~\ref{cor:range}, configuration $2+1+1$ cannot occur in the
range of a suite on the border of PL, but configuration $2+2$ is
consistent with the corollary.

\begin{lemma}\label{lem:chi}
Let $a,b,c,d$ be points in the Riemann sphere $\P$.
\begin{enumerate}
\item If the tetrad $(a,b,c,d)$ is $2+2$ then $\chi(a,b,c,d) \in
  \{0,1,\infty\}$.
\item If $a,b,c,d$ are distinct then $\chi(a,b,c,d) \notin \{0,1,\infty\}$.
\end{enumerate}
\end{lemma}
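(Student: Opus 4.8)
To prove Lemma~\ref{lem:chi} I would reduce both claims to elementary algebra, using the single-fraction form of the cross-ratio, and dispose of the point $\infty$ by fractional linear invariance.

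First I would record the identities, valid whenever $a,b,c,d$ are finite,
\[
\chi(a,b,c,d)=\frac{(a-c)(b-d)}{(b-c)(a-d)},\qquad
\chi(a,b,c,d)-1=\frac{(a-b)(c-d)}{(b-c)(a-d)},
\]
the second following from the expansion $(a-c)(b-d)-(b-c)(a-d)=(a-b)(c-d)$. Reading off zeros and poles of these expressions: for a finite tetrad, $\chi=0$ iff $(a-c)(b-d)=0$ while $(b-c)(a-d)\ne0$; $\chi=\infty$ iff $(b-c)(a-d)=0$ while $(a-c)(b-d)\ne0$; and $\chi=1$ iff $(a-b)(c-d)=0$ while $(b-c)(a-d)\ne0$.

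For Claim~(2), if $a,b,c,d$ are distinct (and, after the reduction below, finite) then each of the six pairwise differences is nonzero, so all three numerators and the common denominator above are nonzero, whence $\chi\notin\{0,1,\infty\}$. For Claim~(1), a $2+2$ tetrad splits into exactly one of three cases according to which pair of positions shares which value. If $a=b$ and $c=d$ (so $a\ne c$), both the numerator and the denominator of $\chi$ equal $(a-c)^2\ne0$ and $(a-b)(c-d)=0$, so $\chi=1$. If $a=c$ and $b=d$ (so $a\ne b$), the numerator of $\chi$ vanishes and its denominator is $-(a-b)^2\ne0$, so $\chi=0$. If $a=d$ and $b=c$ (so $a\ne b$), the denominator of $\chi$ vanishes and its numerator is $-(a-b)^2\ne0$, so $\chi=\infty$. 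In all cases $\chi\in\{0,1,\infty\}$.

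It remains to remove the standing assumption that the four points are finite. Since the cross-ratio is invariant under fractional linear transformations, and a map of the form $\phi(z)=1/(z-w)$ with $w$ chosen outside the finite set of values occurring in the tetrad has its pole off that set and sends $\infty$ to $0$, the transformed tetrad $(\phi(a),\phi(b),\phi(c),\phi(d))$ consists of finite points and has the same cross-ratio and the same coincidence pattern (all distinct, resp.\ $2+2$) as the original. Applying the finite-case analysis to it finishes the proof. The only point needing any care is the choice of $\phi$ so that it moves every one of the four points off $\infty$, and that is immediate; I expect no real obstacle in the argument.
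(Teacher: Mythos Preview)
Your proof is correct and follows essentially the same route as the paper: the same three-case check for the $2+2$ configurations in Claim~(1), and the same algebraic identity $(a-c)(b-d)-(b-c)(a-d)=(a-b)(c-d)$ to handle the value $1$ in Claim~(2). The only difference is cosmetic: where the paper disposes of the possibility that one of the four points is $\infty$ by a short case analysis, you instead move the whole tetrad into the finite plane by a fractional linear map and invoke invariance of the cross-ratio---a slightly cleaner maneuver, but not a different idea.
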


\begin{proof}
To prove claim~(1), check that
\begin{itemize}
  \item if $a=b \ne c=d$ then $\chi(a,b,c,d)=1$,
  \item if $a=c \ne b=d$ then $\chi(a,b,c,d)=0$,
  \item if $a=d \ne b=c$ then $\chi(a,b,c,d)=\infty$.
\end{itemize}

We prove claim~(2) by reductio ad absurdum. Let $a,b,c,d$ be arbitrary
distinct points in $\P$.

First suppose that $\chi(a,b,c,d)\in\{0,\infty\}$.
If all points $a,b,c,d$ are complex numbers (not $\infty$) then
clearly $\chi(a,b,c,d) \notin \{0,\infty\}$. If $a=\infty$ then
$\chi(a,b,c,d) = \frac{b-d}{b-c} \notin \{0,\infty\}$. The cases
$b=\infty$, $c=\infty$ and $d=\infty$ are similar.

Second suppose that $\chi(a,b,c,d) = 1$. This is equivalent to each of
the following equations:
\begin{align*}
  (a-c)(b-d)&=(b-c)(a-d)\\
  ab-ad-bc+cd&=ab-bd-ac+cd\\
  bd+ac-ad-bc&=0\\
  (a-b)(c-d)&=0.
\end{align*}
So either $a=b$ or $c=d$.
\end{proof}

Every suite $\sigma$ of length 8 consists of a domain tetrad
$\dom{\sigma}$, where all four points are distinct, and a range tetrad
$\range{\sigma}$.

\begin{lemma}\label{lem:2+2}
  No suite $\sigma$ of length 8 such that $\range{\sigma}$ is $2+2$ is a
  limit point of the PL manifold of suites of length 8.
\end{lemma}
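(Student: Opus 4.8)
The plan is to use the cross-ratio as the invariant that obstructs approximation, arguing by contradiction. Suppose such a $\sigma=(p_1,\dots,p_4,q_1,\dots,q_4)$, with $\range\sigma$ in configuration $2+2$, were a limit point of $\PL$. Since $\P^8$ is metrizable, I would pick a sequence of PL suites $\sigma^{(m)}=(p_1^{(m)},\dots,p_4^{(m)},q_1^{(m)},\dots,q_4^{(m)})$ converging to $\sigma$, and first record two distinctness facts. The domain half of a suite consists of distinct points and distinctness is an open condition, so $p_1^{(m)},\dots,p_4^{(m)}$ are distinct for all large $m$; discarding the earlier terms, assume this for all $m$. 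And since $\sigma^{(m)}\in\PL$ it is realized by a projective linear --- hence, in the one-qubit case, fractional linear --- transformation $g_m$, which is a bijection of the Riemann sphere; therefore the four range points $q_i^{(m)}=g_m(p_i^{(m)})$ are distinct as well.

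Next I would invoke the invariance of the cross-ratio under fractional linear transformations (\S\ref{sub:xratio}): for every $m$, both tetrads being in general position, $\chi(p_1^{(m)},p_2^{(m)},p_3^{(m)},p_4^{(m)})=\chi(q_1^{(m)},q_2^{(m)},q_3^{(m)},q_4^{(m)})$, with both sides well-defined. Then I would let $m\to\infty$. On the left, $(p_1^{(m)},\dots,p_4^{(m)})\to(p_1,\dots,p_4)$, a tetrad of distinct points, and the cross-ratio is continuous on the open set of distinct tetrads, so the left side tends to $\chi(p_1,p_2,p_3,p_4)$, which by Lemma~\ref{lem:chi}(2) lies outside $\{0,1,\infty\}$. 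On the right, $(q_1^{(m)},\dots,q_4^{(m)})\to(q_1,\dots,q_4)$, a $2+2$ tetrad, and the cross-ratio extends continuously across $2+2$ configurations (\S\ref{sub:xratio}), so the right side tends to $\chi(q_1,q_2,q_3,q_4)$, which by Lemma~\ref{lem:chi}(1) lies in $\{0,1,\infty\}$. Since the two sides are equal for every $m$ their limits coincide, and we have our contradiction.

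The one step deserving genuine care --- and essentially the only obstacle --- is the continuity used on the right: the cross-ratio is honestly discontinuous at the degenerate configuration $3+1$, so I must check that the perturbed range tetrads $(q_1^{(m)},\dots,q_4^{(m)})$ never leave the region on which the continuous extension of \S\ref{sub:xratio} is valid. This is automatic: if the limiting $2+2$ tetrad has, after relabeling, $q_1=q_2\ne q_3=q_4$, then a sufficiently close tetrad has two components near $q_1$ and two near $q_3$, hence is itself of type $2+2$, $2+1+1$, or all-distinct, never $3+1$ --- and all of these lie in the domain of the continuous extension. Everything else (openness of distinctness, bijectivity of fractional linear maps, invariance and continuity of the cross-ratio on distinct tetrads, and Lemma~\ref{lem:chi}) is standard or already established, so the argument closes; combined with Corollary~\ref{cor:range}, which already excludes the $2+1+1$ configuration on the border, this leaves only the $3+1$ and all-equal configurations, matching Theorem~\ref{thm:inf}.
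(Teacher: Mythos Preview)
Your argument is correct and follows essentially the same route as the paper: assume a sequence of PL suites converges to $\sigma$, use invariance of the cross-ratio under fractional linear maps to equate $\chi(\dom{\tau_k})$ with $\chi(\range{\tau_k})$, pass to the limit using continuity of $\chi$ on distinct tetrads and its continuous extension to $2+2$ tetrads, and invoke Lemma~\ref{lem:chi} for the contradiction. You supply more explicit justification than the paper does (distinctness of the $q_i^{(m)}$ via bijectivity of $g_m$, and a check that the range sequence avoids the $3+1$ locus where $\chi$ fails to extend), but the underlying idea and structure are the same.
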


\begin{proof}
  By reductio ad absurdum, suppose that $\sigma$ is a suite of length
  8 such that $\range{\sigma}$ is $2+2$ and $\sigma$ is a limit point of
  the PL manifold of suites of length 8. Then there is a sequence
  $\tau_1, \tau_2, \dots$ of PL suites of length 8 that converges to
  $\sigma$. In particular, the domain tetrads $\dom{\tau_k}$ of suites
  $\tau_k$ converge to $\dom{\sigma}$, and the range tetrads $\range{\tau_k}$
  of suites $\tau_k$ converge to $\range{\sigma}$. By continuity,
  cross-ratios $\chi(\dom{\tau_k})\to \chi(\dom{\sigma})$ and cross-ratios
  $\chi(\range{\tau_k})\to \chi(\dom{\sigma})$ as $k\to\infty$.
  Since fractional linear transformations preserve cross-ratios, every
  $\chi(\dom{\tau_k}) = \chi(\range{\tau_k})$, and so
\begin{align*}
 \chi(\dom{\sigma}) &= \lim_{k\to\infty} \chi(\dom{\tau_k})\\
                 &= \lim_{k\to\infty} \chi(\range{\tau_k})\\
                 &= \chi(\range{\sigma})
\end{align*}
  which contradicts Lemma~\ref{lem:chi}
\end{proof}

In contrast to configurations $2+1+1$ and $2+2$, the cross-ratio does
not extend continuously to tetrads in the configurations $3+1$ or $4$.
Indeed, any neighborhood of any tetrad in either of these
configurations contains general-position tetrads with all possible
cross-ratios.

To see this, it suffices to prove the claim for one tetrad of each of
these two sorts, say $(0,0,0,\infty)$ and $(0,0,0,0)$.  (This
sufficiency follows immediately from the facts that the group of
fractional linear transformations acts transitively on each of these
two sorts of tetrads (because it acts doubly transitively\footnote{In
  fact it acts triply transitively, but that's not relevant here.} on
the Riemann sphere) and preserves cross-ratios.)  Given any possible
cross-ratio, we can find two tetrads $(a,b,c,d)$ and
$(a',b',c',\infty)$ with that cross-ratio, where all of
$a,b,c,d,a',b',c'$ are complex numbers (not $\infty$). Now apply to
these tetrads the fractional linear transformation $z\mapsto\eps z$
for a very small, positive, real $\eps$.  The resulting tetrads have
the same cross-ratio and are very close --- arbitrarily close as
$\eps\to0$ --- to $(0,0,0,0)$ and $(0,0,0,\infty)$, respectively, as
claimed.

\subsection{Proof of Infinite Approximability Theorem}

We assume that $\ell\ge3$ and the principal quantum system consists of
just one qubit. Fix an orthonormal basis $\e_1,\e_2$ in $\H = \C^2$.

\begin{lemma}\label{lem:singular}
  A suite $\sigma$ is exactly realizable by means of a nonzero
  singular linear operator $L$ on $\H$ (so that $\sigma$ extends to
  $\QL$) if and only if all $\ell$ points in $\range{\sigma}$ are equal.
\end{lemma}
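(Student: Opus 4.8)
The plan is to exploit the $2\times2$ structure directly. A nonzero singular linear operator $L$ on $\H=\C^2$ has rank exactly $1$, so its range is a one-dimensional subspace of $\H$, i.e.\ a single point $r\in\P$. For every $\v\in\H$ with $L\v\ne\vec0$ we have $\Q(L\v)=r$, hence $\QL$ is constant on its domain with value $r$. Therefore, if $\sigma$ extends to $\QL$ for a nonzero singular $L$, then $\sigma(p_i)=r$ for every $p_i\in\dom{\sigma}$, which says precisely that all $\ell$ points of $\range{\sigma}$ coincide (with $r$). This gives the ``only if'' direction immediately.

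For the ``if'' direction, suppose all $\ell$ points in $\range{\sigma}$ equal a common point $r=\Q\w$ for some nonzero $\w\in\H$. I want a nonzero singular $L$ with $\Q(L p_i)=r$ for each of the $\ell$ distinct points $p_i=\Q\v_i$ in $\dom{\sigma}$. Take the rank-one operator $L=\w\bra{u}$ (in Dirac notation, $L\v=\langle u|\v\rangle\,\w$) for a fixed nonzero $\w$ representing $r$ and a vector $\ket u$ to be chosen. Then $L$ is nonzero and singular, its range is spanned by $\w$, and $L\v_i\ne\vec0$ exactly when $\langle u|\v_i\rangle\ne0$, in which case $\Q(Lv_i)=\Q\w=r$. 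So it suffices to choose $\ket u$ so that $\ket u$ is orthogonal to none of the finitely many $\v_i$; since each orthogonality condition $\langle u|\v_i\rangle=0$ cuts out a proper (complex) subspace of $\H$, and a vector space over $\C$ is not a finite union of proper subspaces, such a $\ket u$ exists. (Concretely in $\C^2$ one can just pick $\ket u$ avoiding the $\ell$ lines $\v_i^{\perp}$.) Then $\QL$ has all of $\dom{\sigma}$ in its domain and sends each $p_i$ to $r$, so $\QL$ coincides with $\sigma$ on $\dom{\sigma}$, i.e.\ $\sigma$ extends to $\QL$.

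There is no real obstacle here; the only point requiring a word of care is the ``only if'' direction's use of the fact that a nonzero singular operator on a two-dimensional space has rank one (so its projective image is a single point), and the ``if'' direction's avoidance-of-finitely-many-hyperplanes argument to keep every $p_i$ in $\dom{\QL}$. Both are routine. One could alternatively phrase the ``if'' direction using Lemma~\ref{lem:rg}: a singular matrix $\begin{pmatrix} a & b\\ c & d\end{pmatrix}$ with $ad-bc=0$ and not all entries zero induces $z\mapsto (az+b)/(cz+d)$, which is the constant map onto $a/c=b/d$ wherever it is defined, and one then checks the finitely many domain points are not sent to $0/0$; but the coordinate-free rank-one description is cleaner.
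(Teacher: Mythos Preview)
Your proof is correct and follows the same idea as the paper's. For the ``only if'' direction both arguments are identical: a nonzero singular operator on $\C^2$ has rank one, so $\QL$ is constant. For the ``if'' direction the paper simply takes $L\e_1=L\e_2=\v$ (where $\Q\v$ is the common range point), which is your $\w\bra{u}$ with the fixed choice $\ket u=\e_1+\e_2$. Your version is actually a bit more careful: you explicitly choose $\ket u$ so that no domain point $p_i$ lies in $\ker L$, whereas the paper's fixed choice has kernel equal to $\Q(\e_1-\e_2)$, which could in principle coincide with one of the $p_i$. That gap is easily patched (just pick the basis appropriately), but your avoidance-of-finitely-many-hyperplanes argument handles it cleanly without any such adjustment.
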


\begin{proof}
  If a nonzero singular linear operator $L$ exactly realizes the given
  suite $\sigma$ then $L\e_1$ and $L\e_2$ are collinear and $\QL$ is
  constant. Hence all points in $\range{\sigma}$ are equal.

  If all points in $\range{\sigma}$ are equal, say to a point $\Q\v$, then
  the desired $L$ can be obtained by setting $L\e_1 = L\e_2 = \v$.
\end{proof}

\begin{proposition}[Border of PL]
\label{pro:border}
  A suite $\sigma$ belongs to the border of the PL manifold if and
  only if the range part $\range{\sigma}$ satisfies one of the following
  two conditions.
  \begin{enumerate}
    \item All $\ell$ points in $\range{\sigma}$ are equal.
    \item Exactly $\ell-1$ of the $\ell$ points in $\range{\sigma}$ are
      equal.
  \end{enumerate}
\end{proposition}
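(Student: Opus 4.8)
The plan is to prove the two implications of the ``if and only if'' separately, using the machinery already developed in the single-qubit setting. Recall that the border of PL consists of infinitely approximable suites not lying in PL, and that infinitely approximable means lying in the closure of PL.

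\smallskip

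First I would prove the ``if'' direction. Suppose $\range{\sigma}$ satisfies condition~(1), i.e.\ all $\ell$ points in $\range{\sigma}$ are equal to some point $q$. By Lemma~\ref{lem:singular}, $\sigma$ is then exactly realizable by a singular linear operator; but such $L$ are limits of invertible operators (Lemma~\ref{lem:approximants}), and the corresponding PL suites converge to $\sigma$ by continuity of the quotient map, so $\sigma$ is in the closure of PL. It remains to check $\sigma\notin\PL$: if a projective linear (hence invertible) $L$ realized $\sigma$, then $\QL$ would be a bijection of $\P$, so it could not send the $\ell\ge 3$ \emph{distinct} points of $\dom{\sigma}$ all to the single point $q$. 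Hence $\sigma$ is on the border. Now suppose $\range{\sigma}$ satisfies condition~(2): exactly $\ell-1$ points of $\range{\sigma}$ coincide, say all equal to $q$ except the $i\th$ one $r\ne q$. To see $\sigma\notin\PL$: an invertible $\QL$ sending $\ell-1\ge 2$ distinct domain points to a common value $q$ is impossible by injectivity. To see $\sigma$ is a limit of PL suites, I would build an explicit approximating family. Using Lemma~\ref{lem:rg}, a fractional linear map $z\mapsto (az+b)/(cz+d)$ with determinant going to $0$ degenerates, in the limit, to a constant map on all but one exceptional point (the preimage of where $c z + d$ vanishes); by choosing the degeneration so that the exceptional point converges to the $i\th$ domain point $p_i$ and its image converges to $r$, while all other domain points get pushed toward $q$, one obtains PL suites converging to $\sigma$. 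This is the one computation I would actually carry out; it is the analogue, in the single-qubit picture, of a ``rank-drop'' limit.

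\smallskip

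Next I would prove the ``only if'' direction: if $\sigma$ is on the border of PL, then $\range{\sigma}$ is in configuration~(1) or~(2). Since we are in the single-qubit case, general position just means distinctness, and Corollary~\ref{cor:range} already tells us that $\range{\sigma}$ has at most two distinct points. So $\range{\sigma}$ falls into one of: all $\ell$ equal (case~1); exactly $\ell-1$ equal (case~2); or a configuration where the two values each occur at least twice — in particular, when $\ell$ is large enough, configurations with multiplicity pattern $(k,\ell-k)$ for $2\le k\le \ell-2$. I must rule out this last family. The idea is to pick four domain points $p_{i_1},p_{i_2},p_{j_1},p_{j_2}$ whose images in $\range{\sigma}$ realize a $2+2$ tetrad — possible precisely because both range values occur with multiplicity $\ge 2$ — while the four chosen domain points are distinct (hence a general-position, i.e.\ $3+1$ or $4$ or generic, tetrad). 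Then apply Lemma~\ref{lem:2+2} to the length-$8$ sub-suite formed by these four points and their images: since $\sigma$ is a limit of PL suites of length $\ell$, restricting to these four coordinates gives length-$8$ PL suites converging to the length-$8$ sub-suite, so that sub-suite would be a limit point of the length-$8$ PL manifold with a $2+2$ range, contradicting Lemma~\ref{lem:2+2}. Hence the $(k,\ell-k)$ configurations with $2\le k\le\ell-2$ are impossible, leaving only cases~(1) and~(2).

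\smallskip

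The main obstacle I anticipate is the constructive half of the ``if'' direction for condition~(2): exhibiting a one-parameter family of invertible operators (equivalently fractional linear transformations) whose associated PL suites converge to $\sigma$. One must simultaneously control where the single ``surviving'' point goes (toward $r$) and ensure every other domain point is dragged toward $q$, with the degeneration rate tuned so both limits hold. A clean way to organize this: first apply a fixed fractional linear transformation sending $p_i\mapsto\infty$ and $q\mapsto 0$ and $r\mapsto\infty$ is not quite available (three conditions), so instead normalize so that $p_i\mapsto\infty$, then use maps of the form $z\mapsto \eps\,\phi(z) + r$ where $\phi$ is a fixed fractional linear transformation with a pole at $p_i$ and $\phi(p_m)$ finite for $m\ne i$; as $\eps\to 0$ this sends every $p_m$ ($m\ne i$) toward $r$ — which is the wrong target. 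The correct version sends $p_m$ toward $q$, so one instead wants maps $z\mapsto q + \eps\,\psi(z)$ on the complement of a small neighborhood of $p_i$, glued to the behavior near $p_i$; the honest statement is that a fractional linear map $\frac{az+b}{cz+d}$ with $c,d$ fixed so the pole sits at $p_i$, and $a,b$ chosen so the value at $p_i$ (i.e.\ $a/c$) equals $r$, while $(ad-bc)\to 0$, does exactly this. Pinning down these coefficients and verifying the two convergences via Lemma~\ref{lem:rg} is the real work; everything else reduces to the cited lemmas.
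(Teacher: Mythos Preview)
Your overall strategy matches the paper's: the only-if direction via Corollary~\ref{cor:range} followed by extracting a length-$8$ sub-suite with $2+2$ range and invoking Lemma~\ref{lem:2+2}; the if direction via Lemma~\ref{lem:singular} and Lemma~\ref{lem:approximants} for case~(1), and an explicit degenerating family of fractional linear maps for case~(2). The only-if argument and case~(1) are correct as written.

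The genuine problem is your construction in case~(2). In your final attempt you fix $c,d$ so that the pole of $(az+b)/(cz+d)$ sits at $p_i$ and then say ``the value at $p_i$ (i.e., $a/c$) equals $r$.'' But if $p_i$ is the pole then the value there is $\infty$, not $a/c$; the quantity $a/c$ is the value at $\infty$. With $c,d$ fixed and $ad-bc\to 0$ you get $f(p_i)=\infty$ for all parameters and $f(p_m)\to a/c$ for $m\neq i$, so the construction sends the exceptional point to $\infty$ and the rest to $a/c$ --- which is exactly the normalized situation $r=\infty$, $q=a/c$ you were trying to avoid. Your earlier attempt also overcounted the normalization constraints: you tried to pin down $p_i$, $q$, and $r$ simultaneously (three conditions), but only the two \emph{range} values need to be normalized.

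The paper's resolution is the clean version of what you were circling. First use a fractional linear $f$ sending $q\mapsto 0$ and $r\mapsto\infty$ (two conditions, always solvable) and observe that if PL suites $\tau_k\to f\sigma$ then $f^{-1}\tau_k\to\sigma$. After this reduction, $\sigma=(p_1,\dots,p_\ell,0,\dots,0,\infty)$ with $\infty$ in, say, the last range slot. Pick any fractional linear $g$ with $g(p_\ell)=\infty$ and set $g_k=g/k$. Then $g_k(p_\ell)=\infty$ for every $k$, while $g_k(p_m)=g(p_m)/k\to 0$ for $m<\ell$; the restrictions of the $g_k$ to $\dom{\sigma}$ are the desired PL approximants.
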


\begin{proof}
  We first prove the only-if implication. Suppose that $\sigma$
  belongs to the border of PL. By Claim~\ref{cla:range}, the range
  tuple $\range{\sigma}$ contains at most two distinct points. If all
  points in $\range{\sigma}$ are equal, we are done. Suppose that
  $\range{\sigma}$ contains exactly two distinct points. Then the index set
  $\{\ell+1, \dots, 2\ell\}$ splits into disjoint parts $I$ and $J$
  such that the same point $p$ occurs in all $I$ positions and a
  different point $q$ occurs in all $J$ positions. Without loss of
  generality, $I$ contains at least two indices. Suppose toward a
  contradiction that $J$ contains at least two indices as well, so
  that $\ell\ge4$. Then there is a suite $\sigma_0$ of length 8
  embedded in the suite $\sigma$ of length $2\ell$ such that the range
  of $\sigma_0$ is of type $2+2$ and $\sigma_0$ is a limit point of
  the PL manifold of suites of length 8. This contradicts
  Lemma~\ref{lem:2+2}.

  Next we prove the if implication. Notice that, in either of the two
  cases, $\sigma$ does not belong to PL. Indeed, fractional linear
  transformations (and projective linear transformations in general)
  preserve equality and disequality, and so all points in the range
  part of a PL suite are distinct. It remains to prove that $\sigma$
  belongs to the closure of PL.

  If all points in $\range{\sigma}$ are equal then, by the preceding lemma,
  $\sigma$ is exactly realizable by means of a nonzero singular linear
  operator $L$. Now use Lemma~\ref{lem:approximants}.

  Suppose that some $\ell-1$ points in $\range{\sigma}$ are equal, say to a
  point $p$, but another point $q$ also occurs in $\range{\sigma}$. It
  suffices to consider the case where $p=0$ and $q= \infty$. Indeed,
  there is a fractional linear transformation $f$ that moves $p,q$ to
  $0,\infty$ respectively. If PL suites $\tau_k$ converge to $f\sigma$
  then PL suites $f^{-1}\tau_k$ converge to $\sigma$.

  Without loss of generality, $\infty$ occurs in the very last
  position in $\sigma$, so that $\sigma$ has the form
\[
  (p_1, \dots, p_{\ell-1}, p_\ell, 0, \dots,0, \infty).
\]
  There is a fractional linear transformation $g$ that sends $p_\ell$
  to $\infty$. For each $k=1,2,\dots$, let $g_k$ be the fractional
  linear transformation $g/k$, and let $\tau_k$ be the restriction of
  $g_k$ to $\dom{\sigma}$. The sequence of PL suites $\tau_k$ converges
  to $\sigma$.
\end{proof}

Theorem~\ref{thm:inf} follows from Proposition~\ref{pro:border} and
Lemma~\ref{lem:singular}.

\section{Final remarks}
\label{sec:final}

\subsection{Mixed states}

We have been working with pure states. One may consider a
generalization of the results above to mixed states. Here we just
point out that the scenario in the beginning of our story readily
generalizes to mixed states and channel representation.

As is, the scenario is not a channel; since the unsuccessful
measurement result is discarded, the trace is not preserved. But the
scenario becomes a channel if the unsuccessful measurement result is
not discarded and the measurement result is not looked at. The
modified scenario corresponds to a composition of three channels, as
follows. To simplify the exposition, we consider only the case of one
princpal qubit and one ancilla, and we presume that the designated
initial and final states of the ancilla are $\ket0$.

First, we take the essential qubit, in some state $\ketpsi$ and adjoin
to it a prepared ancilla.  This is, to begin with, a linear embedding
$E$ of the Hilbert space $\C^2$ for a single qubit into the Hilbert
space $\C^4$ for two qubits; it sends $\ketpsi$ to $\ket{0,\psi}$.
The embedding preserves lengths of vectors, and we get a channel by
sending each linear operator $R$ on $\C^2$ to the linear operator  $E
R E^\dag$ on $\C^4$.  This is the channel for the first part of our
scenario.

Second, we apply the unitary operator $U$ to the two-qubit system.
That corresponds to the channel that sends any linear operator $R$ on
$\C^4$ to $U R U^\dag$.

Finally, we measure the ancilla.  The measurement involves two
projection operators $P_0$ and $P_1$, from $\C^4$ to $\C^2$,
corresponding to the values $\ket0$ and $\ket1$ for the ancilla
respectively.  Each $P_i$ induces a linear transformation $T_i(R) =
P_i R P_i^\dag$ from linear operators on $\C^4$ to linear operators on
$\C^2$, but the projection operators $P_i$ do not preserve lengths of
vectors, and the transformations $T_i$ do not preserve traces. If we
discarded the results in the case of failure, we'd have only $T_0$,
which isn't a channel.  But by keeping the (one-qubit) result in both
cases, we get
$T_0+T_1: R \mapsto\sum_{i=0}^1 P_i R P_i^\dag$,
and this is a channel, with
$\sum_{i=0}^1 P_i P_i^\dag = I + I = 2I.$
The factor 2 is needed to make the superoperator trace-preserving; the
source and target Hilbert spaces have different dimensions.

Composing the three parts, we have a channel
\[
  R \mapsto\sum_{i=0}^1 (P_i U E) R (P_i U E)^\dag.
\]

\subsection{One numerical function of suites}

If $L$ is a nonzero linear operator on $\H$, let $\rho(L)$ be the
ratio $\lambda_{\min}/\lambda_{\max}$ of the minimal and maximal
eigenvalues of $L^\dag L$. According to Theorem~\ref{thm:exact},
$\rho(L)$ is the maximum of the guaranteed success probabilities of
unitary operators on $\H^+$ realizing $L$ exactly. Since $\rho(L) =
\rho(L')$ if if $L,L'$ are collinear, define $\rho(\QL) =
\rho(L)$. For any exactly realizable suite $\sigma$ define
$\rho(\sigma)$ to be the supremum of $\rho(\QL)$ taken over all
nonzero linear operators $L$ such that $\sigma$ extends to
$\QL$. Finally, observe that a nonzero linear operator $L$ is singular
if and only if $\rho(L) = 0$.

\begin{claim}
  If $\sigma$ is a suite in the closure of PL such that, for some
  fixed $\rho_0>0$, every neighborhood of $\sigma$ contains a PL suite
  $\tau$ with $\rho(\tau) > \rho_0$, then $\sigma$ belongs to PL.
\end{claim}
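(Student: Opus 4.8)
The plan is a compactness argument. Since $\sigma$ lies in the closure of $\PL$ and every neighborhood of $\sigma$ contains a $\PL$ suite $\tau$ with $\rho(\tau)>\rho_0$, I would first choose a sequence of $\PL$ suites $\tau_1,\tau_2,\dots$ converging to $\sigma$ with $\rho(\tau_k)>\rho_0$ for all $k$. Because $\rho(\tau_k)$ is by definition the supremum of $\rho(\QL)$ over nonzero operators $L$ to which $\tau_k$ extends, and this supremum exceeds $\rho_0$, for each $k$ there is a nonzero $L_k$ with $\tau_k$ extending to $\QL_k$ and $\rho(L_k)=\rho(\QL_k)>\rho_0$. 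In particular $\rho(L_k)>0$, so $L_k$ is invertible (a nonzero operator $L$ is singular iff $\rho(L)=0$). Rescaling $L_k$ by a nonzero scalar affects neither $\QL_k$ nor $\rho(L_k)$, so I may normalize so that $\lambda_{\max}(L_k^\dag L_k)=1$; then $L_k$ is weakly contracting (Proposition~\ref{pro:literal}) and $\lambda_{\min}(L_k^\dag L_k)=\rho(L_k)>\rho_0$. Hence all $L_k$ lie in
\[
  K=\{L:\ L\text{ weakly contracting and }\lambda_{\min}(L^\dag L)\ge\rho_0\},
\]
a closed and bounded, hence compact, subset of the finite-dimensional space of linear operators on $\H$.

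Next I would pass to a subsequence along which $L_k\to L$ for some $L\in K$. The decisive point --- and the one place where the hypothesis $\rho_0>0$ is essential --- is that the limit $L$ is still invertible, since the closed condition $\lambda_{\min}(L^\dag L)\ge\rho_0>0$ passes to the limit; so $\QL$ is an honest (total) projective linear transformation of $\P$. Without the uniform bound $\rho_0$ the operators $L_k$ could drift toward the singular locus, and $L$ would realize only a proper, non-projective partial transformation --- exactly the situation of the border suites that fail to lie in $\PL$ (cf.\ the examples and Proposition~\ref{pro:border}).

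It remains to check that $\QL$ realizes $\sigma$, which puts $\sigma$ in $\PL$. Write $\sigma=(p_1,\dots,p_{2\ell})$ and $\tau_k=(p_1^{(k)},\dots,p_{2\ell}^{(k)})$, so $p_j^{(k)}\to p_j$ for each $j$ and $\QL_k(p_j^{(k)})=p_{\ell+j}^{(k)}$ for $j\le\ell$. Working in a fixed coordinate patch around each $p_j$, I would choose representative vectors $\v_j^{(k)}\to\v_j$ for $p_j^{(k)}\to p_j$ (as in the proof of Lemma~\ref{continuity}). Then $L_k\v_j^{(k)}\to L\v_j$, and $L\v_j\ne\vec0$ because $L$ is invertible and $\v_j\ne\vec0$; so by continuity of the quotient map $\Q$ at nonzero vectors, $p_{\ell+j}^{(k)}=\Q(L_k\v_j^{(k)})\to\Q(L\v_j)=\QL(p_j)$, while on the other hand $p_{\ell+j}^{(k)}\to p_{\ell+j}$. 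Hence $\QL(p_j)=p_{\ell+j}$ for every $j\le\ell$, i.e., the projective linear transformation $\QL$ extends the transformation specified by $\sigma$, so $\sigma\in\PL$. I expect the whole argument to be routine except for the invertibility-in-the-limit step flagged above, which is precisely where the quantitative hypothesis does its work.
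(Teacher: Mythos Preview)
Your proof is correct and follows the same compactness strategy as the paper: normalize the realizing operators $L_k$ so that $\lambda_{\max}(L_k^\dag L_k)=1$, extract a convergent subsequence, use the uniform bound $\rho_0$ to guarantee the limit is invertible, and then pass to the limit in the realization equations. The only difference is cosmetic: the paper obtains compactness via polar decomposition $L_k=U_kP_k$ and compactness of the unitary group together with the bounded diagonal entries of $P_k$, whereas you argue directly that the set $K=\{L:\lambda_{\max}(L^\dag L)\le 1,\ \lambda_{\min}(L^\dag L)\ge\rho_0\}$ is closed and bounded in a finite-dimensional space --- a slightly cleaner route to the same conclusion.
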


An equivalent way to formulate the claim is that, if $\sigma$ is on
the border of PL then every sequence $\tau_1, \tau_2, \dots$ of PL
suites converging to $\sigma$ must have $\rho(\tau_k)\to0$ as
$k\to\infty$.

\begin{proof}
  Apply the hypothesis of the claim to choose a sequence $\tau_1,
  \tau_2, \dots$ of PL suites converging to
  $\sigma$ and having $\rho(\tau_k) > \rho_0$. Choose linear
  operators $L_k$ with $\rho(L_k) > \rho_0$ such that $\tau_k$ extends
  to $\QL_k$; we may assume that $\lambda_{\max}(L_k) = 1$, so that
  $\lambda_{\min}(L_k) > \rho_0$.

  Every $L_k$ has a polar decomposition $U_k P_k$ where $U_k$ is
  unitary and $P_k$ is Hermitian and positive definite (not just
  semi-definite, because $L_k$ is invertible).  Notice that $L_k^\dag
  L_k = P_k^\dag P_k$.  Since $P_k$ is Hermitian, it can be
  diagonalized, say $P_k = B_k D_k B_k^\dag$ where $B_k$ is unitary
  and $D_k$ is diagonal.  Since the eigenvalues of $P_k^\dag P_k$ lie
  between $\rho_0$ and 1, the diagonal entries in $D$ lie between
  $\sqrt{\rho_0}$ and 1.

  By passing to a subsequence of $\tau_1, \tau_2, \dots$, we can
  arrange that the unitary matrices $U_k$ converge to some unitary
  matrix $U$ (because the unitary group is compact), that the unitary
  matrices $B_k$ converge to a unitary matrix $B$ (same reason), that
  therefore $B_k^\dag\to B^\dag$, that the diagonal matrices $D_k$
  converge to a diagonal matrix $D$ (because the eigenvalues all lie
  in the bounded interval $[\sqrt{\rho_0},1]$), and that therefore the
  matrices $P_k$ converge to a matrix $P = BDB^\dag$, and the $L_k$
  converge to some $L = U P$.

  Because of the convergence, we have that the eigenvalues of $D$ lie
  in $[\sqrt{\rho_0},1]$ and, in particular, are positive.  So $D$ is
  invertible, and therefore so are $P = BDB^\dag$ and $L = UP$.

  Finally, let $\vec d$ and $\vec r$ be the domain and range parts of
  $\sigma$ respectively, and let $\vec d_k$ and $\vec r_k$ be the
  domain and range parts of $\tau_k$ respectively. Since $L_k(\vec
  d_k) = \tau_k(\vec d_k) = \vec r_k\to \vec r$ as $k\to\infty$, and
  as $L_k(\vec d)\to L(\vec d)$ by continuity, we have $L(\vec d) =
  \vec r$.  Thus, $\sigma$ is in PL, as claimed.
\end{proof}

\subsection{Inapproximability in the single-qubit case}

According to \S\ref{sub:xratio}, the cross-ratio does not extend
continuously to tetrads of points in Riemann sphere that are in
configurations $3+1$ or $4$; any neighborhood of any tetrad in either
of these configurations contains general-position tetrads with all
possible cross-ratios. Thus there cannot be a theorem of the form: If
the cross-ratios of two general-position tetrads differ by at least
$\eps$, then the suite of length $8$ consisting of these two tetrads
cannot be within $\delta$ of the FL manifold of suites of length
$8$. Indeed, no matter how big we make $\eps$ and how small we make
$\delta$, counterexamples can be found within $\delta$ of the
double-suite $(0,0,0,0;0,0,0,0)$.

The best we can hope to do in the direction of such an
inapproximability theorem is to assume, as an additional hypothesis,
that the tetrads in question are bounded away from the singular locus
of the cross-ratio, i.e., the locus $S$ of tetrads of configurations
$3+1$ and $4$.

\begin{claim}
  Let $\eps$ and $\gamma$ be positive real numbers.  Then there is a
  positive real $\delta$ with the following property.  Let $t$ and
  $t'$ be tetrads whose distance from the singular locus $S$ of the
  cross-ratio function is at least $\gamma$.  Suppose further that the
  distance between their cross-ratios is at least $\eps$.  Then the
  distance between $t$ and $t'$ is at least $\delta$.
\end{claim}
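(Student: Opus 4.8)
The plan is to prove this by a compactness argument. The key observation is that the cross-ratio function $\chi$ is continuous on the set of tetrads that avoid the singular locus $S$ (configurations $3+1$ and $4$), since away from $S$ every tetrad is either in general position or in configuration $2+1+1$ or $2+2$, and on all of these the cross-ratio extends continuously (with values in $\P$, allowing $\infty$). Actually, to be safe, I would work with the set $K_\gamma$ of tetrads whose distance from $S$ is at least $\gamma$; this is a closed, hence compact, subset of $\P^4$, and $\chi$ restricted to $K_\gamma$ is continuous.

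First I would set up the contrapositive. Suppose, toward a contradiction, that for some fixed $\eps,\gamma>0$ no such $\delta$ works. Then for every $k=1,2,\dots$ there are tetrads $t_k, t'_k \in K_\gamma$ with $\FS$-distance between $t_k$ and $t'_k$ less than $1/k$, yet with the distance between $\chi(t_k)$ and $\chi(t'_k)$ at least $\eps$. Here the distance on cross-ratios is measured in $\P = \CP^1$, i.e.\ via the Fubini-Study (equivalently, chordal) metric on the Riemann sphere, so that $\infty$ is an allowed finite-distance value.

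Next I would extract convergent subsequences. Since $\P^4$ is compact, by passing to a subsequence we may assume $t_k \to t$ and $t'_k \to t''$ in $\P^4$. Because $t_k - t'_k \to 0$ and both converge, we get $t = t''$. Because $K_\gamma$ is closed, $t \in K_\gamma$, so $\chi$ is continuous at $t$. Hence $\chi(t_k) \to \chi(t)$ and $\chi(t'_k) \to \chi(t)$, so the distance between $\chi(t_k)$ and $\chi(t'_k)$ tends to $0$, contradicting the assumption that it is always at least $\eps$. This contradiction establishes the claim.

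The only real point requiring care — and the step I expect to be the main obstacle — is justifying that $\chi$ is genuinely continuous on $K_\gamma$, i.e.\ that avoiding $S$ by a fixed margin $\gamma$ really does force continuity. The subtlety is that $\chi$ as a function to $\CP^1$ is continuous on the open set of general-position tetrads, and extends continuously across configurations $2+1+1$ and $2+2$ (as discussed in \S\ref{sub:xratio}), but is genuinely discontinuous at $S$; one must check that $K_\gamma$ contains none of the bad tetrads and that the extension to the $2+1+1$ and $2+2$ strata glues continuously with the general-position values in a neighborhood within $K_\gamma$. This follows because the cross-ratio is a rational function of the four points (ratio of products of coordinate differences) that is well-defined and finite-valued as a map into $\CP^1$ precisely off the locus where numerator and denominator vanish simultaneously in homogeneous coordinates, and that bad locus is exactly $S$; staying distance $\gamma$ from $S$ keeps us in the domain of continuity. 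Once this is granted, the compactness argument above goes through verbatim.
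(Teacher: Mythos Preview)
Your argument is correct and is essentially the paper's argument unpacked: the paper simply observes that $D=K_\gamma$ is compact, that $\chi$ is continuous on it, hence uniformly continuous, and then takes $\delta$ from the definition of uniform continuity --- the contrapositive is exactly the claim. Your sequence-extraction contradiction is just the standard proof of ``continuous on compact $\Rightarrow$ uniformly continuous,'' so the two approaches coincide; the paper's version is terser, while you spell out more carefully why $\chi$ is continuous on $K_\gamma$.
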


\begin{proof}
  Let $\gamma>0$ be given and let $D$ be the space of tetrads whose
  distance from $S$ is at least $\gamma$.  This is a closed subsace of
  the compact space of all tetrads, so it is also compact.  The
  cross-ratio is a continuous function from $D$ to the Riemann sphere,
  so, by compactness, it is uniformly continuous.  Given $\eps>0$, let
  $\delta>0$ be as in the definition of uniform continuity: Any two
  points of $D$ whose distance is $<\delta$ have cross-ratios whose
  distance is $<\eps$.  In view of the invariance of the
  cross-ratio under fractional linear transformations, that is exactly
  (the contrapositive of) the assertion
  of the proposition.
\end{proof}

  The preceding argument is valid for any distance functions inducing
  the usual topologies on the space of tetrads and on the Riemann
  sphere. Quantitative information about how the $\delta$ in the claim
  varies as a function of $\gamma$ and $\eps$ could be obtained by
  methods of elementary calculus (Lagrange multipliers).

\subsection{Variety for PL in the single-qubit case}
\label{sub:variety}

Proposition~\ref{pro:variety} asserts that, in any coordinate patch of
$\P^{2\ell}$, there exists an algebraic variety $V$ such that $\PL
\subseteq V$ and $G \cap V \subseteq \PL$ for some full-measure open
set $G$. The proof of the proposition is not constructive. We can do
better and provide a constructive proof for the proposition. Here we restrict attention to the single qubit case where the construction is especially easy due to the cross-ratio function. In the general case, one can use the construction of proof of Lemma~\ref{lem:n+1}.

If $\ell \le 3$ then, by Proposition~\ref{lem:n+1}, every
general-position suite belongs to PL, so $V$ could be given by $0=0$.
Suppose that $\ell\ge4$.

Let $V$ be the algebraic variety in the Riemann sphere, in variables
$a_1,\dots,a_\ell, b_1,\dots,b_\ell$ given by $\ell-3$ polynomial
equations obtained from $\ell-3$ equations
\begin{equation}\label{v}
  \chi(a_1,a_2,a_3,a_i) = \chi(b_1,b_2,b_3,b_i)\quad\
  \text{where}\ i = 4, \dots, \ell
\end{equation}
by clearing fractions.

\begin{claim}\label{cla:variety1}
  $\PL\subseteq V$ and every general-position suite in $V$ belongs to
  PL.
\end{claim}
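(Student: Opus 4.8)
The plan is to verify the two inclusions separately, using the fact (established in \S\ref{sub:xratio}) that fractional linear transformations preserve cross-ratios and that in the single-qubit case general position for a tetrad just means all four points distinct.

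First I would prove $\PL\subseteq V$. Let $\sigma=(p_1,\dots,p_\ell,q_1,\dots,q_\ell)$ be a PL suite, realized by a fractional linear transformation $f$ with $f(p_j)=q_j$ for all $j$. Since $f$ preserves cross-ratios, for each $i=4,\dots,\ell$ we have $\chi(p_1,p_2,p_3,p_i)=\chi(f(p_1),f(p_2),f(p_3),f(p_i))=\chi(q_1,q_2,q_3,q_i)$. When all the points involved are distinct these cross-ratios are genuine (finite, nonzero) values and the equations \eqref{v} hold; after clearing fractions the corresponding polynomial identities hold as well. The only subtlety is that a PL suite need not have its $p$'s or $q$'s distinct beyond the first $\ell$ entries' domain-distinctness — but actually, since $f$ is a bijection of $\P$ and the domain half $(p_1,\dots,p_\ell)$ has distinct entries, the range half $(q_1,\dots,q_\ell)$ also has distinct entries, so in fact \emph{every} PL suite is a general-position suite, and the cross-ratio equations hold literally. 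Clearing fractions is a polynomial consequence, so $\sigma\in V$.

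Second I would prove that a general-position suite $\sigma=(p_1,\dots,p_\ell,q_1,\dots,q_\ell)$ lying in $V$ belongs to PL. Here all $2\ell$ points are distinct, so all the cross-ratios in \eqref{v} are defined and lie outside $\{0,1,\infty\}$ by Lemma~\ref{lem:chi}(2); in particular no denominator vanished when we cleared fractions, so membership in $V$ gives back the genuine equalities $\chi(p_1,p_2,p_3,p_i)=\chi(q_1,q_2,q_3,q_i)$ for $i=4,\dots,\ell$. Now $(p_1,p_2,p_3)$ and $(q_1,q_2,q_3)$ are each triples of distinct points, so by triple transitivity of the fractional linear group there is a (unique) fractional linear $f$ with $f(p_j)=q_j$ for $j=1,2,3$. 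For each $i\ge4$, $f(p_i)$ is the unique point $z$ with $\chi(p_1,p_2,p_3,p_i)=\chi(q_1,q_2,q_3,z)$ (uniqueness because, with $q_1,q_2,q_3$ fixed and distinct, $z\mapsto\chi(q_1,q_2,q_3,z)$ is injective — it is itself a fractional linear function of $z$); since $q_i$ satisfies the same equation, $f(p_i)=q_i$. Hence $f$ realizes $\sigma$ and $\sigma\in\PL$.

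The main obstacle is bookkeeping about degenerate points rather than anything deep: one must be careful that ``clearing fractions'' is reversible precisely on the general-position locus (this is exactly where Lemma~\ref{lem:chi}(2) is used to guarantee the cleared-away denominators are nonzero), and one must confirm that the map $z\mapsto\chi(q_1,q_2,q_3,z)$ is a fractional linear bijection of $\P$ so that $f(p_i)$ is pinned down uniquely. Both points are routine once stated, so the proof is short.
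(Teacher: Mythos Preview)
Your proof is correct and follows essentially the same route as the paper: cross-ratio preservation gives $\PL\subseteq V$, and for the converse you produce $f$ from the first three pairs (the paper cites Lemma~\ref{lem:n+1}, you cite triple transitivity---same thing for $n=2$) and then pin down $f(p_i)=q_i$ via injectivity of $z\mapsto\chi(q_1,q_2,q_3,z)$, exactly as the paper does with its displayed chain of equalities. One small slip: it is not true that ``every PL suite is a general-position suite'' (a fractional linear $f$ can have $p_i=q_j$, e.g.\ a fixed point), but your actual argument only needs that the $p$'s are mutually distinct and the $q$'s are mutually distinct, which you do establish, so the proof stands.
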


\begin{proof}
The inclusion $\PL\subseteq V$ follows from the fact that fractional
linear transformations preserve cross-ratios. Suppose that a
general-position suite $\sigma$ satisfies our polynomial equations.
Then it also satisfies the equations \eqref{v}. By
Lemma~\ref{lem:n+1}, there is a fractional linear transformation $f$
that sends $(a_1,a_2,a_3)$ to $(b_1,b_2,b_3)$. For each
$i=4,\dots,\ell$, we have also
\begin{align*}
  \chi(b_1,b_2,b_3,b_i) &= \chi(a_1,a_2,a_3,a_i)\\
                        &= \chi(fa_1,fa_2,fa_3,fa_i)\\
                        &= \chi(b_1,b_2,b_3,fa_i),
\end{align*}
which implies that $f(a_i)=b_i$.
\end{proof}

Finally, let's consider suites of length 8 with domain
$(0,\infty,1,-1)$.

\begin{claim}
Every PL suite of the form $(0,\infty,1,-1, a,b,c,d)$ satisfies the
equation
\begin{equation}\label{averages}
 \frac{ab+cd}{2} = \frac{a+b}{2} \cdot \frac{c+d}{2}
\end{equation}
\end{claim}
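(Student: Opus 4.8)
The plan is to reduce this claim to the cross-ratio computation already used in Claim~\ref{cla:variety1}. Since the suite $(0,\infty,1,-1,a,b,c,d)$ is PL and we are in the single-qubit case, there is a fractional linear transformation $f$ of the Riemann sphere with $f(0)=a$, $f(\infty)=b$, $f(1)=c$, and $f(-1)=d$; in particular $a,b,c,d$ are distinct, since $f$ is a bijection. Fractional linear transformations preserve the cross-ratio, so
\[
  \chi(a,b,c,d)=\chi(0,\infty,1,-1).
\]
Thus everything comes down to evaluating the right-hand side and then rewriting the left-hand side as an ordinary polynomial identity in $a,b,c,d$.

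First I would compute $\chi(0,\infty,1,-1)$. With the definition $\chi(a,b,c,d)=\frac{a-c}{b-c}\cdot\frac{b-d}{a-d}$ and the convention of dropping the two factors containing $\infty$, the domain tetrad $(0,\infty,1,-1)$ has cross-ratio $\frac{0-1}{0-(-1)}=-1$; that is, it is the classical harmonic configuration. Hence $\chi(a,b,c,d)=-1$, which is the equation $(a-c)(b-d)=-(b-c)(a-d)$. Expanding both sides and transposing terms gives $2ab+2cd=ac+ad+bc+bd=(a+b)(c+d)$, and dividing by $4$ turns this into
\[
  \frac{ab+cd}{2}=\frac{a+b}{2}\cdot\frac{c+d}{2},
\]
which is exactly \eqref{averages}.

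The only point that needs any care is the degenerate case in which one of $a,b,c,d$ equals $\infty$ (at most one can, since they are distinct); then \eqref{averages} has to be read after clearing the offending term, or one simply applies the argument above to a fractional linear image of the suite whose four range points are all finite and passes to the limit. This is a bookkeeping detail rather than a genuine obstacle: there is no hard step in the claim, its entire content being the observation that the domain tetrad $(0,\infty,1,-1)$ is harmonic. For instance, when $b=\infty$ the equation degenerates to $a=\tfrac{c+d}{2}$, which one checks directly, since $f(\infty)=\infty$ forces $f$ to be affine, $f(z)=\alpha z+\beta$, whence $a=\beta$, $c=\alpha+\beta$, $d=-\alpha+\beta$.
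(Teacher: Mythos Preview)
Your proof is correct and follows essentially the same approach as the paper: both arguments rest on the invariance of the cross-ratio under fractional linear transformations and the computation $\chi(0,\infty,1,-1)=-1$, followed by clearing fractions. The only difference is cosmetic: the paper first derives a formula for general $q$ in place of $-1$ and then specializes, while you go directly to the harmonic case; your explicit handling of the case where one of $a,b,c,d$ equals $\infty$ is a small addition that the paper leaves implicit.
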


Equation \eqref{averages} is easy to remember due to the slogan
``average of products equals product of averages.''

\begin{proof}
First we show that every PL suite of the form $(0,\infty,1,q,
a,b,c,d)$ satisfies the equation
  \begin{equation}\label{q}
   (1-q)(ab+cd) = a(c-qd) + b(d-qc).
  \end{equation}
where $q$ is the inverse of $\chi(a,b,c,d)$.

Indeed, the unique fractional linear transformation sending $(a,b,c)$
to $(0,\infty,1)$ is
\[
 z \mapsto \frac{z-a}{z-b}\cdot \frac{c-b}{c-a}
\]
So $(a,b,c,d)$ is the image of $(0,\infty,1,q)$ under a fractional
linear transformation if and only if the exhibited transformation
sends $d$ to $q$, i.e., if and only if
\begin{equation}\label{q2}
 q = \frac{d-a}{d-b} \cdot \frac{c-b}{c-a}
\end{equation}
so that $q$ is the inverse of $\chi(a,b,c,d)$.
Clearing fractions and rearraging terms in \eqref{q2}, we get equation
\eqref{q} which yields equation \eqref{averages} in case $q=-1$.
\end{proof}

\end{document}